\theoremstyle{definition}
\newtheorem{theorem}{Theorem}
\newtheorem{lemma}{Lemma}
\newtheorem{corollary}{Corollary}
\newtheorem{observation}{Observation}
\newcommand{\infF}{\mathcal{F}}
\newcommand{\SA}{\mathit{SA}}
\newcommand{\suf}[1]{\mathit{suf}_{#1}}
\newcommand{\sufp}[1]{\mathit{suf}_{#1}^{+}}
\newcommand{\LF}{\mathit{LF}}
\newcommand{\ED}{\mathsf{ed}}
\newcommand{\MSS}{\mathsf{MS}_{\mathtt{sub}}}
\newcommand{\MSSF}{\mathit{MS}_{\mathtt{sub}}}
\newcommand{\MSD}{\mathsf{MS}_{\mathtt{del}}}
\newcommand{\MSI}{\mathsf{MS}_{\mathtt{ins}}}
\newcommand{\AOS}{\mathsf{AOS}}
\newcommand{\AOSF}{\mathit{AOS}}
\newcommand{\maxsuf}{\mathsf{maxsuf}}
\title{Edit and Alphabet-Ordering Sensitivity of Lex-parse}
\date{}
\author[1]{Yuto~Nakashima}
\author[2,4]{Dominik~K{\"{o}}ppl}
\author[3]{Mitsuru~Funakoshi}
\author[1]{Shunsuke~Inenaga}
\author[4]{Hideo~Bannai}
\affil[1]{Department of Informatics, Kyushu University}
    \affil[ ]{\texttt{\{nakashima.yuto.003, inenaga.shunsuke.380\}@m.kyushu-u.ac.jp}}
\affil[2]{Department of Computer Science and Engineering, University of Yamanashi}
  \affil[ ]{\texttt{dkppl@yamanashi.ac.jp}}
\affil[3]{NTT Communication Science Laboratories}
  \affil[ ]{\texttt{mitsuru.funakoshi@ntt.com}}
\affil[4]{M\&D Data Science Center, Tokyo Medical and Dental University}
  \affil[ ]{\texttt{hdbn.dsc@tmd.ac.jp}}
\begin{document}
\maketitle
\begin{abstract}
    We investigate the compression sensitivity [Akagi et al., 2023] of lex-parse [Navarro et al., 2021]
    for two operations: (1) single character edit and (2) modification of the alphabet ordering,
    and give tight upper and lower bounds for both operations.
    For both lower bounds, we use the family of Fibonacci words.
    For the bounds on edit operations,
    our analysis makes heavy use of properties of the Lyndon factorization of Fibonacci words to characterize the structure of lex-parse.
\end{abstract}
\section{Introduction}
Dictionary compression is a scheme of lossless data compression that is very effective, especially for highly repetitive text collections.
Recently, various studies on dictionary compressors and repetitiveness measures have attracted much attention in the field of stringology
(see~\cite{DBLP:journals/corr/abs-2004-02781,repetitiveness_Navarro21a} for a detailed survey).

The \emph{sensitivity}~\cite{AKAGI2023104999} of a string compressor/repetitiveness measure $c$ is defined as the maximum gap in the sizes of $c$ for a text $T$ and for a single-character edited string $T'$, which can
represent the robustness of the compressor/measure w.r.t. small changes/errors of the input string.
Akagi et al.~\cite{AKAGI2023104999} gave upper and lower bounds on the worst-case multiplicative sensitivity of various compressors and measures including the Lempel--Ziv parse family~\cite{LZ77,LZ78}, the run-length encoded Burrows-Wheeler transform (RLBWT)~\cite{Burrows94ablock-sorting}, and the smallest string attractors~\cite{KempaP18}.

On the other hand, some structures built on strings, including the output of text compressors, can also depend on the order of the alphabet, meaning that
a different alphabet ordering can result in a different structure for the same string.
Optimization problems of these kinds of structures have recently been studied, e.g., for the RLBWT~\cite{BentleyGT20}, the RLBWT based on general orderings~\cite{giancarlo23new}, or the Lyndon factorization~\cite{GibneyT21}.
Due to their hardness, efficient exact algorithms/heuristics for minimization have been considered~\cite{DBLP:conf/dcc/CenzatoGLR23,DBLP:journals/ipl/ClareD19,DBLP:conf/gecco/ClareDMZ19,DBLP:conf/ppsn/MajorCDMGZ20,DBLP:journals/corr/abs-2401-16435}.

This paper is devoted to the analysis of the sensitivity of lex-parse.
Lex-parse~\cite{NavarroOP21} is a greedy left-to-right partitioning of an input text $T$ into phrases,
where each phrase starting at position $i$ is $T[i..i+\max\{1,\ell\})$ and $\ell$ is the longest common prefix between $T[i..n]$ and its lexicographic predecessor $T[i'..n]$ in the set of suffixes of $T$.
Each phrase can be encoded by a pair $(0, T[i])$ if $\ell=0$, or $(\ell, i')$ otherwise.
By using the lex-parse of size $v$ of a string,
we can represent the string with $v$ derivation rules.%
\footnote{We stick to the convention to denote the size of lex-parse by $v$ as done in literature such as~\cite{NavarroOP21,repetitiveness_Navarro21a}.}
Lex-parse was proposed as a new variant in a family of ordered parsings that is considered as a generalization of the Lempel--Ziv parsing and a subset of bidirectional macro schemes~\cite{StorerS78}.
We stress that lex-parse can have much fewer factors than the Lempel--Ziv parsing; for instance the number of Lempel--Ziv factors of the $k$-th Fibonacci word is $k$ while 
we have only four factors for lex-parse regardless of $k$ (assuming that $k$ is large enough)~\cite{NavarroOP21}.
Besides having potential for lossless data compression, it helped to gain new insights into string repetitiveness:
For instance, a direct relation $v \in O(r)$ between $v$ and the size $r$ of the RLBWT, one of the most important dictionary compressors, holds~\cite{NavarroOP21}.
Hence, combinatorial studies on lex-parse can lead us to further understanding in string repetitiveness measures and compressors.

The contribution of this paper is twofold.
We first consider the sensitivity of lex-parse w.r.t.
edit operations, and give tight upper and lower bounds
which are logarithmic in the length of the input text.
Interestingly, lex-parse is the third measure with super-constant bounds out of (about) 20 measures~\cite{AKAGI2023104999}.
Second, we consider a new variant of sensitivity, the alphabet ordering sensitivity (AO-sensitivity) of lex-parse, defined as the maximum gap in the number of phrases of lex-parse between any two alphabet orderings,
and show tight upper and lower bounds.
For both lower bounds, we use the Fibonacci word.
Moreover, we also use properties of the Lyndon factorization~\cite{ChenFL58:_lyndon_factorization_} for the edit sensitivity to characterize the structure of lex-parse.
These insights may be of independent interest.
Properties of the Fibonacci word can contribute to the analysis of algorithm complexity. In fact, there are several results regarding lower bounds based on the Fibonacci word (i.e., \cite{DBLP:journals/algorithmica/CrochemoreR95,DBLP:conf/dlt/GiulianiILRSU23,DBLP:journals/mst/InoueMNIBT22,NavarroOP21}).

\vspace*{0.25cm}

\noindent{\bf Related work.}
Inspired by the results of Lagarde and Perifel~\cite{DBLP:conf/soda/LagardeP18},
Akagi et al.~\cite{AKAGI2023104999} pioneered the systematic study of
compression sensitivity of various measures w.r.t. edit operations.
Giuliani et al.~\cite{DBLP:conf/dlt/GiulianiILRSU23} showed an improved lower bound for the additive sensitivity of the run-length BWT.
They also use the family of Fibonacci words to obtain their lower bound.
Fujimaru et al.~\cite{DBLP:conf/cwords/FujimaruNI23} presented tight upper and lower bounds for the additive and multiplicative sensitivity of
the size of the compact directed acyclic word graph (CDAWG)~\cite{blumer87complete,crochemore97direct},
when edit operations are restricted to the beginning of the text.

\section{Preliminaries}

\subsection*{Strings}
Let $\Sigma$ be an {\em alphabet}.
An element of $\Sigma^*$ is called a {\em string}.
The length of a string $w$ is denoted by $|w|$.
The empty string $\varepsilon$ is the string of length 0.
Let $\Sigma^+$ be the set of non-empty strings,
i.e., $\Sigma^+ = \Sigma^* \setminus \{\varepsilon \}$.
For any strings $x$ and $y$,
let $x \cdot y$ (or sometimes $xy$) denote the concatenation of the two strings.
For a string $w = xyz$, $x$, $y$ and $z$ are called
a \emph{prefix}, \emph{substring}, and \emph{suffix} of $w$, respectively.
They are called a \emph{proper prefix}, a \emph{proper substring}, and a \emph{proper suffix} of $w$
if $x \neq w$, $y \neq w$, and $z \neq w$, respectively.
A proper substring that is both a prefix and a suffix of $w$ is also called a \emph{border} of $w$.
The $i$-th symbol of a string $w$ is denoted by $w[i]$, where $1 \leq i \leq |w|$.
For a string $w$ and two integers $1 \leq i \leq j \leq |w|$,
let $w[i..j]$ denote the substring of $w$ that begins at position $i$ and ends at
position $j$. For convenience, let $w[i..j] = \varepsilon$ when $i > j$.
Also,
let
$w[..i]=w[1..i]$,
$w[i..]=w[i..|w|]$
$w[i..j] = w(i-1..j] = w[i..j+1)$.
For any string $w$, let $w^1 = w$ and let $w^k = ww^{k-1}$ for any integer $k \ge 2$.
A string $w$ is said to be {\em primitive} if $w$ cannot be written as $x^k$ for any $x \in \Sigma^{+}$ and integer $k \geq 2$.
The following property is well known.

\begin{lemma}[\cite{lothaire2005applied}] \label{lem:primitive-square}
    $w$ is primitive iff $w$ occurs exactly twice in $w^2$.
\end{lemma}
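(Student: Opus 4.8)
The plan is to use the elementary observation that $w$ always occurs in $w^2$ at positions $1$ and $|w|+1$, and that any occurrence of $w$ in $w^2$ must start at a position in $\{1,2,\dots,|w|+1\}$ for length reasons (an occurrence starting at $p$ needs $p+|w|-1\le 2|w|$). Hence the statement ``$w$ occurs exactly twice in $w^2$'' is equivalent to ``$w$ has no occurrence in $w^2$ starting at a position $p$ with $2\le p\le|w|$.'' I will prove the two implications of the ``iff'' using this reformulation.

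For ``$w$ not primitive $\Rightarrow$ $w$ occurs more than twice in $w^2$'' (the contrapositive of one direction), write $w=x^k$ with $x\in\Sigma^{+}$ and $k\ge2$. Since $w^2=x^{2k}$ has period $|x|$, the word $w=x^k$ occurs again at position $|x|+1$, and $2\le|x|+1\le|w|$ because $|x|\ge1$ and $k\ge2$. This is a third occurrence, so $w$ does not occur exactly twice.

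For ``$w$ primitive $\Rightarrow$ $w$ occurs exactly twice in $w^2$'', I argue by contradiction: suppose $w$ occurs in $w^2=ww$ at a position $p$ with $2\le p\le|w|$, and set $d=p-1\in\{1,\dots,|w|-1\}$. Factor $w=uv$ with $|u|=d$ (so $u,v\in\Sigma^{+}$); then $ww=uvuv$, and the factor of length $|w|$ starting at position $d+1$ is exactly $vu$, so the assumed occurrence yields $uv=vu$. By the commutation lemma ($xy=yx$ holds iff $x$ and $y$ are powers of a common word; see~\cite{lothaire2005applied}), there is a nonempty $z$ and integers $a,b\ge1$ with $u=z^a$ and $v=z^b$, whence $w=z^{a+b}$ with $a+b\ge2$, contradicting primitivity of $w$.

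The argument is short, so there is no serious obstacle; the only points needing care are the position bookkeeping (confirming every occurrence of $w$ in $w^2$ starts within $[1,|w|+1]$, and that the occurrences exhibited or assumed in the two directions genuinely lie in the internal range $[2,|w|]$) and the choice of whether to invoke the commutation lemma as a cited black box or to reprove it inline by a Fine--Wilf / periodicity argument.
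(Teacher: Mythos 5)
Your proof is correct. The paper does not prove this lemma at all---it is stated as a known result with a citation to~\cite{lothaire2005applied}---so there is no in-paper argument to compare against. What you give is the standard textbook proof: the position bookkeeping (all occurrences start in $[1,|w|+1]$, with $1$ and $|w|+1$ always present), the extra occurrence at position $|x|+1$ when $w=x^k$, and the reduction of an internal occurrence to the commutation identity $uv=vu$ followed by the Lyndon--Sch\"utzenberger commutation lemma. All steps check out; invoking the commutation lemma as a cited black box is entirely appropriate here.
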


A sequence of $k$ strings $w_{1}, \ldots, w_k$ is called a \emph{parsing} or a \emph{factorization} of a string $w$ if $w = w_{1} \cdots w_k$.
For any binary string $w$, $\overline{w}$ denotes the bitwise reversed string of $w$ (e.g., $\overline{aab} = baa$ over $\{a,b\}$).
Let $\prec$ denote a (strict) total order on an alphabet $\Sigma$.
A total order $\prec$ on the alphabet induces a total order on the set of strings
called the \emph{lexicographic order} w.r.t.~$\prec$, also denoted as $\prec$, i.e.,
for any two strings $x, y \in \Sigma^*$,
$x \prec y \iff x$ is a proper prefix of $y$,
or, there exists $1 \leq i \leq \min\{|x|,|y|\}$
s.t. $x[1..i) = y[1..i)$ and $x[i] \prec y[i]$.

\subsection*{Lex-parse}
The lex-parse of a string $w$ is a greedy left-to-right parsing $w=w_1,\ldots,w_v$, such that each phrase
$w_j$
starting at position $i=1+\sum_{k<j}|w_k|$ is $w[i..i+\max\{1,\ell\})$, where $\ell$ is the length of the longest common prefix between $w[i..]$ and its lexicographic predecessor $w[i'..]$ in the set of suffixes of $w$.
Each phrase can be encoded by a pair $(0, w[i])$ if $\ell=0$ (called an {\em explicit phrase}), or $(\ell, i')$ otherwise.
We will call $w[i'..]$ the {\em previous suffix} of
$w[i..]$.
The size (the number of phrases) of the lex-parse of a string $w$ will be denoted by $v(w)$
Note that a phrase starting at position $i$ is explicit if and only if $w[i..]$ is the lexicographically smallest suffix starting with $w[i]$ and thus there are $|\Sigma|$ of them.
Let $w = \mathtt{ababbaaba}$.
The lex-parse of $w$ is $\mathtt{aba}$, $\mathtt{b}$, $\mathtt{ba}$, $\mathtt{a}$, $\mathtt{b}$, $\mathtt{a}$.
Since the previous suffix of $w[1..]$ is $w[7..]$ and the longest common suffix between them is $\mathtt{aba}$, the first phrase is $\mathtt{aba}$.
In this example, the last two phrases are explicit phrases.

\subsection*{Lyndon factorizations}
A string $w$ is a {\em Lyndon word}~\cite{lyndon54:_burnside} w.r.t. a lexicographic order $\prec$,
if and only if $w \prec w[i..]$ for all $1 < i \leq |w|$, i.e.,
$w$ is lexicographically smaller than all its proper suffixes with respect to $\prec$.
The \emph{Lyndon factorization}~\cite{ChenFL58:_lyndon_factorization_} of a string $w$, denoted $\LF(w)$,
is a unique factorization $\lambda_1^{p_1}, \ldots, \lambda_m^{p_m}$ of $w$,
such that each $\lambda_i \in \Sigma^+$ is a Lyndon word,
$p_{i} \geq 1$, and $\lambda_{i} \succ \lambda_{i+1}$ for all $1 \leq i < m$.
A suffix $x$ of $w$ is said to be \emph{significant} if there exists an integer $i$
such that $x = \lambda_i^{p_i} \cdots \lambda_m^{p_m}$ and
$\lambda_{j+1}^{p_{j+1}} \cdots \lambda_m^{p_m}$ is a prefix of $\lambda_j^{p_j}$ for every $j$ satisfying $i \leq j < m$~(cf.~\cite{DBLP:journals/tcs/INIBT16}).

\subsection*{Fibonacci words}
The $k$-th (finite) Fibonacci word $F_k$ over a binary alphabet $\{a, b\}$ is defined as follows (cf.~\cite{Lothaire83}):
$F_1 = b$, $F_2 = a$, $F_k = F_{k-1} \cdot F_{k-2}$ for any $k \geq 3$.
Let $f_k$ be the length of the $k$-th Fibonacci word (i.e., $f_k = |F_k|$).
We also use the infinite Fibonacci word $\infF = \lim_{k \to \infty} F_{k}$ over an alphabet $\{a, b\}$.
We also use $G_k = F_{k-2}F_{k-1}$ which will be useful for representing relations between even and odd Fibonacci words.

\begin{lemma}[Useful properties on Fibonacci word (cf.~\cite{NavarroOP21})]\label{lem:fib-properties}
    The following properties hold for every Fibonacci word $F_k$.
    \begin{enumerate}
        \item The length of the longest border of $F_k$ is $f_{k-2}$. \label{itFibPropBorder}
        \item $F_k$ has exactly three occurrences of $F_{k-2}$ at position $1$, $f_{k-2}+1$, and $f_{k-1}+1$~(suffix) for every $k \geq 6$.\label{itFibThreeOccs}
        \item $F_k = G_k[1..f_k - 2] \cdot \overline{G_k[f_k - 1..f_k]}$.
        \item $G_k = F_k[1..f_k - 2] \cdot \overline{F_k[f_k - 1..f_k]}$.
        \item $aaa$ and $bb$ do not occur as substrings in $F_k$ for every $k$~\cite[Lemma 36]{NavarroOP21}.\label{itFibAAA}
        \item $F_k$ is primitive for every $k$ (we can easily obtain the fact by Property~\ref{itFibPropBorder}).\label{itFibPrimitivity}
    \end{enumerate}
\end{lemma}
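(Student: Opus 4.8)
The plan is to establish each of the six properties, mostly by induction on $k$ using the defining recurrence $F_k = F_{k-1}F_{k-2}$, together with the well-known fact that consecutive Fibonacci words commute only up to a bounded "swap" at the boundary (captured by items 3 and 4). I would first prove the elementary closure facts (item \ref{itFibAAA} on forbidden factors $aaa$, $bb$, which the paper already attributes to \cite[Lemma 36]{NavarroOP21}, so I would simply cite it), and then item \ref{itFibPropBorder} on the longest border. For item \ref{itFibPropBorder}, the clean way is to recall the classical identity $F_{k-1}F_{k-2} = F_{k-2}F_{k-1}$ up to the last two symbols: more precisely one shows by induction that $F_k$ and $G_k = F_{k-2}F_{k-1}$ agree on their first $f_k - 2$ positions and differ (by a transposition) on the last two. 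Granting this, $F_{k-2}$ is a prefix of $F_k$ (obvious from the recurrence, since $F_{k-1}$ starts with $F_{k-2}$ for $k-1\ge 3$) and also a suffix of $F_k$: indeed $F_k = F_{k-1}F_{k-2}$, so $F_{k-2}$ is literally a suffix; hence $F_{k-2}$ is a border of length $f_{k-2}$. That it is the \emph{longest} border follows because a border of length $\ell$ with $f_{k-2} < \ell < f_k$ would force a nontrivial period $f_k - \ell < f_{k-1}$, and by Fine–Wilf together with the primitivity of $F_{k-1}$ (a prefix of $F_k$ of length $f_{k-1} \ge f_k - \ell + \text{(something)}$) one derives a contradiction; alternatively one can invoke the known fact that the set of periods of $F_k$ is exactly $\{f_{k-1}, f_k\}$ for large $k$ and read off the border length as $f_k - f_{k-1} = f_{k-2}$.

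Next I would prove items 3 and 4 simultaneously by induction on $k$. Since $G_k = F_{k-2}F_{k-1}$ and $F_k = F_{k-1}F_{k-2}$, the two strings have the same length $f_k$, and the claim is that they coincide except that the last two characters are swapped; note items 3 and 4 are equivalent to each other (swapping the last two symbols is an involution), so it suffices to prove one of them. For the base case one checks small $k$ by hand (e.g. $k=5$: $F_5 = abaab$, $G_5 = aab\cdot aba$? — one verifies the explicit agreement on the first $f_k-2$ symbols and the transposition at the end). For the inductive step, write $F_{k+1} = F_k F_{k-1}$ and $G_{k+1} = F_{k-1}F_k$; using $F_k = F_{k-1}F_{k-2}$ and the induction hypothesis relating $F_k$ and $G_k = F_{k-2}F_{k-1}$, one pushes the boundary transposition from inside to the very end by a short symbol-chase. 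This step — propagating the "last two symbols swapped" relation through one application of the recurrence — is the main technical obstacle, though it is a standard and short manipulation; the only care needed is tracking whether the number of swaps stays odd, which it does because exactly one boundary is involved.

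Finally, item \ref{itFibThreeOccs} (exactly three occurrences of $F_{k-2}$ in $F_k$ at positions $1$, $f_{k-2}+1$, $f_{k-1}+1$) I would derive from items \ref{itFibPropBorder} and \ref{itFibPrimitivity}. First, the three listed occurrences are genuine: position $1$ since $F_{k-2}$ is a prefix; position $f_{k-1}+1$ since $F_k = F_{k-1}F_{k-2}$; and position $f_{k-2}+1$ because $F_k = F_{k-1}F_{k-2} = F_{k-2}F_{k-3}F_{k-2}$, so after the prefix $F_{k-2}$ we have $F_{k-3}F_{k-2}$, which starts with $F_{k-3}$, and since $F_{k-2}$ starts with $F_{k-3}$ we must argue a bit more carefully — actually the occurrence at $f_{k-2}+1$ is cleanest to see from $F_k = F_{k-1}F_{k-2}$ and the fact that $F_{k-1} = F_{k-2}F_{k-3}$ has $F_{k-2}$ as a border (item \ref{itFibPropBorder} applied to $F_{k-1}$ gives longest border $f_{k-3}$, not $f_{k-2}$, so I would instead get this occurrence from $F_k = F_{k-2}F_{k-3}F_{k-2}$ directly, reading positions $1$ and $f_{k-2}+f_{k-3}+1 = f_{k-1}+1$, plus the middle one at $f_{k-2}+1$ needs $F_{k-3}F_{k-2}$ to have $F_{k-2}$ as a prefix-shifted-by-$f_{k-3}$, i.e. $F_{k-2}$ occurs in $F_{k-3}F_{k-2}$ at position $f_{k-3}+1$, which is just the suffix occurrence — fine). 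For the upper bound that there are \emph{no other} occurrences: any occurrence of $F_{k-2}$ in $F_k$ at position $p$ yields, by overlapping with the prefix occurrence, that $|p-1|$ or $|p - f_{k-2}|$ is a period-related quantity; combining the primitivity of $F_{k-2}$ (item \ref{itFibPrimitivity}) with Lemma \ref{lem:primitive-square} and the period structure of $F_k$, one shows the only admissible shifts are $0$, $f_{k-2}$, and $f_{k-1}$. The bound $k \ge 6$ is exactly what is needed so that $f_{k-2}$ is large enough for these Fine–Wilf / synchronization arguments to pin down the occurrences uniquely; I would just verify the threshold by checking that for $k=5$ a spurious fourth occurrence could appear and that $k=6$ is safe.
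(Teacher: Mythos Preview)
The paper does not actually prove this lemma; it is stated as a collection of well-known facts about Fibonacci words with a blanket reference to~\cite{NavarroOP21} (and an explicit citation for item~\ref{itFibAAA}), so there is no paper proof to compare against. Your decision to sketch proofs goes beyond what the paper does, and the broad strategy (induction plus items~3/4 as the engine) is the standard one.

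That said, your sketch for item~\ref{itFibThreeOccs} contains a real error. Writing $F_k = F_{k-2}\cdot F_{k-3}\cdot F_{k-2}$, the occurrence of $F_{k-2}$ at position $f_{k-2}+1$ means that $F_{k-2}$ is a \emph{prefix} of $F_{k-3}F_{k-2} = G_{k-1}$; it is \emph{not} the suffix occurrence at position $f_{k-3}+1$ inside $F_{k-3}F_{k-2}$ as you end up claiming. That suffix occurrence sits at global position $f_{k-2}+f_{k-3}+1 = f_{k-1}+1$, which you have already listed, so as written your argument only exhibits two distinct occurrences, not three. The missing step is precisely an application of items~3/4: $G_{k-1}$ agrees with $F_{k-1}$ on its first $f_{k-1}-2$ characters, $F_{k-1}$ has prefix $F_{k-2}$, and $f_{k-2}\le f_{k-1}-2$ holds exactly for $k\ge 6$ (which also explains the threshold in the statement). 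Your ``no further occurrences'' part is also too vague; a clean version observes that a fourth occurrence would have to overlap one of the three listed ones with shift strictly between $0$ and $f_{k-2}$, contradicting primitivity of $F_{k-2}$ via Lemma~\ref{lem:primitive-square}.

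For items~3/4 your outline is fine but can be made painless: from $G_{k+1} = F_{k-1}F_k = F_{k-1}F_{k-1}F_{k-2} = (F_{k-1}F_{k-2})(F_{k-3}F_{k-2}) = F_k\cdot G_{k-1}$ and $F_{k+1} = F_k\cdot F_{k-1}$, the induction hypothesis applied to $F_{k-1}$ versus $G_{k-1}$ immediately gives the claim for $k+1$.
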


The next lemma is also useful for our proof which can be obtained by the above properties.
\begin{lemma} \label{lem:FibFourOcc}
    For any $k\geq 8$,
    $F_{k-4}$ occurs exactly eight times in $F_k$.
\end{lemma}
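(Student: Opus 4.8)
The plan is to iterate Lemma~\ref{lem:fib-properties}(\ref{itFibThreeOccs}): starting from the three occurrences of $F_{k-2}$ inside $F_k$, we count occurrences of $F_{k-4}$ inside each occurrence of $F_{k-2}$, and then argue that no spurious occurrences of $F_{k-4}$ straddle the boundaries between these blocks. Concretely, first I would record that $F_k = F_{k-1} F_{k-2} = F_{k-2} F_{k-3} F_{k-2}$, so the three occurrences of $F_{k-2}$ in $F_k$ are at positions $1$, $f_{k-2}+1$, and $f_{k-1}+1$; since $k \geq 8$, we have $k-2 \geq 6$, so Lemma~\ref{lem:fib-properties}(\ref{itFibThreeOccs}) applies to each $F_{k-2}$ block, giving exactly three occurrences of $F_{k-4}$ within each. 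That is $3 \times 3 = 9$ occurrences counted with multiplicity, but the first and second $F_{k-2}$ blocks overlap (their starting positions differ by $f_{k-2}$, while each has length $f_{k-2}$ — actually the blocks $F_{k-2}$ at position $1$ and $F_{k-2}$ at position $f_{k-2}+1$ are adjacent, not overlapping; the overlap is between position-$1$ block and position-$(f_{k-2}+1)$ block only through the middle $F_{k-3}$). The real double-count is this: the suffix occurrence of $F_{k-4}$ in the first $F_{k-2}$ block (at global position $f_{k-2} - f_{k-4} + 1$) and the prefix occurrence of $F_{k-4}$ in the third $F_{k-2}$ block must be reconciled against occurrences near the $F_{k-3}$ in the middle. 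I would carefully tabulate the nine global positions, using $F_{k-2} = F_{k-3}F_{k-4}$ and $F_{k-3} = F_{k-4}F_{k-5}$ to express everything in terms of $f$'s, and check which coincide.

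More cleanly, I expect the following bookkeeping to work. Write $F_k = F_{k-2} F_{k-3} F_{k-2}$ and further expand the middle: $F_{k-3} = F_{k-4} F_{k-5}$, so $F_k = F_{k-2} \cdot F_{k-4} F_{k-5} \cdot F_{k-2}$. Each of the two outer $F_{k-2}$'s contributes three occurrences of $F_{k-4}$ (by the lemma), and the explicit $F_{k-4}$ in the middle contributes (at least) one more, for a count of $3+1+3 = 7$ honest occurrences — but we must still (a) verify these seven are pairwise distinct as global positions, (b) check whether the prefix of the right $F_{k-2}$ block already "is" the middle $F_{k-4}$ shifted, and (c) hunt for boundary-straddling occurrences of $F_{k-4}$ that lie in none of these blocks, which would push the count up to eight. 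Given that the target is exactly eight, I anticipate that exactly one extra occurrence arises at a boundary: most likely an occurrence of $F_{k-4}$ that begins inside the first $F_{k-2}$ and ends inside the middle $F_{k-3}$ (or analogously straddling the $F_{k-3}$–$F_{k-2}$ junction), witnessed by the fact that $F_{k-2}$ ends with $F_{k-4}$'s border structure. Pinning down this single straddling occurrence — and proving there is exactly one, not zero and not two — is the crux.

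For the uniqueness/no-more argument I would use primitivity and the border structure: by Lemma~\ref{lem:fib-properties}(\ref{itFibPrimitivity}) each $F_j$ is primitive, and by (\ref{itFibPropBorder}) the longest border of $F_{k-4}$ has length $f_{k-6}$, which controls how occurrences of $F_{k-4}$ can overlap — two occurrences of $F_{k-4}$ at distance $d < f_{k-4}$ force $F_{k-4}$ to have a period $d$, hence a border of length $f_{k-4}-d$, forcing $d \geq f_{k-4} - f_{k-6} = f_{k-5}$. This spacing constraint, combined with $|F_k| = f_k = f_{k-2} + f_{k-3} + f_{k-2}$ and the fact that every occurrence of $F_{k-4}$ in $F_k$ extends (by the Fibonacci recurrence and the three-occurrence lemma applied at level $k-2$) to an occurrence of $F_{k-3}$ or is "trapped" near a block boundary, should let me enumerate all possible starting positions and confirm the total is eight. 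The main obstacle is the boundary analysis in the previous paragraph: cleanly proving that the junctions between the $F_{k-2}$, $F_{k-4}$, $F_{k-5}$, $F_{k-2}$ factors contribute exactly one additional occurrence of $F_{k-4}$ beyond those counted inside the blocks. I would handle this by writing out $F_8, F_9, F_{10}$ explicitly as a sanity check on the count and on which boundary produces the extra occurrence, then lift the pattern to general $k$ via the recurrences $F_{k}=F_{k-1}F_{k-2}$ and $F_k = F_{k-2}F_{k-3}F_{k-2}$ together with the overlap/period argument above.
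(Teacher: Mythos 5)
Your overall strategy---iterate Lemma~\ref{lem:fib-properties}(\ref{itFibThreeOccs}) to count $3\times 3=9$ candidate occurrences, identify coincidences, and kill boundary-straddling occurrences via primitivity---is exactly the paper's approach, but your execution has two concrete gaps. First, you never pin down which of the nine counted occurrences coincide, and your guesses point at the wrong place. The coincidence is between the \emph{suffix} occurrence of $F_{k-4}$ in the \emph{second} $F_{k-2}$ block and the \emph{prefix} occurrence in the \emph{third}: the second block occupies $[f_{k-2}+1,\,2f_{k-2}]$ and the third occupies $[f_{k-1}+1,\,f_k]$, and these overlap in exactly $2f_{k-2}-f_{k-1}=f_{k-4}$ positions, so those two occurrences are the same and the count is $9-1=8$ distinct positions. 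Your alternative $3+1+3=7$ bookkeeping (with the decomposition $F_{k-2}\cdot F_{k-4}\cdot F_{k-5}\cdot F_{k-2}$) is salvageable but you misplace the missing eighth occurrence: it is not at the first $F_{k-2}$/$F_{k-3}$ junction, but at global position $f_{k-2}+f_{k-4}+1$, i.e., the middle occurrence of $F_{k-4}$ inside the second $F_{k-2}$ block, which in your decomposition starts at the beginning of the $F_{k-5}$ factor and runs into the last $F_{k-2}$. As written, your plan would have you hunting for an occurrence that does not exist and missing the one that does.

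Second, the "no further occurrences" half is left as a sketch. The clean closing argument, which your period/border discussion circles but does not land: since the three $F_{k-2}$ blocks cover $[1,f_k]$ and the second and third overlap in $F_{k-4}$, any occurrence of $F_{k-4}$ not lying inside a single block (and hence not already one of the eight, by Lemma~\ref{lem:fib-properties}(\ref{itFibThreeOccs}) applied at level $k-2$) must cross the boundary between the first and second blocks. Because $F_{k-4}$ is a border of $F_{k-2}$, the window of length $2f_{k-4}$ centered on that boundary reads $F_{k-4}^2$, so such an occurrence would be an internal (neither prefix nor suffix) occurrence of $F_{k-4}$ in $F_{k-4}^2$, contradicting Lemma~\ref{lem:primitive-square}. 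Your proposed detour through the longest-border length $f_{k-6}$ and a spacing bound $d\ge f_{k-5}$ is not needed and, by itself, does not complete the enumeration you promise.
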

\begin{proof}
    By property~\ref{itFibThreeOccs} of Lemma~\ref{lem:fib-properties},
    $F_k$ has at least eight occurrences of $F_{k-4}$
    (since the suffix occurrence of $F_{k-4}$ in the second $F_{k-2}$ and the prefix occurrence of $F_{k-4}$ in the third $F_{k-2}$ are the same position).
    Suppose to the contrary that there exists an occurrence of $F_{k-4}$ in $F_k$ that is different from the eight occurrences.
    By property~\ref{itFibThreeOccs} of Lemma~\ref{lem:fib-properties},
    the occurrence crosses the boundary of the first and the second $F_{k-2}$.
    Since $F_{k-4}$ is both a prefix and a suffix of $F_{k-2}$,
    the occurrence implies that $F_{k-4}^2$ has $F_{k-4}$ as a substring that is neither a prefix nor a suffix.
    This fact contradicts Lemma~\ref{lem:primitive-square}.
\end{proof}

\subsection*{Sensitivity of lex-parse}
In this paper, we consider two compression sensitivity variants of lex-parse.
The first variant is the sensitivity by (single character) edit operations (cf.~\cite{AKAGI2023104999}).
For any two strings $w_1$ and $w_2$, let $\ED(w_1, w_2)$ denote the edit distance between $w_1$ and $w_2$.
The worst-case multiplicative sensitivity of lex-parse w.r.t.\ a substitution is defined as follows:
\[
    \MSS(v, n) = \max_{w_1 \in \Sigma^n} \MSSF(v, w_1),
\]
where $\MSSF(v, w_1) = \max \{v(w_2)/v(w_1) \mid w_2 \in \Sigma^n, \ED(w_1, w_2) = 1 \}$.
$\MSI(v,n)$ (resp. $\MSD(v,n)$) is defined by replacing the condition $w_2 \in \Sigma^n$
with $w_2 \in \Sigma^{n+1}$ (resp. $w_2 \in \Sigma^{n-1}$).
The second variant is the sensitivity by alphabet orderings.
For any string $w$ and a lexicographic order $\prec$,
let $v(w,\prec)$ be the size of the lex-parse of $w$ under $\prec$.
We define the alphabet-ordering sensitivity of lex-parse as follows:
\[
    \AOS(v, n) = \max_{w \in \Sigma^n} \AOSF(v, w),
\]
where $\AOSF(v, w) = \max_{\prec_1,\prec_2 \in A} \{v(w,\prec_2)/v(w,\prec_1) \}$.

\section{Upper bounds}
We first give upper bounds for both operations.
We can obtain the following result by using known connections regarding the bidirectional macro scheme and lex-parse.

\begin{theorem} \label{thm:sensitivity_UB}
    $\MSS(v, n), \MSI(v, n), \MSD(v, n), \AOS(v, n) \in O(\log n)$.
\end{theorem}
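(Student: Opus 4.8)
The plan is to reduce all four quantities to a single known upper bound relating $v$ to the size $b$ of the smallest bidirectional macro scheme, together with the monotonicity-type behavior of $b$ under small perturbations. Concretely, I would use two ingredients stated or implicit in the literature cited in the excerpt: first, the relation $v \in O(b \log n)$ (equivalently $v = O(z \log n)$ for the Lempel--Ziv size $z$), which holds because lex-parse is a particular ordered parsing bounded in terms of bidirectional macro schemes; and second, the fact that $b$ itself changes only by an additive constant under a single-character edit and is \emph{invariant} under any relabeling of the alphabet (since a bidirectional macro scheme is a purely combinatorial object on positions, not on symbol identities, apart from the first occurrence of each symbol).

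First I would fix a string $w_1$ of length $n$ and note $v(w_1) \ge b(w_1)$, and more importantly $v(w_1) = \Omega(b(w_1))$ is not what we need; what we need is the upper direction. Since lex-parse is a valid bidirectional-macro-scheme-like parsing whose size is known to satisfy $v(w) = O(b(w)\log(n/b(w))) = O(b(w)\log n)$ for every string $w$ of length $n$, I would apply this to $w_2$: $v(w_2) = O(b(w_2)\log n)$. Then, for the edit cases, I would invoke $b(w_2) \le b(w_1) + c$ for an absolute constant $c$ (a single edit can be simulated by adding $O(1)$ phrases to an optimal scheme for $w_1$), and combine with the trivial lower bound $v(w_1) \ge b(w_1) \ge 1$ plus $v(w_1)\ge |\Sigma|$; more carefully, since $v(w_1) \ge b(w_1)$ one gets $v(w_2)/v(w_1) = O\big(b(w_2)\log n / b(w_1)\big) = O\big((1 + c/b(w_1))\log n\big) = O(\log n)$. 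For the AO case it is even cleaner: $b(w,\prec_1) = b(w,\prec_2)$ because the bidirectional macro scheme does not depend on $\prec$, so $v(w,\prec_2) = O(b(w)\log n)$ while $v(w,\prec_1) \ge b(w)$, giving the ratio $O(\log n)$.

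The main obstacle I anticipate is pinning down the exact form of the $v$-versus-$b$ inequality and making sure the hidden constants do not secretly depend on $n$ or on the ratio $n/b$ in a way that breaks the argument when $b$ is constant (as it is for Fibonacci words, where $v = O(1)$ while $b = O(1)$ and $\log n$ is the slack). The inequality I want is $v(w) = O(b(w)\log n)$ with an absolute constant, valid for all $w$; this should follow from the chain $b \le z \le v$ in one direction and a counting argument bounding $v$ by $z\log n$ (or directly $b\log n$) in the other — the latter being analogous to the classical bound on the number of LZ-like phrases produced by a greedy parse when distinct phrases are few. I would cite \cite{NavarroOP21} and \cite{repetitiveness_Navarro21a} for the precise statement. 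A secondary, minor obstacle is the additive robustness $b(w_2) \le b(w_1) + O(1)$ under one edit and under symbol relabeling $b$ is unchanged; both are folklore but I would spell out the one-line reason (patch the optimal scheme of $w_1$ around the edited position with a constant number of new source/target blocks, and for relabeling observe the scheme is defined on position indices).

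With these pieces in place, the proof is three short paragraphs: (i) state and reference $v(w) = O(b(w)\log n)$; (ii) for $\MSS,\MSI,\MSD$ use $b(w_2) \le b(w_1)+O(1)$ and $v(w_1)\ge b(w_1)$ to conclude; (iii) for $\AOS$ use $b$'s independence of $\prec$ and again $v(w,\prec_1)\ge b(w)$ to conclude. I do not expect any genuinely hard step — the content of the theorem is entirely in quoting the right black-box inequalities correctly.
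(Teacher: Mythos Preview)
Your proposal is correct and follows essentially the same route as the paper: bound $v(w_2)$ by $O(b(w_2)\log n)$ via \cite{NavarroOP21}, use robustness of $b$ under a single edit (the paper cites the multiplicative form $b(w_2)\le 2b(w_1)$ from \cite{AKAGI2023104999} rather than your additive $b(w_2)\le b(w_1)+O(1)$, but either suffices), and divide by $v(w_1)\ge b(w_1)$; for $\AOS$ use that $b$ is independent of the alphabet order. The only cosmetic difference is that the paper keeps the sharper form $O(\log(n/b))$ throughout.
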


\begin{proof}
    For any string $w$, let $b(w)$ be the size of the smallest bidirectional macro scheme~\cite{StorerS78}.
    Then, $v(w) \geq b(w)$ holds~\cite[Lemma~25]{NavarroOP21}.
    For any two strings $w_1$ and $w_2$ with $\ED(w_1, w_2) = 1$,
    $v(w_2) \in O(b(w_2) \log (n/b(w_2)))$~\cite[Theorem~26]{NavarroOP21} and $b(w_2) \leq 2b(w_1)$~\cite[Theorem~11]{AKAGI2023104999} hold.
    Hence, for $|w_2| \in \Theta(n)$,
    \[
        \frac{v(w_2)}{v(w_1)} \leq \frac{v(w_2)}{b(w_1)}
        \in O \left( \frac{b(w_2) \log (n/b(w_2))}{b(w_1)} \right)
        \subseteq O \left( \frac{b(w_1) \log (n/b(w_1))}{b(w_1)} \right)
        = O (\log (n/b(w_1))).
    \]
    For any alphabet order $\prec$ on $\Sigma$,
    $v(w,\prec) \in O(b(w) \log (n/b(w)))$ and $v(w,\prec) \geq b(w)$ holds.
    Hence, for any two alphabet orders $\prec_1$ and $\prec_2$,
    \[
        \frac{v(w,\prec_2)}{v(w,\prec_1)} \leq \frac{v(w,\prec_2)}{b(w)}
        \in O \left( \frac{b(w) \log (n/b(w))}{b(w)} \right) = O (\log (n/b(w))).
    \]
    These facts imply this theorem.
\end{proof} 
\section{Lower bounds for edit operations}
In this section, we give tight lower bounds for edit operations with the family of Fibonacci words.
\begin{theorem}
  $\MSS(v, n), \MSI(v, n), \MSD(v, n) \in \Omega(\log n)$.
\end{theorem}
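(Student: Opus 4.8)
The plan is to exhibit, for infinitely many lengths $n$, a string $w_1 \in \Sigma^n$ with $v(w_1) = O(1)$ together with a single-character edit producing $w_2$ with $v(w_2) = \Omega(\log n)$, and to do this in parallel for substitution, insertion, and deletion. The natural candidate for $w_1$ is the Fibonacci word $F_k$ (so $n = f_k$), since by the discussion in the introduction $v(F_k) = 4$ for all sufficiently large $k$ (this is the Navarro et al.\ fact cited in the excerpt). The edited string $w_2$ should be a Fibonacci word with one character altered at a carefully chosen position — most plausibly near one end, e.g.\ flipping the last character, which by Lemma~\ref{lem:fib-properties}(\ref{itFibPropBorder}) and the $G_k$/$F_k$ relations (properties 3 and 4) turns $F_k$ into something closely related to $G_k = F_{k-2}F_{k-1}$ or to $F_{k-1}F_{k-2}$ reversed at the tail. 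The point is that this small perturbation destroys the extreme self-similarity that makes $v(F_k)$ constant, forcing the lex-parse to spend a new phrase at each scale $F_{k-2}, F_{k-4}, F_{k-6}, \dots$, giving $\Theta(\log f_k)$ phrases.

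The key steps, in order, are as follows. First, recall/restate that $v(F_k) = O(1)$ for large $k$, so that the denominators $v(w_1)$ are bounded by a constant and it suffices to prove $v(w_2) = \Omega(\log n)$. Second, fix the edit: I would flip $F_k[f_k]$ (the last symbol) — by Lemma~\ref{lem:fib-properties}(\ref{itFibAAA}) the last two symbols of $F_k$ are $\mathtt{ab}$ or $\mathtt{ba}$ depending on parity, and property~3 says $F_k$ and $G_k$ differ exactly in the reversal of the last two symbols, so a single substitution at the end relates $F_k$ to a string one Hamming step from $G_k$; similarly an insertion or deletion at the boundary can be chosen to land on a comparably structured string. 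Third, analyze the suffix order of $w_2$: I would use Lemma~\ref{lem:fib-properties}(\ref{itFibThreeOccs}) and Lemma~\ref{lem:FibFourOcc} (the counts of occurrences of $F_{k-2}, F_{k-4}$) to locate, for each even $j$, the lexicographic predecessor of the suffix of $w_2$ starting just after the $j$-th ``scale boundary''. Because the edit sits at the far end, the longest common prefix between such a suffix and its predecessor is exactly one scale shorter, so the greedy lex-parse cannot swallow a whole $F_{k-2}$-block in one phrase; instead it produces a phrase of length $\approx f_{k-2}$, then recurses into a residual copy of (an edited) $F_{k-2}$, and so on, yielding one fresh phrase per scale and hence $\Omega(\log f_k) = \Omega(\log n)$ phrases. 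Fourth, assemble the three bounds $\MSS, \MSI, \MSD \in \Omega(\log n)$ from the three variants of the edit, and note $n = f_k$ ranges over an infinite set, with the standard padding argument (or monotonicity of the quantities in $n$) covering all sufficiently large $n$.

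The main obstacle I expect is the third step: rigorously tracking the lexicographic predecessor of each relevant suffix of the edited Fibonacci word and proving the longest common prefix is short enough to force a new phrase at every scale. Fibonacci words are notoriously delicate in their suffix/rotation structure, and a naive argument can easily lose a factor or, worse, find that the lex-parse still absorbs everything into $O(1)$ phrases because the predecessor happens to share a very long prefix. This is presumably exactly where the paper's announced ``heavy use of properties of the Lyndon factorization of Fibonacci words'' enters: the Lyndon factorization of $F_k$ (which, for suitable parity, is essentially $F_k$ itself or a short product of rotated Fibonacci-like Lyndon words) gives a clean handle on which suffix is the lexicographic predecessor, since Lyndon structure controls suffix comparisons. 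I would therefore route the predecessor computation through the significant-suffix machinery introduced in the Preliminaries, using $\LF(F_k)$ and its behavior under the chosen edit to pin down the $\ell$ values phrase by phrase, rather than arguing about lexicographic order of suffixes directly.
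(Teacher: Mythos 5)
Your overall strategy coincides with the paper's: take $F_{2k}$ (so $v(w_1)=4$ and the denominator is constant), perform one edit near the right end, and show the edited string needs $\Theta(k)=\Theta(\log n)$ phrases by tracking lexicographic predecessors via the Lyndon factorization. However, the proposal stops exactly where the proof has to start. The entire content of the theorem is your ``third step'', which you correctly identify as the main obstacle but do not carry out: you assert that ``the longest common prefix between such a suffix and its predecessor is exactly one scale shorter'' and that one gets ``one fresh phrase per scale'', but you give no argument for either claim, and the true phrase structure is more delicate than this. In the paper the parse of the edited string is not one phrase of length $\approx f_{2k-2j}$ per scale: the first phrase has length $f_{2k-1}-1$, and each subsequent scale contributes a \emph{pair} of phrases of lengths $f_{2i}-1$ and $f_{2i-1}+1$ whose previous suffixes point in two different directions (one to a suffix ending in $\maxsuf$, Lemma~\ref{lem:left-ref}; one along the chain of significant suffixes $\phi^{i}(a)\cdots\phi^{0}(a)\cdot aa$, Corollary~\ref{coro:right-ref}). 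Establishing this requires computing $\LF(F_{2k}'')$ explicitly via the morphism $\phi$ (Lemmas~\ref{lem:LF_morphism}--\ref{lem:LF_shrinked-Fib}), ordering the significant suffixes (Lemmas~\ref{lem:significant-suffix}--\ref{lem:significant-suf-order}), and identifying $\maxsuf$ (Lemma~\ref{lem:maxsuf}); none of this is present, even in outline, beyond the (correct) guess that Lyndon structure is the right tool.

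A second, concrete issue is the choice of edit. You propose flipping the last symbol $F_k[f_k]$; the paper instead substitutes the \emph{rightmost $b$}, i.e., position $f_{2k}-1$ of $F_{2k}$ (which ends in $ba$), producing $T_{2k}=F_{2k}''\cdot aa$. This choice is not cosmetic: it creates the unique factor $aaa$ at the very end, which pins down the three smallest suffixes of $T_{2k}$, makes $ba^3$ the smallest suffix starting with $b$ (hence the three short final phrases), and acts as a delimiter in several predecessor arguments (e.g., for the first phrase and in the proof of Lemma~\ref{lem:maxsuf}). Flipping the final $a$ to $b$ instead yields a string ending in $abb$ whose suffix order near the edit is different, so your variant cannot simply inherit these lemmas and would need its own verification. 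Finally, note that ``$v(F_k)=4$ for all large $k$'' holds only for the right parity/ordering combination (Theorem~\ref{thmAOLowerBound}); for the opposite parity one has $v(F_k,\prec)=\lceil k/2\rceil+1$, which is why the paper fixes even indices. The proposal is therefore a reasonable plan in the same spirit as the paper, but it is not yet a proof.
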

We devote this section to show $\MSS(v, n) \in \Omega(\log n)$
since a similar argument can obtain the others.
We obtain the claimed lower bound by combining
the result of \cite{NavarroOP21} (also in Lemma~\ref{lemFourFactorsEvenN} in Section~\ref{secAOsensitivity})
stating that $v(F_{2k})$ is constant,
and the following Theorem~\ref{thm:comp-sensitivity_LB}.

\begin{theorem} \label{thm:comp-sensitivity_LB}
    For every integer $k \geq 6$,
    there exists a string $w$ of length $f_{2k}$ such that $\ED(F_{2k}, w) = 1$ and $v(w) = 2k-2$.
\end{theorem}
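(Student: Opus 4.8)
The plan is to pin down one explicit single‑character substitution turning $F_{2k}$ into a word $w$ whose lex‑parse can be described completely, and then count its phrases. Since $f_{2k}=\Theta(\phi^{2k})$ with $\phi$ the golden ratio, the target $v(w)=2k-2$ is $\Theta(\log f_{2k})$; combined with the fact that $v(F_{2k})$ is a constant (the bound quoted as Lemma~\ref{lemFourFactorsEvenN}), Theorem~\ref{thm:comp-sensitivity_LB} immediately yields $\MSS(v,n)\in\Omega(\log n)$. Throughout I would use the self‑similar identity $F_{2k}=F_{2k-2}\,F_{2k-3}\,F_{2k-2}$ (from $F_{2k}=F_{2k-1}F_{2k-2}$ and $F_{2k-1}=F_{2k-2}F_{2k-3}$), which, iterated on the trailing block, ``unrolls'' $F_{2k}$ into $F_{2k-2}\,F_{2k-3}\,F_{2k-4}\,F_{2k-5}\cdots$, a factorization into $\Theta(k)$ Fibonacci words of geometrically decreasing length. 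The intuition is that the lex‑parse of $F_{2k}$ refuses to follow this unrolling (it has only four phrases), but a single well‑placed edit forces it to.

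Concretely, a natural candidate for the edit---and the one I would try first---is to overwrite a symbol near the end of $F_{2k}$ so as to create a factor that occurs in no Fibonacci word, e.g.\ turning the suffix $\cdots aba$ of $F_{2k}$ into $\cdots aaa$, or $\cdots ba$ into $\cdots bb$, exploiting Property~\ref{itFibAAA} of Lemma~\ref{lem:fib-properties}. Such an edit keeps $w[i..\,]$ equal to $F_{2k}[i..\,]$ except possibly in the last symbol, so the sorted order of the suffixes of $w$ agrees with that of $F_{2k}$ \emph{except} on the ``corridor'' of suffixes whose mutual comparison runs to the very end; those are exactly the suffixes pairwise related through a border of $F_{2k}$, and by Property~\ref{itFibPropBorder} of Lemma~\ref{lem:fib-properties} (applied recursively) the borders of $F_{2k}$ form the decreasing chain $F_{2k-2}, F_{2k-4}, \ldots, F_2$. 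Thus the single edit perturbs a whole chain of $\Theta(k)$ suffixes, and it is precisely there that the ``longest common prefix with the lexicographic predecessor'' values collapse from $\Theta(f_{2k})$ to a geometric sequence of the form $f_{2k-2}, f_{2k-3}, f_{2k-4}, \ldots$ that will produce $2k-2$ phrases.

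The main technical step is then a recursion. I would prove that the greedy parse of $w$ emits a bounded number of phrases whose concatenation peels off the leading block $F_{2k-2}$ (indeed the first phrase of the parse of $F_{2k}$ is already $F_{2k-2}$), after which the remaining suffix to be parsed is, up to the edit, a shorter member of the same family; this gives a recurrence $v(w_k)=v(w_{k-1})+O(1)$, which unrolls to $2k-2$, the ``$-2$'' being absorbed by the two residual explicit phrases (one per alphabet symbol) together with a short boundary phrase. To make the predecessor bookkeeping rigorous I would (i) determine the Lyndon factorization of $F_{2k}$ and the induced chain of significant suffixes---a one‑symbol edit changes the Lyndon factorization only locally, and the significant‑suffix chain pins down the lexicographic neighbours of exactly the perturbed suffixes---and (ii) use the exact occurrence counts of $F_{2k-2}$ and $F_{2k-4}$ in $F_{2k}$ (Property~\ref{itFibThreeOccs} of Lemma~\ref{lem:fib-properties} and Lemma~\ref{lem:FibFourOcc}), together with primitivity (Lemma~\ref{lem:primitive-square} and Property~\ref{itFibPrimitivity}), to rule out spurious matches that would shorten or lengthen a phrase unexpectedly. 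The hypothesis $k\ge 6$ is what makes these Fibonacci‑structure lemmas applicable.

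The hard part, I expect, is step (ii) of the bookkeeping: converting the qualitative statement ``the edit reshuffles a chain of $\Theta(k)$ suffixes'' into a $k$‑uniform, exact description of each perturbed suffix's predecessor and of the precise length of their common prefix, so that the phrase count lands on $2k-2$ on the nose rather than merely $\Theta(k)$ with an unpinned constant. Carrying this out seems to require an induction on the Fibonacci index that distinguishes the ``even'' and ``odd'' stages of the unrolling---this is where the identities relating $F_k$ and $G_k=F_{k-2}F_{k-1}$ in Lemma~\ref{lem:fib-properties} earn their keep---and getting the base cases and the constant offset right is the fiddly remainder. A similar argument, editing near the end of $F_{2k}$ but removing or inserting the disturbing symbol instead of substituting it, should then give the corresponding $\Omega(\log n)$ lower bounds for $\MSD$ and $\MSI$.
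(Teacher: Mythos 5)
Your choice of edit (overwriting the final $b$ of $F_{2k}$ so that the word ends in $aaa$) and your toolkit (Lyndon factorization and significant suffixes, exact occurrence counts of $F_{2k-2}$ and $F_{2k-4}$, primitivity) are exactly those of the paper, and your predicted phrase profile --- one long first phrase, a geometric cascade, a constant number of explicit/boundary phrases at the end --- has the correct shape. But the proposal stops where the proof begins: the ``$k$-uniform, exact description of each perturbed suffix's predecessor'' that you yourself flag as the hard part \emph{is} the content of the theorem, and you supply no mechanism for it. The paper's mechanism is the morphism $\phi(a)=aab$, $\phi(b)=ab$, which generates the Lyndon factors of the infinite Fibonacci word, commutes with Lyndon factorization, and yields truncated products $\phi^{i}(a)\cdots\phi^{0}(a)$ that are precisely the lexicographically smallest suffixes of $F''_{2k}$ (Lemmas~\ref{lem:LF_shrinked-Fib}--\ref{lem:significant-suf-order}). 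This pins down the predecessors of the odd-indexed cascade phrases (Corollary~\ref{coro:right-ref}); the even-indexed ones need a separate argument showing that they reference suffixes ending in the maximal suffix of the edited word (Lemmas~\ref{lem:maxsuf} and~\ref{lem:left-ref}). Without an explicit device of this kind the count cannot land on $2k-2$ ``on the nose,'' as you concede.

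Two of your structural guesses would also need repair before the plan could be executed. First, the reshuffled suffixes are not ``exactly the suffixes pairwise related through a border of $F_{2k}$'': the borders start at positions $f_{2k}-f_{2j}+1$ (even-indexed Fibonacci offsets), whereas the relevant chain is the significant-suffix chain of the Lyndon factorization, starting at positions $f_{2k}-f_{2j+1}$ (odd-indexed offsets). Second, the recurrence $v(w_k)=v(w_{k-1})+O(1)$ is not available as stated: after the first phrase the remaining suffix is indeed $b\cdot T_{2k-2}$, but lex-parse is a global parsing whose references may point anywhere in the whole string, so the phrases covering this suffix inside $T_{2k}$ are not the lex-parse of $b\cdot T_{2k-2}$ as a standalone word --- in fact the even-indexed cascade phrases reference occurrences far to the left, adjacent to $\maxsuf$. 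The induction must therefore be run over families of suffixes of the single fixed string $T_{2k}$, as the paper does, rather than over the shorter strings $T_{2i}$ themselves.
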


For any string $w$, let $w' = w[1..|w|-1], w'' = w[1..|w|-2]$.
Also, let $T_{2k}$ denote the string obtained from $F_{2k}$ by substituting the rightmost $b$ of $F_{2k}$ with $a$, i.e., $T_{2k} = F''_{2k} \cdot aa$.
We show that the lex-parse of $T_{2k}$ has $2k-2$ phrases.
More specifically, we show that the lengths of the phrases are
\[
    f_{2k-1}-1, (f_{2k-4}-1, f_{2k-5}+1, ..., f_4 - 1, f_3 + 1), 1, 2, 1.
\]
There are three types of phrases as follows: (1)~first phrase, (2)~inductive phrases, (3)~last three short phrases.
Phrases of Type (1) or Type (3) can be obtained by simple properties on the Fibonacci word.
However, phrases of Type (2) need a more complicated discussion.
We use the Lyndon factorizations of the strings to characterize the inductive phrases.
Intuitively, we show that every suffix of $T_{2k}$ that has an odd inductive phrase as a prefix
can be written as the concatenation of the odd inductive phrase and a significant suffix.

\subsection*{(1)~First phrase and (3)~Short phrases}
We start from Type~(1).
By the third property of Lemma~\ref{lem:fib-properties} and the edit operation,
$T_{2k}[1..]$ and $T_{2k}[f_{2k-2}+1..]$ have a (longest) common prefix $x = F'_{2k-1}$ of length $f_{2k-1}-1$
and $T_{2k}[1..] \succ T_{2k}[f_{2k-2}+1..]$ holds.
$T_{2k}[f_{2k-2}+1..]$ can be written as $T_{2k}[f_{2k-2}+1..] = x \cdot a$.
We show that the previous suffix of $T_{2k}[1..]$ is $T_{2k}[f_{2k-2}+1..]$.
Suppose to the contrary that there exists a suffix $y$ of $F_{2k}$ that satisfies
$T_{2k}[1..] \succ y\succ T_{2k}[f_{2k-2}+1..]$.
Since both $T_{2k}[f_{2k-2}+1..]$ and $T_{2k}[1..]$ have $x$ as a prefix,
$y$ can be written as $y = x \cdot a \cdot z_1$ or $y = x \cdot b \cdot z_2$ for some strings $z_1, z_2$.
Since $F_{2k-1}$ ends with $aab$ and thus
$aa$ is a suffix of $x$,
the existence of $y = x \cdot a \cdot z_1$ contradicts the fact that $a^3$ only occurs at the edit position.
On the other hand,
the existence of $y = x \cdot b \cdot z_2$ contradicts the fact that $x \cdot b = F_{2k-1}$ only occurs as a prefix of $T_{2k}$,
since otherwise it would violate the second property of Lemma~\ref{lem:fib-properties}.
Thus $T_{2k}[f_{2k-2}+1..]$ is the previous suffix of $T_{2k}[1..]$,
and the length of the first phrase is $|x| = f_{2k-1}-1$.

Next, we consider Type (3) phrases.
For any string $w$, let $\SA_{w}$ denote the {\em suffix array} of $w$,
where the $i$-th entry $\SA_{w}[i]$ stores the index $j$ of the lexicographically $i$-th suffix $w[j..]$ of~$w$.
Since $T_{2k}$ ends with $baaa$ and no Fibonacci word has $aaa$ as a substring, we conclude that
$\SA_{T_{2k}}[1] = f_{2k}$, $\SA_{T_{2k}}[2] = f_{2k}-1$, and $\SA_{T_{2k}}[3] = f_{2k}-2$.
In particular, $ba^3$ is the smallest suffix of $T_{2k}$ that begins with $b$.
These facts imply that $T_{2k}[f_{2k}] = a$ and $T_{2k}[f_{2k}-3] = b$ are the explicit phrases,
and $T_{2k}[f_{2k}-2..f_{2k}-1] = a^2$ between the explicit phrases is a phrase.
Thus the last three phrases are $b, a^2, a$.

\subsection*{(2)~Inductive phrases}
In the rest of this section, we explain Type~(2) phrases.
Firstly, we study the Lyndon factorization $\LF(\infF) = \ell_1, \ell_2, \ldots$ of the (infinite) Fibonacci word.
To characterize these Lyndon factors $\ell_i$, we use the string morphism $\phi$ with
$\phi(a) = aab, \phi(b) = ab$ as defined in~\cite[Proposition~3.2]{MELANCON2000137}.

\begin{lemma}[\cite{MELANCON2000137}] \label{lem:infinite_LF}
    $\ell_1 = ab, \ell_{k+1} = \phi(\ell_k)$, and $|\ell_k| = f_{2k+1}$ holds.
\end{lemma}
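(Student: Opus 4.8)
The plan is to show directly that the sequence $(\ell_k)_{k\ge 1}$ defined by $\ell_1 = ab$ and $\ell_{k+1} = \phi(\ell_k)$ is exactly the Lyndon factorization $\LF(\infF)$, and then to read off the length formula. To that end I would establish three facts: (i) each $\ell_k$ is a Lyndon word; (ii) $\ell_1 \succ \ell_2 \succ \cdots$; and (iii) $\ell_1\ell_2\ell_3\cdots = \infF$. By the uniqueness of the Lyndon factorization of an infinite word (the standard extension of~\cite{ChenFL58:_lyndon_factorization_} to $\Sigma^\omega$), (i)--(iii) imply that $\LF(\infF) = \ell_1,\ell_2,\ldots$, which is the first two assertions of the lemma.

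The engine for (iii) is the identity $\infF = ab\cdot\phi(\infF)$. I would prove first its finite form $ab\cdot\phi(F_n) = F_{n+2}\cdot ab$ for every $n\ge 1$, by induction on $n$: the cases $n\in\{1,2\}$ are immediate from $\phi(a)=aab$ and $\phi(b)=ab$, and for $n\ge 3$ one computes, using $F_n = F_{n-1}F_{n-2}$ and the induction hypothesis twice, $ab\cdot\phi(F_n) = ab\cdot\phi(F_{n-1})\cdot\phi(F_{n-2}) = F_{n+1}\cdot ab\cdot\phi(F_{n-2}) = F_{n+1}\cdot F_n\cdot ab = F_{n+2}\cdot ab$. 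Letting $n\to\infty$ (the trailing $ab$ is pushed off to infinity) yields $\infF = ab\cdot\phi(\infF)$. Since $\phi$ is non-erasing with $|\phi(w)|\ge 2|w|$, the equation $X = ab\cdot\phi(X)$ determines $X$ prefix by prefix and hence has a unique solution in $\Sigma^\omega$; as $\ell_1\ell_2\cdots = ab\cdot\phi(\ell_1\ell_2\cdots)$ holds by construction (using $\ell_{k+1}=\phi(\ell_k)$ and that $\phi$ is a morphism), this forces $\ell_1\ell_2\cdots = \infF$.

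The part I expect to take the most care is (i) together with (ii), which both reduce to two properties of $\phi$: it preserves the lexicographic order ($u\prec v\Rightarrow\phi(u)\prec\phi(v)$), and it maps Lyndon words to Lyndon words. Order preservation holds because $\{\phi(a),\phi(b)\}=\{aab,ab\}$ is a prefix code with $\phi(a)\prec\phi(b)$, so the first mismatch of $u$ and $v$ is turned into a mismatch of $\phi(u)$ and $\phi(v)$ of the same sign; this gives (ii) by induction starting from $\ell_1 = ab\succ aabab = \phi(\ell_1)=\ell_2$. For Lyndon preservation, take a Lyndon word $w$ (which begins with $a$ once $|w|\ge 2$) and classify the proper suffixes of $\phi(w)$: a suffix beginning at a $\phi$-block boundary equals $\phi(w[i..])$ for some $i>1$ and exceeds $\phi(w)$ by order preservation together with $w\prec w[i..]$; every other suffix has the form $s\cdot\phi(w[i..])$ with $s$ a nonempty proper suffix of $\phi(a)$ or $\phi(b)$, hence $s\in\{ab,b\}$, and then $s\cdot\phi(w[i..])\succ\phi(w)$ because $\phi(w)$ begins with $aab$ or $ab$ while $s\cdot\phi(w[i..])$ begins with $b$ or with $ab$.

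It remains to verify $|\ell_k| = f_{2k+1}$. Writing $p_k,q_k$ for the numbers of $a$'s and $b$'s in $\ell_k$, the morphism $\phi$ gives $(p_{k+1},q_{k+1}) = (2p_k+q_k,\,p_k+q_k)$, so $|\ell_k| = p_k+q_k$ satisfies the recurrence $|\ell_{k+2}| = 3|\ell_{k+1}| - |\ell_k|$ with $|\ell_1| = 2 = f_3$ and $|\ell_2| = 5 = f_5$. Since the odd-indexed Fibonacci numbers satisfy the same recurrence $f_{n+4} = 3f_{n+2} - f_n$ with the same initial values, an immediate induction gives $|\ell_k| = f_{2k+1}$, completing the proof.
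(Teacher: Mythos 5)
Your proof is correct, but note that the paper does not prove this lemma at all: it is quoted verbatim from Mel\c{c}on's work \cite{MELANCON2000137} as a known characterization of $\LF(\infF)$, so there is no in-paper argument to compare against. What you supply is a complete self-contained derivation, and all the key steps check out: the identity $ab\cdot\phi(F_n)=F_{n+2}\cdot ab$ holds (base cases $n=1,2$ give $abab=F_3\,ab$ and $abaab=F_4\,ab$, and the inductive step is a clean two-fold application of the hypothesis); the fixed-point argument for $X=ab\cdot\phi(X)$ is valid since $\phi$ expands lengths; order preservation follows because $\{aab,ab\}$ is a prefix code with $\phi(a)\prec\phi(b)$; and the Parikh-vector recurrence correctly yields $|\ell_{k+2}|=3|\ell_{k+1}|-|\ell_k|$, matching $f_{n+4}=3f_{n+2}-f_n$ with the right initial values. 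Two small remarks. First, your Lyndon-preservation argument is essentially the same as the one the paper does spell out later in its proof of Lemma~\ref{lem:LF_morphism}, so you have in effect reproved that lemma as a subroutine; the closing sentence there ("$\phi(w)$ begins with $aab$ or $ab$ while $s\cdot\phi(w[i..])$ begins with $b$ or with $ab$") is loose as written, since $ab$ versus $ab$ decides nothing, but the ambiguous combination only arises for $w=b$, which your parenthetical about $|w|\ge 2$ already excludes. Second, you should state explicitly which uniqueness theorem for infinite Lyndon factorizations you are invoking (e.g., that a factorization into a non-increasing infinite product of finite Lyndon words is unique); your strictly decreasing sequence is a special case, so the appeal is legitimate, but it is the one external ingredient your otherwise elementary argument relies on.
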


In order to show our result, we consider the Lyndon factorization of $F''_{2k} (= T''_{2k})$ (Lemma~\ref{lem:LF_shrinked-Fib}).
Lemmas~\ref{lem:LF_morphism} and~\ref{lem:LF_shrinked-factor} explain the Lyndon factorization of a finite prefix of the (infinite) Fibonacci word
by using properties on the morphism $\phi$.

\begin{lemma} \label{lem:LF_morphism}
    Given a string $w \in \{a,b\}^+$,
    let $\LF(w) = \lambda_1^{p_1}, \ldots, \lambda_m^{p_m}$.
    Then $\LF(\phi(w)) = \phi(\lambda_1^{p_1}), \ldots, \phi(\lambda_m^{p_m})$.
\end{lemma}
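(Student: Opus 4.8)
The plan is to reduce the statement to two purely local facts about the morphism $\phi$: (i) $\phi$ maps Lyndon words to Lyndon words, and (ii) $\phi$ is "order-faithful enough" that $\lambda_i \succ \lambda_{i+1}$ implies $\phi(\lambda_i) \succ \phi(\lambda_{i+1})$, together with the compatibility $\phi(\lambda^p) = \phi(\lambda)^p$ (immediate from $\phi$ being a morphism). Granting (i) and (ii), the sequence $\phi(\lambda_1)^{p_1}, \ldots, \phi(\lambda_m)^{p_m}$ is a factorization of $\phi(w)$ into a non-increasing (in fact strictly decreasing on distinct blocks) sequence of powers of Lyndon words, which by the uniqueness of the Lyndon factorization must be $\LF(\phi(w))$. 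So the real work is establishing (i) and (ii).

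For (i), I would use the standard characterization that $u$ is Lyndon iff $u$ is strictly smaller than all of its proper suffixes, or equivalently iff $u = xy$ with $x,y$ nonempty implies $u \prec yx$. Here it is cleanest to exploit that $\phi$ is a \emph{suffix code}: every image $\phi(c)$ ends in $b$, and reading a $\phi$-image from the right, one can uniquely parse it back into blocks $\phi(a)=aab$, $\phi(b)=ab$ (the letter preceding each $b$ tells you which block, and $b$'s do not collide since $bb$ never appears). Thus any proper suffix of $\phi(u)$ is either $\phi(v)$ for a proper suffix $v$ of $u$, or a proper suffix of some single block $\phi(c)$ followed by $\phi(v)$. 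In the first case, $u \prec v$ gives $\phi(u) \prec \phi(v)$ by (ii). In the second case the suffix begins with $ab\cdots$ or $b\cdots$; comparing with $\phi(u)$, which begins with $\phi(u[1])\in\{aab, ab\}$, a short case analysis on the first one or two letters (using that $aaa$ and $bb$ are forbidden, Lemma~\ref{lem:fib-properties}\eqref{itFibAAA} — though here it suffices that $\phi$-images avoid $bb$ and $aaa$) shows $\phi(u)$ is strictly smaller. Hence $\phi(u)$ is Lyndon.

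For (ii), the key observation is that $\phi$ is order-preserving on strings that are not prefixes of one another, because at the first position where $x$ and $y$ differ, say $x$ has $a$ and $y$ has $b$ (so $x \prec y$), the images agree up to that block and then $\phi(a)=aab$ versus $\phi(b)=ab$ differ at their second letter ($a$ vs $b$), giving $\phi(x)\prec\phi(y)$; and if $x\prec y$ because $x$ is a proper prefix of $y$, then $\phi(x)$ is a proper prefix of $\phi(y)$, so again $\phi(x)\prec\phi(y)$. Applying this with $x=\lambda_i$, $y=\lambda_{i+1}$ is slightly subtle since we want $\phi(\lambda_i)\succ\phi(\lambda_{i+1})$ from $\lambda_i\succ\lambda_{i+1}$: if $\lambda_{i+1}$ is a proper prefix of $\lambda_i$ then $\lambda_i\succ\lambda_{i+1}$ automatically and $\phi(\lambda_{i+1})$ is a proper prefix of $\phi(\lambda_i)$, so the strict inequality $\phi(\lambda_i)\succ\phi(\lambda_{i+1})$ holds; otherwise they first differ at some position and the letter-level argument applies directly.

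The main obstacle I anticipate is (i) — verifying that the pulled-back parsing of a suffix of $\phi(u)$ is correct and handling the "partial block" suffixes cleanly. This is where the suffix-code property of $\phi$ and the absence of $bb$ must be invoked carefully; once that structural fact is pinned down, both the Lyndon property and the strict monotonicity follow from short first-letter comparisons, and the uniqueness of $\LF$ closes the argument.
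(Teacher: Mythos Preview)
Your proposal is correct and follows essentially the same route as the paper: establish that $\phi$ is order-preserving and that it sends Lyndon words to Lyndon words by decomposing every proper suffix of $\phi(u)$ as either $\phi(v)$ for a proper suffix $v$ of $u$ or as $\alpha\cdot\phi(v)$ with $\alpha\in\{ab,b\}$, then invoke uniqueness of the Lyndon factorization. One minor point: your appeal to Lemma~\ref{lem:fib-properties}\eqref{itFibAAA} (and even the weaker ``$\phi$-images avoid $aaa$, $bb$'') is unnecessary and misdirected---the paper instead observes directly that a Lyndon word $x$ with $|x|\ge 2$ satisfies $x[1]=a$, hence $\phi(x)$ begins with $aab$, which is immediately smaller than any suffix beginning $ab$ or $b$.
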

\begin{proof}
    For any two binary strings $x$ and $y$, it is clear that $\phi(x) \succ \phi(y)$ if $x \succ y$.
    In the rest of this proof, we show that $\phi(x)$ is a Lyndon word for any Lyndon word $x$ over $\{a, b\}$.
    If $|x| = 1$, then the statement clearly holds.
    Suppose that $|x| \geq 2$.
    Let $\tilde{x}$ be a non-empty proper suffix of $\phi(x)$.
    Then $\tilde{x}$ can be represented as $\tilde{x} = \phi(y)$ for some suffix $y$ of $x$,
    or $\tilde{x} = \alpha \cdot \phi(y)$ for some suffix $y$ of $x$ and $\alpha \in \{ab, b\}$.
    Since $x \prec y$, $\phi(x) \prec \tilde{x}$ if $\tilde{x} = \phi(y)$ for some suffix $y$ of $x$.
    Also in the case of $\tilde{x} = \alpha \cdot \phi(y)$, $\phi(x) \prec \tilde{x}$ holds since $x[1] = a$ (from $x$ is a Lyndon word) and $\phi(x)[1..3] = aab$.
    Thus, $\phi(x) \prec \tilde{x}$ holds for all non-empty proper suffixes $\tilde{x}$ of $x$.
    This implies that $\phi(x)$ is a Lyndon word.
    Therefore, the statement holds.
\end{proof}

\begin{lemma} \label{lem:LF_shrinked-factor}
    For every integer $i \geq 1$, $\LF(\ell_i') = \phi^{i-1}(a), \ldots, \phi^0(a)$,
    where $\ell_i' = \ell_i[..|\ell_i|-1]$.
\end{lemma}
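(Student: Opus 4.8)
The plan is to prove the claim by induction on $i$, using Lemma~\ref{lem:infinite_LF} and Lemma~\ref{lem:LF_morphism} as the main engines. For the base case $i=1$ we have $\ell_1 = ab$, so $\ell_1' = a = \phi^0(a)$, and its Lyndon factorization is trivially the single factor $a$, matching the right-hand side (which for $i=1$ is just $\phi^0(a)$). For the inductive step, assume $\LF(\ell_i') = \phi^{i-1}(a), \ldots, \phi^0(a)$; we must compute $\LF(\ell_{i+1}')$ where $\ell_{i+1}' = \ell_{i+1}[..|\ell_{i+1}|-1]$ and $\ell_{i+1} = \phi(\ell_i)$.

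The crucial observation I would establish first is a factorization identity relating $\phi(\ell_i)$ and $\phi(\ell_i')$: since $\ell_i$ is a Lyndon word of length $\geq 2$ it ends in $b$ (its last letter must be larger than its first, which is $a$), so $\ell_i = \ell_i' \cdot b$, hence $\phi(\ell_i) = \phi(\ell_i')\cdot \phi(b) = \phi(\ell_i')\cdot ab$. Therefore $\ell_{i+1}' = \phi(\ell_i)[..|\ell_{i+1}|-1] = \phi(\ell_i')\cdot a$. So the task reduces to computing $\LF\bigl(\phi(\ell_i')\cdot a\bigr)$. By Lemma~\ref{lem:LF_morphism} applied to the inductive hypothesis, $\LF(\phi(\ell_i')) = \phi^{i}(a), \ldots, \phi^1(a)$ (here I use that each $\phi^j(a)$ is a Lyndon word, as shown inside the proof of Lemma~\ref{lem:LF_morphism}, and that the strict-order/multiplicity structure is preserved). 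Now I need to see how appending the single letter $a$ affects this Lyndon factorization.

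The key algebraic fact to verify is that $\phi^1(a) = aab$ is the smallest among all $\phi^j(a)$ for $j\geq 1$ in the Lyndon ordering — more precisely, that $\phi^j(a) \succ \phi^{j-1}(a)$ for all $j \geq 1$ (equivalently $\phi^j(a) \succ aab \succ a$), so that the existing factorization $\phi^{i}(a),\ldots,\phi^1(a)$ is already a chain of strictly decreasing Lyndon words, and appending $a$ — which is strictly smaller than $\phi^1(a) = aab$ and is itself a Lyndon word — simply extends the chain to $\phi^{i}(a), \ldots, \phi^1(a), \phi^0(a)$. This gives $\LF(\ell_{i+1}') = \phi^{i}(a), \ldots, \phi^0(a)$, completing the induction. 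To see $\phi^j(a) \succ \phi^{j-1}(a)$: it suffices to show $\phi(z) \succ z$ for every string $z$ that is a $\phi$-iterate of $a$, or more simply to exploit that $\phi^j(a)$ starts with $aab$ for $j \geq 1$ while $\phi^{j-1}(a)$ — being a proper Lyndon prefix structure — starts with $aa$ but then, by Lemma~\ref{lem:fib-properties}\ref{itFibAAA}, cannot continue with another $a$, yet a careful comparison shows the shorter one is a proper prefix or diverges favorably; alternatively one invokes that $\phi^{j-1}(a)$ is a border/prefix of $\phi^j(a)$ up to one symbol. I expect the main obstacle to be this last order comparison between $\phi^j(a)$ and $\phi^{j-1}(a)$: pinning down that appending $a$ does not force any merging of adjacent factors (which would happen if some suffix $\phi^\ell(a)\cdots\phi^1(a)\cdot a$ were itself Lyndon and smaller than $\phi^{\ell+1}(a)$) requires knowing the exact prefix relationship $\phi^{j-1}(a) = \phi^j(a)[..|\phi^{j-1}(a)|]$ does \emph{not} hold but a near-miss does, and I would handle it by a direct induction on $j$ tracking the first few letters, using $\phi(a)=aab$, $\phi(b)=ab$ and Lemma~\ref{lem:fib-properties}.
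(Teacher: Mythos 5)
Your proof is correct and follows essentially the same route as the paper: induction on $i$, the identity $\ell_{i+1}' = \phi(\ell_i')\cdot a$ (since $\ell_i$ ends in $b$), Lemma~\ref{lem:LF_morphism} applied to the induction hypothesis, and the observation that appending the Lyndon word $a \prec \phi^1(a)=aab$ just extends the strictly decreasing chain, which by uniqueness of the Lyndon factorization rules out any merging. The ``main obstacle'' you anticipate is actually a non-issue: the chain $\phi^{i}(a)\succ\cdots\succ\phi^1(a)$ is already guaranteed by Lemma~\ref{lem:LF_morphism} together with the induction hypothesis (and in fact $\phi^{j-1}(a)$ \emph{is} a prefix of $\phi^{j}(a)$, contrary to your aside), so only the trivial comparison $aab\succ a$ remains to check.
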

\begin{proof}
    We prove the statement by induction on $i$.
    For the base case $i = 1$, $\LF(\ell_1') = a = \phi^0(a)$.
    Assume that the lemma holds for all $i \leq j$ for some $j \geq 1$.
    By the definition of the morphism $\phi$,
    $\ell_{j+1}' = \phi(\ell_j)' = \phi(\ell_j') \cdot a$ holds because $\ell_j$ ends with $b$.
    Also, by using Lemma~\ref{lem:LF_morphism} and the induction hypothesis,
    \[
        \LF(\ell_{j+1}') = \LF(\phi(\ell_j') \cdot a) = \LF(\phi(\ell_j')), a = \phi(\LF(\ell_j')), a
                         = \phi^{j}(a), \ldots, \phi^1(a), \phi^0(a).
    \]    
    Therefore, the lemma holds.
\end{proof}

\begin{lemma} \label{lem:LF_shrinked-Fib}
    Let $L_i$ be the $i$-th Lyndon factor of $F_{2k}''$.
    For every $k \geq 2$,
    \begin{equation*}
        L_i =
        \begin{cases}
            \phi^{i-1}(ab)   & \text{if $1 \leq i < k-1$,}       \\
            \phi^{2k-i-3}(a) & \text{if $k-1 \leq i \leq 2k-3$.}
        \end{cases}
    \end{equation*}
\end{lemma}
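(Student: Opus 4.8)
The plan is to realise $F_{2k}''$ as the length-$(f_{2k}-2)$ prefix of the infinite Fibonacci word $\infF$ and to read off its Lyndon factorization from the factors $\ell_i$ of $\LF(\infF)$ together with Lemma~\ref{lem:LF_shrinked-factor}.

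First I would prove the bridging identity $\ell_1\ell_2\cdots\ell_{k-1}=F_{2k}'$. Since $\LF(\infF)=\ell_1,\ell_2,\ldots$, the word $\ell_1\cdots\ell_{k-1}$ is a prefix of $\infF$; by $|\ell_i|=f_{2i+1}$ (Lemma~\ref{lem:infinite_LF}) and the elementary identity $\sum_{i=1}^{m} f_{2i+1}=f_{2m+2}-1$ (a one-line induction on $m$), its length equals $f_{2k}-1$. As $F_{2k}$ is a prefix of $\infF$ and $|F_{2k}'|=f_{2k}-1$, both strings are the length-$(f_{2k}-1)$ prefix of $\infF$, hence coincide. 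Deleting one further symbol and noting $|\ell_1\cdots\ell_{k-2}|=f_{2k-2}-1<f_{2k}-2$, the truncation falls strictly inside $\ell_{k-1}$, so $F_{2k}''=\ell_1\cdots\ell_{k-2}\cdot\ell_{k-1}'$.

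Next I would glue two already-known Lyndon factorizations. Because $\ell_1\succ\cdots\succ\ell_{k-2}$ are Lyndon words, $\LF(\ell_1\cdots\ell_{k-2})=\ell_1,\ldots,\ell_{k-2}$; and Lemma~\ref{lem:LF_shrinked-factor} with $i=k-1$ gives $\LF(\ell_{k-1}')=\phi^{k-2}(a),\ldots,\phi^{0}(a)$. Concatenating, $F_{2k}''$ is written as a product of Lyndon words, and by uniqueness of the Lyndon factorization it suffices to verify this list is strictly decreasing. Inside each block this is already known; the only new comparisons are $\phi^{j+1}(a)\succ\phi^{j}(a)$ and $\ell_{k-2}\succ\phi^{k-2}(a)$. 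Using the monotonicity $x\succ y\Rightarrow\phi(x)\succ\phi(y)$ established in the proof of Lemma~\ref{lem:LF_morphism}, the former reduces to $aab\succ a$ (so $\phi^{j+1}(a)=\phi^{j}(aab)\succ\phi^{j}(a)$), and the latter to $ab\succ aab$, since $\ell_{k-2}=\phi^{k-3}(ab)$ and $\phi^{k-2}(a)=\phi^{k-3}(aab)$. Finally I would rewrite $\ell_i=\phi^{i-1}(ab)$ (Lemma~\ref{lem:infinite_LF}) for $1\le i\le k-2$ and re-index the second block: its overall $i$-th factor, for $k-1\le i\le 2k-3$, is $\phi^{2k-i-3}(a)$, which is exactly the claimed closed form. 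The degenerate case $k=2$ (empty first block, $F_4''=a$, $\LF(a)=a$) agrees with the formula and is checked directly.

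The step I expect to be the real crux is the identification $\ell_1\cdots\ell_{k-1}=F_{2k}'$: once this link between the infinite Lyndon factorization and the finite Fibonacci words is in place, the remainder is bookkeeping, and the only genuinely new lexicographic input is $b\succ a$, propagated through $\phi$ by monotonicity.
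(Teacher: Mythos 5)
Your proposal is correct and follows essentially the same route as the paper: both use the length identity $\sum_{i=1}^{k-1}|\ell_i|=f_{2k}-1$ to identify $F_{2k}''$ with $\ell_1\cdots\ell_{k-2}\cdot\ell_{k-1}'$ and then invoke Lemma~\ref{lem:LF_shrinked-factor} on the last block. Your explicit verification that the concatenated list of Lyndon words is strictly decreasing across the block boundary is a detail the paper leaves implicit, but it is the same argument.
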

\begin{proof}
    By Lemma~\ref{lem:infinite_LF},
    \[
        \sum_{i=1}^{k-1} |\ell_i| = \sum_{i=1}^{k-1} f_{2i+1} = f_3 + f_5 + \cdots + f_{2k-1}
                                  = f_2 + (f_3 + f_5 + \cdots + f_{2k-1}) - 1 = f_{2k} - 1.
    \]
    Thus $\LF(F_{2k}) = \ell_1, \ldots, \ell_{k-1}, a$ holds.
    Also, $L_i = \ell_i$ holds for all $i < k-1$ since $\ell_i$ is not a prefix of $\ell_{i-1}$.
    In other words, $\LF(F_{2k}'') = \ell_1, \ldots, \ell_{k-2}, \LF(\ell_{k-1}')$.
    By Lemma~\ref{lem:LF_shrinked-factor},
    \[
        \LF(F''_{2k}) = \ell_1, \ldots, \ell_{k-2}, \phi^{k-2}(a), \ldots, \phi^0(a).
    \]
    Then the statement also holds.
\end{proof}

Moreover, we can find the specific form of suffixes characterized by the Lyndon factorization of $F''_{2k}$ as described in the next lemma.

\begin{lemma} \label{lem:significant-suffix}
    For every integer $i \geq 1$, $\phi^{i-1}(a) \cdots \phi^0(a)$ is a prefix of $\phi^{i}(a)$.
\end{lemma}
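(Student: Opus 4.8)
The plan is to prove the claim by induction on $i$, using the morphism $\phi$ together with Lemma~\ref{lem:infinite_LF}. The base case $i=1$ is trivial: $\phi^{0}(a) = a = \phi^{1}(a)'$ wait — more carefully, we need $\phi^{0}(a) = a$ to be a prefix of $\phi^{1}(a) = \phi(a) = aab$, which holds. For the inductive step, assume that $\phi^{i-1}(a)\cdots\phi^{0}(a)$ is a prefix of $\phi^{i}(a)$; I would like to conclude that $\phi^{i}(a)\cdots\phi^{0}(a)$ is a prefix of $\phi^{i+1}(a)$.

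The key observation I would exploit is a relation connecting $\phi^{i+1}(a)$ to $\phi^{i}(a)$ and $\phi^{i-1}(a)$, analogous to the Fibonacci recurrence. Concretely, from $\phi(a) = aab = a \cdot ab = a\cdot \phi(b)$ and $\phi(b)=ab$, one sees by applying $\phi^{i}$ that $\phi^{i+1}(a) = \phi^{i}(a)\cdot \phi^{i+1}(b)$ and, similarly tracking how $b$ evolves, $\phi^{i+1}(b) = \phi^{i}(b)\cdot\phi^{i}(a)$ or some such telescoping identity; I would pin down the exact pair of mutual recurrences for $\phi^{i}(a)$ and $\phi^{i}(b)$ first, since $\phi$ acts on the two-letter alphabet and these are the natural "generalized Fibonacci" words generated by $\phi$. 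Once I have $\phi^{i+1}(a) = \phi^{i}(a)\cdot(\text{something beginning with }\phi^{i-1}(a)\cdots)$, applying $\phi$ to the induction hypothesis (which preserves the prefix relation, since $x$ a prefix of $y$ implies $\phi(x)$ a prefix of $\phi(y)$) and then prepending the appropriate factor gives the desired prefix chain. Alternatively, and perhaps more cleanly, I would connect this statement to Lemma~\ref{lem:LF_shrinked-factor}: that lemma says $\LF(\ell_{i}') = \phi^{i-1}(a),\ldots,\phi^{0}(a)$, and since $\ell_i' = \phi(\ell_{i-1}')\cdot a$ (shown in the proof of Lemma~\ref{lem:LF_shrinked-factor}), the string $\phi^{i-1}(a)\cdots\phi^{0}(a)$ is exactly $\ell_i'$, which is a prefix of $\ell_i = \phi^{i-1}(ab)$; it then remains to check that $\phi^{i-1}(ab) = \phi^{i-1}(a)\phi^{i-1}(b)$ is a prefix of $\phi^{i}(a) = \phi^{i-1}(aab) = \phi^{i-1}(a)\phi^{i-1}(a)\phi^{i-1}(b)$, which is immediate. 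This route reduces the whole statement to the already-established Lemmas~\ref{lem:infinite_LF} and~\ref{lem:LF_shrinked-factor} plus the trivial word identity $\phi(a) = a\cdot\phi(b)$ iterated.

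I expect the main obstacle to be purely bookkeeping: getting the shift of indices right between the $\ell_i$ (indexed so $|\ell_k| = f_{2k+1}$), the powers $\phi^{i}$, and the "$a$-truncation" $\ell_i' = \ell_i[..|\ell_i|-1]$, and in particular verifying the seemingly obvious fact that $\phi^{i-1}(a)\cdots\phi^{0}(a)$ equals $\ell_i'$ rather than merely being a prefix of it (this is really a restatement of Lemma~\ref{lem:LF_shrinked-factor} read as a factorization of the concrete string $\ell_i'$, so it should follow by the same induction used there, but I would state it explicitly). There is no deep combinatorial difficulty here; the content is that $\phi$ is prefix-monotone and that iterating $\phi(a)=a\cdot\phi(b)$ produces the self-similar nesting of prefixes that the claim asserts.
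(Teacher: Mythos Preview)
Your second, ``cleaner'' route contains a genuine error. You assert that $\ell_i = \phi^{i-1}(ab) = \phi^{i-1}(a)\phi^{i-1}(b)$ is a prefix of $\phi^{i}(a) = \phi^{i-1}(a)\phi^{i-1}(a)\phi^{i-1}(b)$, calling this ``immediate''. It is false. After cancelling the common prefix $\phi^{i-1}(a)$, the claim becomes that $\phi^{i-1}(b)$ is a prefix of $\phi^{i-1}(a)$, and this fails already for $i=1$ ($b$ is not a prefix of $a$) and for $i=2$ ($\phi(b)=ab$ is not a prefix of $\phi(a)=aab$). Indeed $\ell_i$ is a Lyndon word while $\phi^{i}(a)$ has $\phi^{i-1}(a)^2$ as a prefix, so the Lyndon word $\ell_i$ cannot be a prefix of $\phi^i(a)$ once $|\ell_i| > |\phi^{i-1}(a)|$, which happens for every $i \ge 1$. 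So the detour through $\ell_i$ does not work; what \emph{is} a prefix of $\phi^i(a)$ is $\ell_i'$, and proving that directly is exactly the original statement.

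Your first approach is the right one, and once you actually pin down the recurrence it becomes precisely the paper's proof. From $\phi(a)=aab$ one gets $\phi^{j+1}(a) = \phi^{j}(aab) = \phi^{j}(a)\,\phi^{j}(a)\,\phi^{j}(b)$ (your tentative identity $\phi^{i+1}(b) = \phi^{i}(b)\phi^{i}(a)$ is backwards; it is $\phi^{i}(a)\phi^{i}(b)$, but you do not actually need any recurrence for $\phi^{i}(b)$). Now the induction hypothesis says the second factor $\phi^{j}(a)$ has $\phi^{j-1}(a)\cdots\phi^{0}(a)$ as a prefix, so $\phi^{j+1}(a)$ has $\phi^{j}(a)\cdot\phi^{j-1}(a)\cdots\phi^{0}(a)$ as a prefix, which is the claim for $i=j+1$. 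That is the paper's argument verbatim; there is no need for the $\ell_i$ machinery or Lemma~\ref{lem:LF_shrinked-factor} at all.
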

\begin{proof}
    We prove the lemma by induction on $i$.
    For the base case $i=1$, $\phi^0(a) = a$ is a prefix of $\phi^1(a) = aab$.
    Assume that the statement holds for all $i \leq j$ for some $j \geq 1$.
    For $i = j+1$,
    \[
        \phi^{j+1}(a) = \phi^{j}(\phi(a)) = \phi^{j}(aab) = \phi^{j}(a) \phi^{j}(a) \phi^{j}(b).
    \]
    By induction hypothesis, $\phi^{j}(a) = \phi^{j-1}(a) \cdots \phi^0(a) \cdot x$ for some string $x$.
    Then
    $\phi^{j+1}(a) = \phi^{j}(a) \cdot \phi^{j-1}(a) \cdots \phi^0(a) \cdot x \cdot \phi^{j}(b)$.
    This implies that the statement also holds for $i = j+1$.
    Therefore, the lemma holds.
\end{proof}

With the Lemmas~\ref{lem:LF_shrinked-Fib} and~\ref{lem:significant-suffix}, we obtain the following lemma,
which characterizes the first $k+1$ entries of the suffix array $\SA_{T_{2k}}$ of $T_{2k}$.

Firstly, we consider the order of the significant suffixes of $F''_{2k}$.

\begin{lemma} \label{lem:significant-suf-orderA}
    \(
    \SA_{F''_{2k}}[i] = f_{2k} - 1 - \sum_{j=0}^{i-1} |\phi^{j}(a)|
    \)
    for every $i \in [1..k-1]$.
\end{lemma}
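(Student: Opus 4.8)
The plan is to identify, for each $i\in[1..k-1]$, the significant suffix of $F''_{2k}$ that is lexicographically $i$-th smallest, and then to compute its starting position. By Lemma~\ref{lem:LF_shrinked-Fib}, the Lyndon factorization of $F''_{2k}$ ends with the block $\phi^{k-2}(a), \phi^{k-3}(a), \ldots, \phi^{0}(a)$, and since $\lambda_j \succ \lambda_{j+1}$ along the Lyndon factorization while Lemma~\ref{lem:significant-suffix} tells us exactly that $\phi^{j}(a)\cdots\phi^{0}(a)$ is a prefix of $\phi^{j+1}(a)$, the $m$ suffixes of the form $x_t := \phi^{k-1-t}(a)\phi^{k-2-t}(a)\cdots\phi^{0}(a)$ (for $t=1,\ldots,k-1$, reading the last $t$ Lyndon factors) are precisely the significant suffixes of $F''_{2k}$, and they form a chain under the prefix relation: $x_1$ is a proper prefix of $x_2$, which is a proper prefix of $x_3$, and so on. Hence $x_1 \prec x_2 \prec \cdots \prec x_{k-1}$, so $x_i$ is the $i$-th smallest among the significant suffixes.

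Next I would pin down the starting position of $x_i$ in $F''_{2k}$. Since $x_i$ is the suffix consisting of the last $i$ Lyndon factors $\phi^{i-1}(a), \ldots, \phi^{0}(a)$, its length is $\sum_{j=0}^{i-1}|\phi^{j}(a)|$, and therefore it begins at position $|F''_{2k}| - \sum_{j=0}^{i-1}|\phi^{j}(a)| + 1 = (f_{2k}-2) - \sum_{j=0}^{i-1}|\phi^{j}(a)| + 1 = f_{2k}-1-\sum_{j=0}^{i-1}|\phi^{j}(a)|$, which is exactly the claimed value of $\SA_{F''_{2k}}[i]$. So the remaining content of the lemma is the claim that these $k-1$ significant suffixes really are the $k-1$ lexicographically smallest suffixes of $F''_{2k}$ overall — i.e., that no non-significant suffix slips in among the first $k-1$ entries of the suffix array.

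For that final point I would invoke the standard fact (see e.g.~\cite{DBLP:journals/tcs/INIBT16}) that the lexicographically smallest suffix of a string is its last Lyndon factor, and more generally that the significant suffixes are exactly the "left-extensions" of the smallest suffix that remain small; concretely, if $\LF(w)=\lambda_1^{p_1},\ldots,\lambda_m^{p_m}$ then every suffix of $w$ is lexicographically at least $\lambda_m$, and among all suffixes strictly smaller than $\lambda_{m-1}^{p_{m-1}}\cdots$, the significant ones form an initial segment of the sorted order. I would apply this with $w=F''_{2k}$, using that all $p_i=1$ here and that the smallest suffix is $\phi^{0}(a)=a$ (a single character, the unique suffix starting with its final letter), climbing up the chain $x_1\prec x_2\prec\cdots$; each $x_{i+1}$ is obtained from $x_i$ by prepending the Lyndon word $\phi^{i}(a)$, and because $\phi^{i}(a)\cdots\phi^{0}(a)$ is a prefix of $\phi^{i+1}(a)$ (Lemma~\ref{lem:significant-suffix}), any suffix starting strictly inside $\phi^{i}(a)$ is either $\succ \phi^{i-1}(a)\cdots$ already counted, or $\succeq \phi^{i}(a)\cdots = x_{i+1}$, so nothing new is smaller than $x_{i+1}$. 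The main obstacle I anticipate is making this last "no intruders" step fully rigorous: it requires a careful case analysis of where an arbitrary suffix of $F''_{2k}$ can start relative to the Lyndon-factor boundaries, and comparing it against the chain using the prefix relations from Lemma~\ref{lem:significant-suffix} together with primitivity of the Fibonacci word (Lemma~\ref{lem:fib-properties}, Property~\ref{itFibPrimitivity}) to rule out coincidental equalities. Everything else is bookkeeping with the length formula $|\phi^{j}(a)|=f_{2j+1}$ implied by Lemma~\ref{lem:infinite_LF}.
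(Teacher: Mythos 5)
Your overall skeleton matches the paper's: identify the chain of significant suffixes $\phi^{i-1}(a)\cdots\phi^{0}(a)$, observe via Lemma~\ref{lem:significant-suffix} that each is a prefix of the next (hence the chain is lexicographically increasing), and reduce the lemma to showing that no other suffix of $F''_{2k}$ intrudes among them; the position bookkeeping is then immediate. (Minor slip: your $x_t=\phi^{k-1-t}(a)\cdots\phi^{0}(a)$ consists of the last $k-t$ Lyndon factors, not the last $t$, and under that indexing $x_{t+1}$ is a prefix of $x_t$ rather than the reverse; the object you clearly intend, and use in the position computation, is $\phi^{t-1}(a)\cdots\phi^{0}(a)$.)

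The genuine gap is the ``no intruders'' step, which you yourself flag as the main obstacle and then delegate to an unspecified standard fact about significant suffixes forming an initial segment of the suffix array. That step is the entire content of the lemma, and your sketch of it does not go through as written: the dichotomy that a suffix starting strictly inside $\phi^{i}(a)$ is ``either $\succ\phi^{i-1}(a)\cdots$ already counted, or $\succeq\phi^{i}(a)\cdots$'' is unjustified, and primitivity of Fibonacci words is not the relevant tool here. The paper closes the step with a short concrete argument you are missing: any suffix $\alpha$ of the tail $\phi^{i'+1}(a)\cdots\phi^{0}(a)$ that is not itself of the form $\phi^{d}(a)\cdots\phi^{0}(a)$ can be written as $\alpha=\beta\cdot\phi^{d}(a)\cdots\phi^{0}(a)$ with $\beta$ a nonempty proper suffix of the Lyndon word $\phi^{d+1}(a)$; since Lyndon words are unbordered, $\beta$ is not a prefix of $\phi^{d+1}(a)$, so $\phi^{d+1}(a)$ and $\beta$ disagree at a genuine mismatch position (within $|\beta|$) at which $\phi^{d+1}(a)$ is smaller, and because $\phi^{d+1}(a)$ is a prefix of $\phi^{i'+1}(a)$ this forces $\phi^{i'+1}(a)\cdots\phi^{0}(a)\prec\alpha$ irrespective of what follows $\beta$. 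Without this unbordered-mismatch argument (or an actually quoted and verified theorem replacing it), your proof is incomplete at its decisive point.
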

\begin{proof}
    We can see that $\phi^{k-2}(a) \cdots \phi^0(a)$ is a suffix of $F''_{2k}$ by Lemma~\ref{lem:LF_shrinked-Fib}.
    Our claim is that $\phi^{i-1}(a) \cdots \phi^0(a)$ is the $i$-th lexicographically smallest suffix of $F''_{2k}$ for every $0 \leq i \leq k-1$.
    We prove the statement by induction on $i$.
    For the base case $i = 1$, $\phi^0(a) = a$ is the lexicographically smallest suffix of $F''_{2k}$.
    Assume that the statement holds for all $i \leq i'$ for some $i' \geq 1$.
    Let $\alpha$ be a suffix of $\phi^{i'+1}(a) \cdots \phi^0(a)$ such that there is no $d$ with $\alpha = \phi^{d}(a) \cdots \phi^0(a)$.
    (Otherwise we already know that $\alpha$ is a prefix of $\phi^{i'+1}(a)$, which we already processed for a former suffix array entry.)
    Then $\alpha$ can be written as $\alpha = \beta \cdot \phi^{d}(a) \cdots \phi^0(a)$ for some proper suffix $\beta$ of $\phi^{d+1}(a)$ and $d \in [0..i']$.
    Since $\phi^{i'+1}(a)$ has $\phi^{d+1}(a)$ as a prefix and there is a mismatching character between $\phi^{d+1}(a)$ and $\beta$
    (from $\beta$ is a suffix of Lyndon word $\phi^{d+1}(a)$),
    then $\phi^{i'+1}(a) \prec \beta$ holds.
    Therefore, the lemma holds.
\end{proof}

Because appending $a$'s to the end does not affect to suffix orders,
we can easily obtain the following lemma by Lemma~\ref{lem:significant-suf-orderA}.

\begin{lemma} \label{lem:significant-suf-order}
    For every $k \geq 2$,
    $\SA_{T_{2k}}[1] = f_{2k}$,
    $\SA_{T_{2k}}[2] = f_{2k}-1$, and
    $\SA_{T_{2k}}[i] = f_{2k}-1-\sum_{j=0}^{i-3} |\phi^{j}(a)|$ for any $i$ that satisfying $3 \leq i \leq k+1$.
\end{lemma}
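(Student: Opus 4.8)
The plan is to derive this lemma directly from Lemma~\ref{lem:significant-suf-orderA} by accounting for the two extra trailing $a$'s that distinguish $T_{2k}$ from $F''_{2k}$. Recall that $T_{2k} = F''_{2k} \cdot aa$, so the suffixes of $T_{2k}$ are exactly the suffixes $\sigma$ of $F''_{2k}$, each extended to $\sigma \cdot aa$, together with the two brand-new suffixes $aa$ and $a$. First I would record the elementary fact that for strings $x,y$ over $\{a,b\}$, appending $a$'s to the right preserves lexicographic order in the following precise sense: if $x \prec y$ then $x \cdot a^s \preceq y \cdot a^t$ for all $s,t \ge 0$, with equality impossible unless $x$ is a prefix of $y$ (or vice versa) — and in our situation the relevant suffixes of $F''_{2k}$ are pairwise non-prefix-related except along the nested chain $\phi^{i-1}(a)\cdots\phi^0(a)$, whose members do sit in a prefix chain. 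So the only subtlety is how the short new suffixes $a$ and $aa$, and the "extended" versions of the significant suffixes, interleave near the bottom of the order.

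Next I would pin down the bottom of $\SA_{T_{2k}}$. Since $F_{2k}$ (hence $F''_{2k}$) contains no $aaa$ (Property~\ref{itFibAAA} of Lemma~\ref{lem:fib-properties}) and $F''_{2k}$ ends in $\cdots b a$ (because $F_{2k}$ ends in $\cdots ab$ for even $k$, so $F'_{2k}$ ends in $\cdots a$ and $F''_{2k}$ ends in $\cdots b$... I will verify the exact last characters from $F_{2k}=F_{2k-1}F_{2k-2}$ and the known suffixes $ab$, $ba$ of even/odd Fibonacci words), the suffix $a$ of $T_{2k}$ is the unique smallest suffix, giving $\SA_{T_{2k}}[1]=f_{2k}$; and $aa=T_{2k}[f_{2k}-1..]$ is the unique next-smallest since no other suffix starts with $aa$ followed by anything $\preceq a$ (that would force $aaa$ in $F''_{2k}$). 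This yields $\SA_{T_{2k}}[2]=f_{2k}-1$. For the entries $3 \le i \le k+1$, the claim is that these are precisely the extended significant suffixes: the $i$-th smallest suffix of $T_{2k}$ is $\bigl(\phi^{i-3}(a)\cdots\phi^0(a)\bigr)\cdot aa$, which starts at position $f_{2k}-1-\sum_{j=0}^{i-3}|\phi^j(a)|$ in $T_{2k}$ — note this is the same starting position as in $F''_{2k}$ shifted by nothing, since $T_{2k}$ and $F''_{2k}$ agree on positions $1..f_{2k}-2$.

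To justify that these extended significant suffixes occupy exactly ranks $3,\dots,k+1$, I would argue: (a) by Lemma~\ref{lem:significant-suf-orderA} the suffixes $\phi^{i-1}(a)\cdots\phi^0(a)$ of $F''_{2k}$ are the $k-1$ smallest suffixes of $F''_{2k}$, in increasing order as $i$ grows, and each begins with $\phi^{i-1}(a)[1]=a$ then (for $i\ge 2$) $aab\cdots$; (b) appending $aa$ preserves their relative order because consecutive members of this chain are prefix-related (Lemma~\ref{lem:significant-suffix}), and for a prefix-related pair $x$ a prefix of $y$, $x\cdot aa \prec y\cdot aa$ iff $x\cdot aa \prec y$, which holds here since $y$ continues past $x$ with a character forced to be $a$ then eventually $b$ — again using absence of $aaa$ to see the first post-$x$ character comparison resolves correctly; (c) every other suffix of $F''_{2k}$ is lexicographically larger than all $k-1$ of these, and appending $aa$ to everything cannot demote such a suffix below an extended significant suffix, because a strict inequality $\phi^{i-1}(a)\cdots\phi^0(a) \prec \gamma$ with $\gamma$ not prefix-comparable stays strict after appending $a$'s; (d) finally $a \prec aa \prec \phi^0(a)\cdot aa = aaa$, wait — here I must be careful: $\phi^0(a)\cdots\phi^0(a)\cdot aa$ for $i=3$ is just $a\cdot aa = aaa$, but $aaa$ is not a suffix of $T_{2k}$; rather the $i=3$ significant suffix is $\phi^0(a)\cdots\phi^0(a)$ which is the single symbol $a$ at position $f_{2k}-1$ of $F''_{2k}$, so its extension in $T_{2k}$ is $a\cdot aa$ starting at position $f_{2k}-1-\sum_{j=0}^{0}|\phi^j(a)| = f_{2k}-2$, i.e. the suffix $T_{2k}[f_{2k}-2..] = aaa$. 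This is consistent with $\SA_{T_{2k}}[3]=f_{2k}-2$ as already established in the Type~(3) discussion, and indeed $a \prec aa \prec aaa$. So the ranks line up: rank $1$ is $a$ (the new suffix), rank $2$ is $aa$ (new), and ranks $3,\dots,k+1$ are the $k-1$ extended significant suffixes in increasing order, exactly matching $\SA_{T_{2k}}[i] = f_{2k}-1-\sum_{j=0}^{i-3}|\phi^j(a)|$.

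The main obstacle I anticipate is the bookkeeping in step (b)--(c): making airtight that appending $aa$ to every suffix neither reorders the significant-suffix chain nor lets a non-significant suffix sneak in among the first $k+1$ ranks. The clean way to handle this is a single lemma of the form "if $x \prec y$ over $\{a,b\}$ and neither $xaa$ nor $yaa$ contains $aaa$ except as a terminal artifact, then $xaa \prec yaa$," applied uniformly; the no-$aaa$ property of Fibonacci words (Lemma~\ref{lem:fib-properties}, Property~\ref{itFibAAA}) is exactly what makes the edge cases collapse. Once that auxiliary comparison lemma is in hand, the statement follows by merging it with Lemma~\ref{lem:significant-suf-orderA} and the two explicitly computed smallest suffixes.
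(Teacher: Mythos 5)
Your proposal is correct and takes essentially the same route as the paper: the paper's (much terser) proof likewise observes that $a$ and $aa$ are the two smallest suffixes of $T_{2k}$ and that all remaining suffixes inherit their order from $F''_{2k}$, so $\SA_{T_{2k}}[i] = \SA_{F''_{2k}}[i-2]$ for $i \ge 3$, and then applies Lemma~\ref{lem:significant-suf-orderA}. Your version merely spells out in detail why appending $aa$ (a string over the minimal character) cannot reorder suffixes, which the paper takes for granted.
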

\begin{proof}
    $T_{2k}$ ends with the suffix $aa$ such that $T[f_{2k}..] = a$ is the smallest, and $T[f_{2k}-1..]$ is the second smallest suffix of $T_{2k}$.
    All other suffixes inherit their order from $F''_{2k}$, and thus $\SA_{T_{2k}}[i] = \SA_{F''_{2k}}[i-2]$ for $i \ge 3$.
\end{proof}

We define a substring $Y_i$ and suffixes $X_i$ and $Z_i$ of $T_{2k}$ as follows:
\begin{align*}
    X_i & = T_{2k}[f_{2k}-f_{2i+4}..]~(1 \leq i \leq k-3),                  \\
    Y_i & = T_{2k}[f_{2k}-f_{2i+4}..f_{2k}-f_{2i+3}-2]~(1 \leq i \leq k-3), \\
    Z_i & = T_{2k}[f_{2k}-f_{2i+3}-1..]~(0 \leq i \leq k-3).
\end{align*}

\begin{observation} \label{obs:significant-substr}
    The following properties hold:
    \begin{align*}
        X_i & = Y_i \cdot Z_i = b \cdot T_{2i+4},                                 \\
        Y_i & = b \cdot T''_{2i+2},                            \\
        Z_i & = b \cdot \phi^{i}(a) \cdots \phi^0(a) \cdot aa.
    \end{align*}
\end{observation}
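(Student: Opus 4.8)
The plan is to exploit the self-similar structure of the Fibonacci words to collapse each of the three identities to a statement about the much shorter word $T_{2i+4}$, and then obtain the claims on $Y_i$ and $Z_i$ by truncation. The decomposition $X_i = Y_i\cdot Z_i$ is immediate from the definitions: $Y_i$ occupies positions $f_{2k}-f_{2i+4}$ through $f_{2k}-f_{2i+3}-2$ and $Z_i$ occupies positions $f_{2k}-f_{2i+3}-1$ through $f_{2k}$, so the two blocks are consecutive and together cover exactly the range that defines $X_i$ (here $f_{2i+4} = f_{2i+3}+f_{2i+2} \ge f_{2i+3}+2$ for $i\ge 1$, so $Y_i$ is a legitimate nonempty block).

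The heart of the argument is to show $X_i = b\cdot T_{2i+4}$. Since $2i+6\le 2k$, iterating $F_m = F_{m-1}F_{m-2}$ shows $F_{2k}$ ends with $F_{2i+6}$; and $F_{2i+6} = F_{2i+5}F_{2i+4} = F_{2i+4}F_{2i+3}F_{2i+4}$ ends with $(\text{last letter of }F_{2i+3})\cdot F_{2i+4} = b\cdot F_{2i+4}$, because every odd-indexed Fibonacci word ends with $b$. This suffix has length $f_{2i+4}+1$, so $F_{2k}[f_{2k}-f_{2i+4}..] = b\cdot F_{2i+4}$. Now $T_{2k}$ is $F_{2k}$ with its rightmost $b$ (at position $f_{2k}-1$) changed to $a$, and inside the above suffix that position is precisely the position of the rightmost $b$ of $F_{2i+4}$; since $T_{2i+4} = F''_{2i+4}\cdot aa$ is exactly $F_{2i+4}$ with its own rightmost $b$ turned into $a$ (because $F_{2i+4}$ ends with $ba$), we conclude $X_i = b\cdot T_{2i+4}$.

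From $X_i = b\cdot T_{2i+4}$ everything else follows. The word $Y_i$ is the length-$(f_{2i+2}-1)$ prefix of $b\cdot T_{2i+4}$, i.e.\ $b$ followed by the length-$(f_{2i+2}-2)$ prefix of $T_{2i+4}$; since $T_{2i+4}$ agrees with $F_{2i+4}$ on its first $f_{2i+4}-2$ letters and $F_{2i+4}$ begins with $F_{2i+2}$ (indeed $F_{2i+4} = F_{2i+2}F_{2i+1}F_{2i+2}$), this prefix equals $F_{2i+2}[1..f_{2i+2}-2] = T''_{2i+2}$, giving $Y_i = b\cdot T''_{2i+2}$. Similarly $Z_i$ is the length-$(f_{2i+3}+2)$ suffix of $b\cdot T_{2i+4}$; for $i\ge 1$ this fits inside $T_{2i+4} = F''_{2i+4}\cdot aa$, so $Z_i = (\text{suffix of }F''_{2i+4}\text{ of length }f_{2i+3})\cdot aa$. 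By Lemma~\ref{lem:LF_shrinked-Fib} applied with parameter $i+2$, the last $i+1$ Lyndon factors of $F''_{2i+4}$ are $\phi^{i}(a),\ldots,\phi^0(a)$, whose concatenation is $\ell_{i+1}'$ and hence has length $f_{2i+3}-1$ by Lemma~\ref{lem:infinite_LF}; the single leftover letter is $F_{2i+4}[f_{2i+2}-1] = F_{2i+2}[f_{2i+2}-1] = b$, using again that $F_{2i+2}$ is a prefix of $F_{2i+4}$ and that an even-indexed Fibonacci word of index $\ge 4$ ends with $ba$. Thus that suffix of $F''_{2i+4}$ equals $b\cdot\phi^{i}(a)\cdots\phi^0(a)$ and $Z_i = b\cdot\phi^{i}(a)\cdots\phi^0(a)\cdot aa$. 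The case $i=0$ is immediate: $Z_0$ consists of the last four letters of $T_{2k}$, which are $baaa = b\cdot\phi^0(a)\cdot aa$.

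The step I expect to require the most care is the length bookkeeping around $Z_i$: one must check that the relevant suffix of $F''_{2i+4}$ actually stays within that word (this needs $f_{2i+3}+2\le f_{2i+4}$, which fails for $i=0$ and hence forces the separate treatment above), that the single leftover letter lies inside the prefix copy of $F_{2i+2}$, and that the length identity $|\ell_{i+1}'| = f_{2i+3}-1$ is exactly what makes ``one leftover letter'' the correct count. Confirming that the edit position $f_{2k}-1$ hits the intended letter inside the short factor $T_{2i+4}$ is the other place demanding precision, but that part is routine.
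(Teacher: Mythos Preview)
Your proof is correct. The paper itself offers no proof of this observation---it is stated as self-evident, with only the length identity $|\phi^{i}(a)\cdots\phi^{0}(a)| = |\ell_{i+1}'| = f_{2i+3}-1$ mentioned immediately afterward as a hint---so your argument is not so much a different route as a complete, careful unpacking of what the authors leave implicit. In particular, your observation that the $i=0$ case of $Z_i$ must be treated separately (since $f_{2i+3}+2 \le f_{2i+4}$ fails there, and indeed $X_0,Y_0$ are not even defined) is a nuance the paper passes over in silence.
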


\begin{figure}[t]
    \centerline{\includegraphics[width=1.0\linewidth]{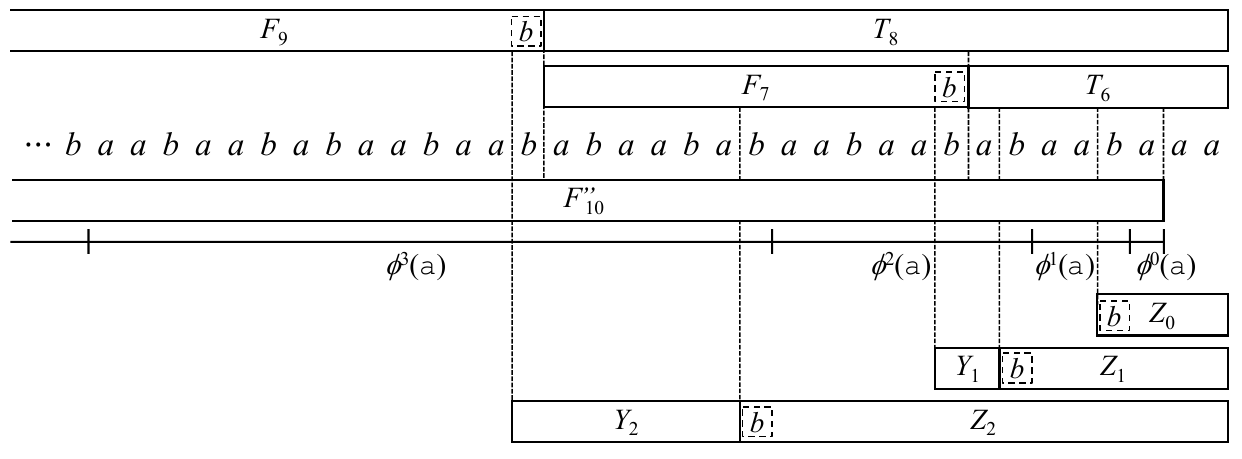}}
    \caption{
    Illustration of the characterization of the edited string $T_{2k}$ by the Lyndon factorization of $F''_{2k}$ (when $k=5$).
    }\label{fig:factorization}
\end{figure}

Notice that $|\phi^{i}(a) \cdots \phi^0(a)| = |\ell_{i+1}'| = f_{2i+3}-1$ holds.
See also Fig.~\ref{fig:factorization} for an illustration of the specific substrings.
Then Lemma~\ref{lem:significant-suf-order} implies the following corollary.

\begin{corollary} \label{coro:right-ref}
    For every integer $k \geq 2$ and $i$ satisfying $1 \leq i \leq k-2$,
    the previous suffix of $Z_i$ w.r.t. $T_{2k}$ is $Z_{i-1}$.
\end{corollary}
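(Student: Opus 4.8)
The plan is to read off the claim directly from the suffix-array description in Lemma~\ref{lem:significant-suf-order}. First I would record the two starting positions involved: by the definitions of $X_i, Y_i, Z_i$ together with Observation~\ref{obs:significant-substr}, the suffix $Z_i$ begins at position $f_{2k}-f_{2i+3}-1$ in $T_{2k}$, and since $|\phi^{i}(a)\cdots\phi^0(a)| = f_{2i+3}-1$, this position equals $f_{2k}-1-\sum_{j=0}^{i-1}|\phi^j(a)|$. Comparing with Lemma~\ref{lem:significant-suf-order}, this is exactly $\SA_{T_{2k}}[i+2]$ for $i$ in the stated range (so that $i+2 \le k+1$, i.e.\ $i \le k-1$; the corollary asks only for $i \le k-2$, so we are safely inside the range where the lemma applies, and we also need $i-1 \ge 0$, which holds since $i \ge 1$). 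Likewise $Z_{i-1}$ begins at position $f_{2k}-f_{2i+1}-1 = \SA_{T_{2k}}[i+1]$.

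Next I would invoke the definition of the previous suffix: the previous suffix of $Z_i$ is the one occupying the suffix-array entry immediately preceding that of $Z_i$, i.e.\ the suffix starting at $\SA_{T_{2k}}[i+1]$. Since we have just identified that position as the start of $Z_{i-1}$, the previous suffix of $Z_i$ is $Z_{i-1}$, which is the claim. The only subtlety is the boundary case $i = 1$, where the previous suffix should be the one at $\SA_{T_{2k}}[2]$; here $Z_0 = T_{2k}[f_{2k}-f_3-1..] = T_{2k}[f_{2k}-1..]$ (using $f_3 = 2$), which indeed starts at $\SA_{T_{2k}}[2] = f_{2k}-1$ by Lemma~\ref{lem:significant-suf-order}, so the formula $\SA_{T_{2k}}[i+2]$ with the convention $\sum_{j=0}^{-1} = 0$ remains consistent, and no special treatment is needed.

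There is no real obstacle here — the corollary is a bookkeeping consequence of the preceding lemma. The one place to be careful is matching indices: confirming that the offset $f_{2i+3}-1$ in the definition of $Z_i$ coincides with the partial sum $\sum_{j=0}^{i-1}|\phi^j(a)|$ appearing in Lemma~\ref{lem:significant-suf-order}, which follows from the identity $|\phi^i(a)\cdots\phi^0(a)| = |\ell_{i+1}'| = f_{2i+3}-1$ noted just before the corollary (itself a consequence of Lemma~\ref{lem:infinite_LF} and Lemma~\ref{lem:LF_shrinked-factor}). Once that identification is in place, the statement is immediate from the definition of the previous suffix together with the fact that consecutive suffix-array entries give lexicographically consecutive suffixes.
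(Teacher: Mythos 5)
Your overall strategy --- reading the corollary off from Lemma~\ref{lem:significant-suf-order} --- is the intended one, but the central identification in your proof is false, and it is not merely an index slip. By Observation~\ref{obs:significant-substr}, $Z_i = b\cdot\phi^{i}(a)\cdots\phi^0(a)\cdot aa$ \emph{begins with $b$}, whereas the suffixes occupying the first $k+1$ entries of $\SA_{T_{2k}}$ described in Lemma~\ref{lem:significant-suf-order} are the lexicographically smallest suffixes $a$, $aa$, and $\phi^{m}(a)\cdots\phi^0(a)\cdot aa$, all of which begin with $a$. So $Z_i$ is not the suffix at $\SA_{T_{2k}}[i+2]$; rather, $\SA_{T_{2k}}[i+3] = f_{2k}-f_{2i+3}$ is the starting position of $Z_i[2..]$, one position to the right of where $Z_i$ starts. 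Your arithmetic reflects this confusion: $Z_0 = T_{2k}[f_{2k}-f_3-1..]$ starts at $f_{2k}-3$ (it is the suffix $baaa$), not at $f_{2k}-1$; and the identity quoted before the corollary is $\sum_{j=0}^{i}|\phi^j(a)| = f_{2i+3}-1$, so $f_{2k}-1-\sum_{j=0}^{i-1}|\phi^j(a)| = f_{2k}-f_{2i+1}$, which is not the starting position $f_{2k}-f_{2i+3}-1$ of $Z_i$.

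The missing step is the bridge from the $a$-suffixes ranked by the lemma to the $b$-prefixed suffixes $Z_i$. Write $Z_{i-1}=b\cdot u$ and $Z_i=b\cdot v$, where $u$ and $v$ are the suffixes starting at $\SA_{T_{2k}}[i+2]$ and $\SA_{T_{2k}}[i+3]$, respectively; then $u\prec v$ gives $Z_{i-1}\prec Z_i$. Any suffix $w$ of $T_{2k}$ with $Z_{i-1}\prec w\prec Z_i$ must begin with $b$ (every suffix beginning with $a$ is smaller than both), so $w=b\cdot w'$ for the suffix $w'$ starting one position later, and stripping the common leading $b$ yields $u\prec w'\prec v$ --- impossible, because $u$ and $v$ occupy consecutive suffix-array entries. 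This reduction is the actual content of the corollary; without it, your argument identifies the wrong suffixes on both sides of the claim (it would place $Z_i$ and $Z_{i-1}$ at ranks $i+2$ and $i+1$, where in fact the suffixes $aa$ and $a\cdots$ of a completely different form live).
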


The largest suffix of $T_{2k}$, denoted by $\mathsf{maxsuf}$, is characterized in the following lemma.
We also use this suffix in the main lemma.
Intuitively, we show that every suffix of $T_{2k}$ that has an even inductive phrase as a prefix references a suffix that is the concatenation of a string and the maximum suffix (Lemma~\ref{lem:left-ref}).

\begin{lemma} \label{lem:maxsuf}
    The lexicographically largest suffix $\mathsf{maxsuf}$ of $T_{2k}$ is $T_{2k}[f_{2k-3}..]$.
\end{lemma}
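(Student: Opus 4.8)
The plan is to identify $\maxsuf$ by locating the lexicographically largest suffix of $F''_{2k}$ first, then accounting for the two trailing $a$'s. Since $\maxsuf$ must begin with the largest character $b$, and since $bb$ does not occur in any Fibonacci word (property~\ref{itFibAAA} of Lemma~\ref{lem:fib-properties}), the largest suffix of $T_{2k}$ is the one whose $b$ is followed by the lexicographically largest continuation among all positions of $b$. Concretely, I would compare suffixes starting at $b$-positions by their next few characters; the candidate is a $b$ followed by as many $a$'s as possible and then structured to be maximal. Property~\ref{itFibThreeOccs} says $F_{2k-3}$ occurs as a prefix of $F_{2k}$, at position $f_{2k-3}+1$ inside the first $F_{2k-2}$, and as a suffix of the second $F_{2k-2}$; I would use the occurrence structure to argue that the suffix starting at position $f_{2k-3}$ — i.e., the one just before the copy of $F_{2k-3}$ that begins at $f_{2k-3}+1$ — is the largest.

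The key steps, in order, would be: (i) observe $\maxsuf$ starts with $b$ and, because $bb$ is forbidden, with $ba$; (ii) among $b$-positions, characterize how $T_{2k}$ looks right after each $b$ using the recursive structure $F_{2k}=F_{2k-1}F_{2k-2}=F_{2k-1}F_{2k-3}F_{2k-4}$ and the edit that turned the last $b$ into $a$ (so $T_{2k}=F''_{2k}aa$); (iii) pin down that the maximal continuation after a $b$ is achieved precisely at position $f_{2k-3}$, by comparing with the known Lyndon/border structure — in particular, $T_{2k}[f_{2k-3}..]$ begins with a $b$ immediately preceding the prefix-copy of $F_{2k-3}$ described by property~\ref{itFibThreeOccs}, which makes its tail agree with a prefix of $F_{2k}$ shifted appropriately and hence large; (iv) rule out all other $b$-positions by a direct comparison, using that $aaa$ only occurs at the edited end (so competing suffixes cannot ``cheat'' by having extra $a$'s) and that $F_{2k-1}$ occurs only as a prefix (property~\ref{itFibThreeOccs} again), which prevents any interior $b$-position from matching the prefix of $T_{2k}[f_{2k-3}..]$ too long; (v) conclude that appending $aa$ at the end does not disturb this, since the maximal suffix does not end inside that appended block (it starts well before position $f_{2k}-2$ for $k\geq 6$), giving $\maxsuf=T_{2k}[f_{2k-3}..]$.

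I expect the main obstacle to be step~(iii)–(iv): proving that $f_{2k-3}$, and not some other $b$-position, yields the largest suffix. The delicate point is that many $b$-positions are followed by long stretches matching prefixes of Fibonacci words, so one must compare these tails carefully and find the exact mismatch. The cleanest route is probably to use property~\ref{itFibThreeOccs} to enumerate the three occurrences of $F_{2k-3}$ in $F_{2k}$ (equivalently, in $T_{2k}$ away from the edit), note that each contributes one natural ``large'' $b$-position (the $b$ just before an occurrence of the prefix $F_{2k-3}$), and then break ties between these few candidates using the next-level recursion $F_{2k-3}=F_{2k-4}F_{2k-5}$ together with the border length $f_{2k-5}$ from property~\ref{itFibPropBorder}; the edit at the right end and the absence of $aaa$/$bb$ then force the comparison in favor of position $f_{2k-3}$. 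Routine verification that the chosen suffix indeed survives the $aa$-append and that the index $f_{2k-3}$ is in range for $k\geq 6$ finishes the argument.
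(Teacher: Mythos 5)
Your proposal is a plan rather than a proof: the entire content of the lemma is concentrated in your steps (iii)--(iv) --- showing that among all $b$-positions, position $f_{2k-3}$ yields the largest continuation --- and these are exactly the steps you leave unexecuted (``I would compare\dots'', ``the cleanest route is probably\dots''). No actual comparison between candidate suffixes is carried out, and your proposed tie-breaking tool (enumerating occurrences of $F_{2k-3}$ and using the border length $f_{2k-5}$) is not developed far enough to see that it would terminate in favor of $f_{2k-3}$. There is also a conceptual hazard in your opening framing: in the \emph{unedited} word $F_{2k}$ the lexicographically largest suffix starts at position $f_{2k-1}$, not $f_{2k-3}$ (the latter is only the second largest), so ``locate the largest suffix of $F''_{2k}$ first and then account for the trailing $aa$'' risks landing on the wrong position; it is precisely the edit at the right end that promotes $T_{2k}[f_{2k-3}..]$ above $T_{2k}[f_{2k-1}..]$. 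Relatedly, your step (v) dismisses the effect of the modified tail too quickly --- the comparison between two long suffixes can be decided only at the last two positions, which is the delicate case.

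The paper avoids re-deriving the suffix order from scratch by citing the known fact (from the suffix-array structure of Fibonacci words) that $\SA_{F_{2k}}[f_{2k}]=f_{2k-1}$ and $\SA_{F_{2k}}[f_{2k}-1]=f_{2k-3}$. Given this, only two things remain: (a) the edit swaps these top two suffixes, since their longest common prefix has length $f_{2k-2}$ and the edited character falls inside $T_{2k}[f_{2k-1}..]$; and (b) no third suffix $T_{2k}[i..]$ can overtake $T_{2k}[f_{2k-3}..]$, because if the first mismatch $j$ occurs before position $f_{2k}-1$ the same inequality would already hold in $F_{2k}$ (contradicting the cited ranking), and the boundary cases $\max(i+j,f_{2k-3}+j)\ge f_{2k}-1$ are excluded using the $aaa$ delimiter and the occurrence structure of $F_{2k-3}$. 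If you want to salvage your approach, the shortest repair is to import that known ranking of the top two suffixes of $F_{2k}$ and then run the paper's perturbation argument; a genuinely from-first-principles comparison of all $b$-positions would essentially amount to reproving that ranking.
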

\begin{proof}
    It is known that the lexicographically largest suffix of $F_{2k}$ is $F_{2k}[f_{2k-1}..]$
    and the lexicographically second largest suffix of $F_{2k}$ is $F_{2k}[f_{2k-3}..]$ (shown in \cite{KopplI15}).
    Namely, $\SA_{F_{2k}}[f_{2k}] = f_{2k-1}, \SA_{F_{2k}}[f_{2k}-1] = f_{2k-3}$.
    Due to the edit operation, $T_{2k}[f_{2k-3}..] \succ T_{2k}[f_{2k-1}..]$
    and the length of the longest prefix of these suffixes is $f_{2k-2}$ (see Fig.~\ref{fig:maxsuf}).
    Assume on the contrary that there is a suffix $T_{2k}[i..]$ that is lexicographically larger than $T_{2k}[f_{2k-3}..]$.
    With Property~\ref{itFibAAA} of Lemma~\ref{lem:fib-properties}, the suffix $aaa$ of $T_{2k}$ acts as a unique delimiter such that
    the suffix $T_{2k}[f_{2k-3}..]$ cannot be a prefix of any other suffixes of $T_{2k}$.
    Let $j$ be the smallest positive integer such that $T_{2k}[f_{2k-3}..f_{2k-3}+j] = a$ and $T_{2k}[i..i+j] = b$.
    If $\max(i+j, f_{2k-3}+j) < f_{2k}-1$, then $F_{2k}[f_{2k-3}..] \prec F_{2k}[i..]$ holds.
    This contradicts the fact that the lexicographically second largest suffix of $F_{2k}$ is $F_{2k}[f_{2k-3}..]$.
    We can observe that there is no $j$ with $\max(i+j, f_{2k-3}+j) \geq f_{2k}-1$ and $i>f_{2k-3}$
    since $T_{2k}[f_{2k}-1..] = aa$ does not contain $b$.
    If $\max(i+j, f_{2k-3}+j) \geq f_{2k}-1$ and $i < f_{2k-3}$, $F_{2k-3}$ has a beginning position $d$ of an occurrence satisfying $2 \leq d \leq f_{2k-3}$.
    This contradicts Lemma~\ref{lem:fib-properties}.
    Therefore, the lemma holds.
\end{proof}

\begin{figure}[t]
    \centerline{\includegraphics[width=1.0\linewidth]{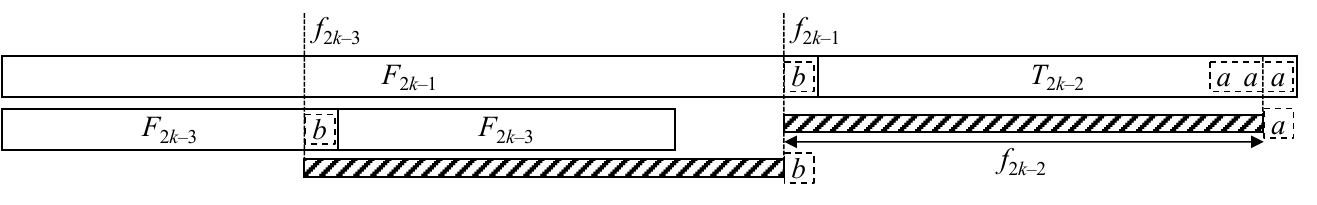}}
    \caption{%
        Illustration of $T_{2k}$ for proof of Lemma~\ref{lem:maxsuf}.
    }\label{fig:maxsuf}
\end{figure}

To prove Lemma~\ref{lem:left-ref}, we also introduce the following corollary.

\begin{corollary} [of Lemma~\ref{lem:fib-properties}] \label{coro:three-occ}
    For every $i$ satisfying $i \geq 6$,
    $T_{i}$ has exactly two occurrences of $F_{i-2}$.
    Moreover, $T_{i}$ can be written as $T_{i} = F_{i-2} \cdot F_{i-2} \cdot w$ for some string $w$.
\end{corollary}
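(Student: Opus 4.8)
The plan is to realize $F_{i-2}F_{i-2}$ as a proper prefix of $T_i$ and then to show that the single edit destroys exactly one of the three occurrences of $F_{i-2}$ that $F_i$ has by Property~\ref{itFibThreeOccs} of Lemma~\ref{lem:fib-properties}, and creates no new one.

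First I would locate the double prefix. Writing $F_i = F_{i-1}F_{i-2} = F_{i-2}F_{i-3}F_{i-2}$ and using that $F_{i-4}$ is a prefix of $F_{i-3}$ (valid for $i\ge 6$), one gets that $F_{i-2}$ is a prefix of $F_{i-3}F_{i-2}$, hence $F_{i-2}F_{i-2}$ is a prefix of $F_i$ of length $2f_{i-2} = f_i - f_{i-3} \le f_i - 2$. The rightmost $b$ of $F_i$ lies at position $f_i - 1$ if $i$ is even and at position $f_i$ if $i$ is odd (Property~\ref{itFibAAA} plus the fact that $F_i$ ends with $aba$ for even $i$ and with $aab$ for odd $i$), so the edited position $p$ satisfies $p \ge f_i - 1 > 2f_{i-2}$ and the prefix $F_{i-2}F_{i-2}$ is untouched in $T_i$. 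This already settles the ``moreover'' part, with $w = T_i[2f_{i-2}+1..f_i] \ne \varepsilon$, and exhibits occurrences of $F_{i-2}$ in $T_i$ at positions $1$ and $f_{i-2}+1$; since $F_{i-2}$ is primitive (Property~\ref{itFibPrimitivity}), Lemma~\ref{lem:primitive-square} further tells us that these are the only occurrences inside the prefix $T_i[1..2f_{i-2}]$.

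Next I would rule out a third occurrence. By Property~\ref{itFibThreeOccs}, $F_i$ has exactly three occurrences of $F_{i-2}$, at positions $1$, $f_{i-2}+1$, and $f_{i-1}+1$. As $T_i$ agrees with $F_i$ except at position $p$, any occurrence of $F_{i-2}$ in $T_i$ that avoids $p$ is also an occurrence in $F_i$; the first two of these end at $f_{i-2}$ and $2f_{i-2}$, both $< p$, and hence carry over to $T_i$, whereas the suffix occurrence at $f_{i-1}+1$ contains $p$ and is destroyed (that copy of $F_{i-2}$ carries the rightmost $b$ of $F_i$ as its own rightmost $b$, so overwriting it by $a$ yields a string $\ne F_{i-2}$). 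Hence any hypothetical third occurrence of $F_{i-2}$ in $T_i$ must contain $p$, so its starting position $q$ satisfies $q \ge p - f_{i-2} + 1 \ge f_{i-1}$ and $q \le f_i - f_{i-2} + 1 = f_{i-1} + 1$, i.e.\ $q \in \{f_{i-1}, f_{i-1}+1\}$. The value $q = f_{i-1}+1$ is precisely the destroyed suffix occurrence. For $q = f_{i-1}$, which forces $p = f_i - 1$ and hence $i$ even, I would compare trailing symbols: $T_i[f_{i-1}..f_i-1]$ would have to equal $F_{i-2}$, yet its next-to-last symbol is $T_i[f_i-2] = F_i[f_i-2] = a$ (as $F_i$ ends in $aba$), while $F_{i-2}[f_{i-2}-1] = b$ (as $F_{i-2}$, with $i-2$ even, ends in $ba$), a contradiction. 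Therefore $F_{i-2}$ occurs exactly twice in $T_i$.

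The routine ingredients are the length identities built on $f_i = f_{i-1} + f_{i-2}$ together with Properties~\ref{itFibThreeOccs}, \ref{itFibAAA}, and~\ref{itFibPrimitivity} of Lemma~\ref{lem:fib-properties} and Lemma~\ref{lem:primitive-square}. The main obstacle is the bookkeeping around the edit: one must pin down the exact location of the rightmost $b$ of $F_i$ for each parity of $i$, verify that the two surviving occurrences of $F_{i-2}$ lie strictly to the left of $p$, and notice that the only positions at which the edit could forge a new occurrence are $f_{i-1}$ and $f_{i-1}+1$. Ruling out $q = f_{i-1}$ in particular is not amenable to a soft periodicity/overlap argument and requires inspecting the precise trailing symbols of $F_i$ and $F_{i-2}$.
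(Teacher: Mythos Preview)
Your argument is correct and is precisely the derivation the paper leaves implicit by labelling the statement a corollary of Lemma~\ref{lem:fib-properties}: the three occurrences of $F_{i-2}$ from Property~\ref{itFibThreeOccs} give the prefix $F_{i-2}F_{i-2}$ (which sits strictly before the edited position since $f_i-2f_{i-2}=f_{i-3}\ge 2$), the suffix occurrence is destroyed because it carries the rightmost~$b$, and your window argument $q\in\{f_{i-1},f_{i-1}+1\}$ together with the last-two-characters check rules out any new occurrence. Two minor remarks: the appeal to Lemma~\ref{lem:primitive-square} is redundant, since Property~\ref{itFibThreeOccs} already bounds the total number of occurrences in $F_i$; and for $q=f_{i-1}$ (even~$i$) the mismatch can be seen even more directly at the \emph{first} position, where $T_i[f_{i-1}]=F_{i-1}[f_{i-1}]=b$ while $F_{i-2}[1]=a$.
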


\begin{lemma} \label{lem:left-ref}
    For every $k \geq 2$ and $i$ satisfying $1 \leq i \leq k-3$,
    the previous suffix of $X_i$ w.r.t. $T_{2k}$ is $Y_i \cdot a \cdot \mathsf{maxsuf}$.
\end{lemma}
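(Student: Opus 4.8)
The goal is to identify the previous suffix (lexicographic predecessor among all suffixes of $T_{2k}$) of $X_i = T_{2k}[f_{2k}-f_{2i+4}..]$. By Observation~\ref{obs:significant-substr} we have $X_i = b\cdot T_{2i+4}$ and $Y_i = b\cdot T''_{2i+2}$, so $Y_i\cdot a$ is a prefix of $X_i$ (since $T_{2i+4}$ begins with $T_{2i+2}$, and more to the point $Y_i a = b\,T''_{2i+2}a$ is itself a prefix of $X_i$ by the recursive structure of Fibonacci words). The candidate predecessor $Y_i\cdot a\cdot\maxsuf$ therefore shares the prefix $Y_i a$ with $X_i$, and to compare it with $X_i$ I need to compare what follows. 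Using Corollary~\ref{coro:three-occ} applied to $T_{2i+4}$, write $T_{2i+4}=F_{2i+2}F_{2i+2}w$; combined with $\maxsuf = T_{2k}[f_{2k-3}..] = T_{2k}[f_{2k}-f_{2k-1}..]$ from Lemma~\ref{lem:maxsuf}, I can pin down exactly the first mismatch between $X_i$ and $Y_i a\,\maxsuf$ and confirm $X_i \succ Y_i a\,\maxsuf$, i.e.\ the candidate is indeed smaller than $X_i$.

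The main work is showing there is \emph{no} suffix strictly between $Y_i a\,\maxsuf$ and $X_i$. So suppose some suffix $y$ of $T_{2k}$ satisfies $Y_i a\,\maxsuf \prec y \prec X_i$. First I argue $y$ must have $Y_i a = b\,T''_{2i+2}\,a$ as a prefix: since $X_i$ and $Y_i a\,\maxsuf$ agree on the prefix $Y_i a$ and then $X_i$ continues in a way that is lexicographically above $\maxsuf$ in that position, anything strictly between them agrees with both on $Y_i a$ (this is where I must check the branching carefully — that the "gap" between the two bounding suffixes lies entirely below the $Y_i a$-agreement point). So $y = Y_i a\cdot z$ for some proper suffix $z$ of $T_{2k}$, and now I need $\maxsuf \prec z \prec (\text{the tail of }X_i\text{ after }Y_i a)$. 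The upper bound forces $z$ to begin like the tail of $T_{2i+4}$; the lower bound $\maxsuf\prec z$ together with the maximality of $\maxsuf$ among \emph{suffixes of $T_{2k}$} is the crucial squeeze — but $z$ itself is a suffix of $T_{2k}$, so $z \preceq \maxsuf$, contradicting $\maxsuf \prec z$, \emph{unless} $z$ is not a suffix of $T_{2k}$. The subtlety: $z$ is obtained by chopping $Y_i a$ off the front of $y$, and $y$ is a suffix, so $z$ genuinely is a suffix of $T_{2k}$. Hence the contradiction is immediate once the "prefix $Y_i a$" step is established.

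Thus the plan is: (i) verify $Y_i a$ is a common prefix of $X_i$ and of $Y_i a\,\maxsuf$ and locate the first mismatch using the Fibonacci recursion and Corollary~\ref{coro:three-occ}, concluding $Y_i a\,\maxsuf \prec X_i$; (ii) show any suffix $y$ with $Y_i a\,\maxsuf \prec y \prec X_i$ must have $Y_i a$ as a prefix, by analysing the character immediately after the common prefix (this is the delicate branching argument, and I expect it to be the main obstacle — I will likely need an occurrence/overlap argument about $F_{2i+2}$ inside $T_{2i+4}$ à la Lemma~\ref{lem:FibFourOcc} and the primitivity of Fibonacci words to rule out spurious alignments); (iii) peel off $Y_i a$ to get a suffix $z$ of $T_{2k}$ with $z \succ \maxsuf$, contradicting Lemma~\ref{lem:maxsuf}. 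Combined with Corollary~\ref{coro:right-ref} (which handled the $Z_i$ references) and the first/short-phrase analysis, this completes the description of all Type~(2) phrases and hence the phrase-length list claimed for $T_{2k}$.
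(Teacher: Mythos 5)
Your plan rests on a false structural claim: you assert that $Y_i\cdot a$ is a prefix of $X_i$. It is not. By Observation~\ref{obs:significant-substr}, $X_i = Y_i\cdot Z_i$ with $Z_i = b\cdot\phi^i(a)\cdots\phi^0(a)\cdot aa$, so the character of $X_i$ immediately after $Y_i$ is $b$ (concretely, $X_i = b\cdot T_{2i+4} = b\cdot F_{2i+2}\cdots = b\cdot F''_{2i+2}\cdot ba\cdots = Y_i\cdot ba\cdots$). The longest common prefix of $X_i$ and the candidate $Y_i\cdot a\cdot\maxsuf$ is exactly $Y_i$, and they mismatch right there with $b$ versus $a$; that mismatch is what gives $Y_i\cdot a\cdot\maxsuf\prec X_i$ (not a comparison deep inside $\maxsuf$, as you propose in step~(i)). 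Consequently, a suffix $y$ with $Y_i\cdot a\cdot\maxsuf\prec y\prec X_i$ is only forced to have $Y_i$ as a prefix, after which it branches: if $y=Y_i\cdot a\cdot z$ your squeeze against Lemma~\ref{lem:maxsuf} works (this is exactly half of the paper's argument, modulo also verifying that $Y_i\cdot a\cdot\maxsuf$ really occurs as a suffix of $T_{2k}$, which you assume without proof); but if $y=Y_i\cdot b\cdot z$, your argument says nothing, and this is where the actual work lies.

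To close that branch one must show that $X_i$ is the lexicographically \emph{smallest} suffix of $T_{2k}$ with prefix $Y_i\cdot b$. The paper does this by noting (via Property~\ref{itFibAAA} of Lemma~\ref{lem:fib-properties}) that every occurrence of $Y_i\cdot b$ extends to $Y_i\cdot ba = b\cdot F_{2i+2}$, counting occurrences of $F_{2i+2}$ in $b\cdot T_{2i+4}$ via Corollary~\ref{coro:three-occ} to conclude the rightmost occurrence of $Y_i\cdot ba$ in $T_{2k}$ is the prefix of $X_i$, and then invoking the suffix-array characterization of Lemma~\ref{lem:significant-suf-order} to compare tails. You gesture at an occurrence/overlap argument in step~(ii), but you aim it at the wrong target (forcing the prefix $Y_i\cdot a$, which is impossible since $X_i$ itself does not have that prefix). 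As written, the proposal would not yield a correct proof.
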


\begin{proof}
    Firstly, we show that $Y_i \cdot a \cdot \mathsf{maxsuf}$ is a suffix of $T_{2k}$.
    It is clear from definitions and properties of Lemma~\ref{lem:fib-properties} that $T_{2k}$ can be written as
    $T_{2k} = F'_{2k-3} \cdot \mathsf{maxsuf} = F''_{2k-3} \cdot a \cdot \mathsf{maxsuf}
        = F_{2k-4} \cdot F''_{2k-5} \cdot a \cdot \mathsf{maxsuf} = F_{2k-5} \cdot F''_{2k-4} \cdot a \cdot \mathsf{maxsuf}$.
    Since $Y_{k-3} = b \cdot T''_{2k-4} = b \cdot F''_{2k-4}$ and the last character of $F_{2k-5}$ is $b$ (from $2k-5$ is odd),
    $Y_{k-3} \cdot a \cdot \mathsf{maxsuf}$ is a suffix of $T_{2k}$.
    Moreover, $F_{2i+2}$ is a suffix of $F_{2k-4}$ for every $i$ that satisfying $1 \leq i \leq k-3$.
    This implies that $Y_{i} \cdot a \cdot \mathsf{maxsuf}$ is a suffix of $T_{2k}$ for every $i$ that satisfying $1 \leq i \leq k-3$.

    Now we go back to our main proof of the lemma.
    Since $X_i = Y_i \cdot b \cdot \phi^{i}(a) \cdots \phi^0(a) \cdot aa$, $Y_i \cdot a \cdot \maxsuf \prec X_i$ holds.
    Moreover, $Y_i \cdot a \cdot \mathsf{maxsuf}$ is the lexicographically largest suffix of $T_{2k}$ that has $Y_i \cdot a$ as a prefix.
    Hence, it is sufficient to prove that $X_i$ is the lexicographically smallest suffix of $T_{2k}$ that has $Y_i \cdot b$ as a prefix.
    By Property~\ref{itFibAAA} of Lemma~\ref{lem:fib-properties}, no $bb$ occurs in $T_{2k}$, so
    every occurrence of $Y_i \cdot b$ is also an occurrence of $Y_i \cdot ba$.
    We consider occurrences of $Y_i \cdot ba$ in a suffix $X_i$.
    By Observation~\ref{obs:significant-substr} and Corollary~\ref{coro:three-occ},
    $(Y_i \cdot ba)[2..] (= F_{2i+2})$ has exactly two occurrences in $X_i (= b \cdot T_{2i+4})$.
    At the first occurrence, $(Y_i \cdot ba)[2..]$ is preceded by $b$.
    At the second occurrence, $(Y_i \cdot ba)[2..]$ is preceded by $a$.
    Hence, the rightmost occurrence of $Y_i \cdot ba$ in $T_{2k}$ is at the prefix of $X_i$.
    By combining with Lemma~\ref{lem:significant-suf-order}, 
    we can see that there is no suffix $w$ such that $Y_i \cdot ba \cdot w \prec X_i$ holds.
    Thus, $X_i$ is the lexicographically smallest suffix of $T_{2k}$ that has $Y_i \cdot b$ as a prefix.
    Therefore, the lemma holds.
\end{proof}

\subsection*{Lex-parse of $T_{2k}$}
Now we can explain the lex-parse of $T_{2k}$.
Recall that the length of the first phrase is $f_{2k-1}-1$ and the last three phrases are $b, a^2, a$.
Hence, by Lemma~\ref{lem:left-ref}, the second phrase is a prefix $Y_{k-3}$ of $X_{k-3} (= b \cdot T_{2k-2})$.
Since the remaining suffix is $Z_{k-3}$, the next phrase is a prefix $Z_{k-4}[..|Z_{k-4}|-1]$ of $Z_{k-3}$ by Corollary~\ref{coro:right-ref}.
By applying this argument repeatedly,
we can finally obtain the lex-parse of size $2k-2$ of $T_{2k}$ as follows:
\[
    F_{2k}[..f_{2k-1}-1], (Y_{k-3}, Z_{k-4}[..|Z_{k-4}|-1]), \ldots, (Y_{1}, Z_{0}[..|Z_{0}|-1]), b, a^2, a.
\]

Furthermore, we can easily obtain $v(F''_{2k} \cdot a) = 2k-2$ for a delete operation.
Consider the case for the insertion operation such that $\$$ (which is the smallest character)
is inserted to the preceding position of the last $b$.
We can then show that the lex-parse is of size $2k+1$ by a similar argument as follows:
\[
    F_{2k}[1..f_{2k-1}-2], (a \cdot Y_{k-3}, Z_{k-4}[..|Z_{k-4}|-2]), \ldots, (a \cdot Y_{1}, Z_{0}[..|Z_{0}|-2]), a, ba, \$, b, a.
\]
\section{Lower bounds for Alphabet-Ordering}\label{secAOsensitivity}
In this section,
we give tight lower bound $\AOS(v, n) \in \Omega(\log n)$ with the family of Fibonacci words.
Since $b(F_k) = 4$ for $k \ge 5$~\cite[Lemma 35]{NavarroOP21} also holds, our lower bound is tight for any $n \in \{f_i\}_{i}$.
More precisely, we prove the following theorem that determines the number of lex-parse phrases of the Fibonacci words on any alphabet ordering.
\begin{theorem}\label{thmAOLowerBound}
    For any $k \geq 6$,
    \begin{equation*}
        v(F_k,\prec) =
        \begin{cases}
            \lceil \frac{k}{2} \rceil + 1 & \text{(a) if $k$ is odd and $a \prec b$,}  \quad (Lemma~\ref{lemLogFactorsOddN})   \\
            4                             & \text{(b) if $k$ is odd and $b \prec a$,}  \quad (Lemma~\ref{lemFourFactorsOddN})  \\
            4                             & \text{(c) if $k$ is even and $a \prec b$,} \quad (Lemma~\ref{lemFourFactorsEvenN}) \\
            \lceil \frac{k}{2} \rceil + 1 & \text{(d) if $k$ is even and $b \prec a$.} \quad (Lemma~\ref{lemLogFactorsEvenN})
        \end{cases}
    \end{equation*}
\end{theorem}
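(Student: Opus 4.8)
The plan is to split into the four cases according to the parity of $k$ and the alphabet ordering, proving each as a separate lemma, and to exploit a symmetry that collapses the work. For the two ``easy'' cases (b) and (c)—four phrases—I would largely invoke the known result of Navarro et al.~that lex-parse of a Fibonacci word has constant size under the natural ordering, and re-derive it here via Lemma~\ref{lem:fib-properties}: under the ordering in which the suffix $\infF$-prefix structure makes $F_{k-1}$ (resp.\ its reverse-tail variant via properties 3--4) the lexicographic predecessor, the first phrase captures a border of length $f_{k-2}$ or $f_{k-1}$, leaving a short suffix that is parsed into three more explicit/short phrases exactly as in the $T_{2k}$ analysis of Section~4 (the $aaa$/$bb$-delimiter argument). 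The key observation tying (b) to (c), and (a) to (d), is that reversing the alphabet order on a Fibonacci word is essentially the same as passing between $F_k$ and $G_k=F_{k-2}F_{k-1}$ up to the two-character reversal of the suffix (properties 3 and 4 of Lemma~\ref{lem:fib-properties}), and $G_k$ is a Fibonacci word of the opposite parity in the relevant sense; so cases (a)/(d) and (b)/(c) are mirror images and only one of each pair needs a full argument.

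For the ``hard'' cases (a) and (d)—$\lceil k/2\rceil+1$ phrases—I would mimic the structure of the proof of Theorem~\ref{thm:comp-sensitivity_LB} almost verbatim, but now on $F_k$ itself (no edit), with the adversarial ordering playing the role the single substitution played there. First I would compute the Lyndon factorization of $F_k$ under the relevant order using Lemma~\ref{lem:infinite_LF} and Lemma~\ref{lem:LF_morphism}: we get $\LF(F_k)=\ell_1,\dots,\ell_{m}$ (possibly with a trailing short factor), where the $\ell_i=\phi^{i-1}(ab)$ have exponentially growing lengths $f_{2i+1}$, so there are $\Theta(k)$ of them—this is the source of the logarithmic count. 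Next I would identify the first $\Theta(k)$ entries of the suffix array, exactly as in Lemmas~\ref{lem:significant-suf-orderA}--\ref{lem:significant-suf-order}: the significant suffixes $\phi^{i-1}(a)\cdots\phi^0(a)$ form an increasing chain (by Lemma~\ref{lem:significant-suffix}), pinning down the small end of $\SA_{F_k}$, and the maximum suffix is $F_k[f_{k-2}..]$ by the analogue of Lemma~\ref{lem:maxsuf}. From these two ends I would run the same inductive phrase argument: the odd-indexed phrases reference the ``previous significant suffix'' (the analogue of Corollary~\ref{coro:right-ref}), the even-indexed phrases reference a string ending in $\maxsuf$ (the analogue of Lemma~\ref{lem:left-ref}), the phrase lengths telescope through the Fibonacci numbers, and after $\Theta(k)$ alternations we are left with a short tail parsed into a constant number of explicit phrases—giving the $\lceil k/2\rceil+1$ total. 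Matching the ceiling constant exactly will require carefully tracking whether the last inductive phrase bottoms out at $\phi^0$ or $\phi^1$, which differs with the parity of $k$; that is where cases (a) and (d) genuinely diverge in the bookkeeping.

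The main obstacle I anticipate is case (a)/(d)—specifically, establishing the two-sided referencing structure of the inductive phrases under the ``wrong'' alphabet ordering without the convenient $aaa$ delimiter that $T_{2k}$ supplied at its right end. In Section~4 the substituted $F_{2k}$ ended in $baaa$, which made the three smallest suffixes and the short tail phrases essentially free; for an unedited Fibonacci word the right end is $\dots F_2 = \dots a$ or $\dots F_1=\dots b$ with no triple letter, so I expect to need a slightly more delicate uniqueness-of-occurrence argument (leaning on Lemma~\ref{lem:primitive-square}, primitivity from property~\ref{itFibPrimitivity}, and the three-occurrence property~\ref{itFibThreeOccs}) to show that each claimed previous suffix really is the lexicographic predecessor and that no ``shortcut'' suffix sneaks in between. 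Once that uniqueness is nailed down, the phrase-length recursion and the final count follow by the same telescoping as in Section~4, and the matching upper bound of Theorem~\ref{thm:sensitivity_UB} together with $b(F_k)=4$ shows the bound is tight.
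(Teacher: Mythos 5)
Your case split, the treatment of the constant cases (b)/(c), and the observation that a parity/order symmetry halves the work all match the paper (which indeed proves only Lemmas~\ref{lemFourFactorsEvenN} and~\ref{lemLogFactorsOddN} in full and dispatches~\ref{lemFourFactorsOddN} and~\ref{lemLogFactorsEvenN} as mirror images). The problem is in cases (a)/(d), where the entire content of the theorem lives. You explicitly defer the one step that carries all the weight --- ``establishing \ldots that each claimed previous suffix really is the lexicographic predecessor and that no shortcut suffix sneaks in between,'' to be ``nailed down'' later. That is not a detail: it is the paper's Lemma~\ref{lem:ao-critical-suffix}, whose proof is a nontrivial induction with two cases (one killed by primitivity of $F_{j+2}$ via Lemma~\ref{lem:primitive-square}, the other by a telescoping factorization $\sufp{j+2} = \suf{j+2}\cdot\suf{j}\cdots\suf{j+2-2\ell}\cdot\sufp{j-2\ell}$ that pushes the hypothetical intermediate suffix down to a smaller index and contradicts the induction hypothesis). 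Without this, you have a list of candidate phrases but no proof that lex-parse actually produces them.

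Moreover, the template you propose to import from Section~4 does not fit the object you are parsing. The lex-parse of $F_k$ under the adversarial order (Lemma~\ref{lemLogFactorsOddN}) is $F_k[1..f_{k-1}-2],\, baF_{k-4},\, F_{k-4}, F_{k-6}, \ldots, F_5,\, a, a, b$: after the first two phrases, \emph{all} inductive phrases are of a single kind, each equal to $\suf{i}=F_{i+1}$ and each referencing the shorter suffix $\suf{i}$ sitting to its \emph{right} inside the remaining suffix $\sufp{i}=\suf{i}\cdot\sufp{i-2}$. There is no alternation of odd/even phrase types, no reference anchored at $\maxsuf$, and the Lyndon factorization, significant suffixes, and Lemmas~\ref{lem:significant-suf-orderA}--\ref{lem:left-ref} play no role in the paper's alphabet-ordering argument --- they are specific to the edited string $T_{2k}$, whose $baaa$ tail creates the two-sided structure. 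So if you pursued your plan you would be hunting for a phrase structure that is not there; the correct route is the purely one-sided recursion on the suffixes $\suf{i}, \sufp{i}$ (Lemma~\ref{lem:suf-suf}) together with the predecessor identification of Lemma~\ref{lem:ao-critical-suffix}, plus a separate short argument (via Lemma~\ref{lem:FibFourOcc}) for the second phrase $baF_{k-4}$.
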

Although the results for $a$ smaller than $b$ have been proven by Navarro et al.~\cite{NavarroOP21},
we here give alternative proofs for this case that leads us to the proof for the case when $b$ is smaller than $a$.

\subsection{Lex-parse with constant number of phrases}
We start with Cases~(b) and (c).
Since $F_k[f_{k}-1..f_k] = ba$ for even $k$ and $F_k[f_{k}-1..f_k] = ab$ for odd $k$,
we already know in the cases (b) and (c) that each of the last two characters forms an explicit phrase.
It is left to analyze the non-explicit phrases, where we start with the first phrase.
By Property~\ref{itFibPropBorder} of Lemma~\ref{lem:fib-properties}, $F_k$ has the border $F_{k-2}$ and thus $F_{k-2} = F_k[f_{k-1}..] \prec F_k$ could be used as the reference for the first phrase, given its length is $f_{k-2}$.
To be an eligible reference for lex-parse, we need to check that $F_k[f_{k-1}..]$ is the previous suffix of $F_k[1..]$.
However, Property~\ref{itFibThreeOccs} of Lemma~\ref{lem:fib-properties} states that there is exactly one other occurrence of $F_{k-2}$ in $F_k$, starting at $f_{k-2}+1$.
The proof of the following lemma shows that the suffix starting with that occurrence is lexicographically larger than $F_k$, and thus indeed the first phrase has length $f_{k-2}$,
and the second phrase starting with that occurrence can make use of $F_k$ as a reference for a phrase that just ends before the two explicit phrases at the end.

\begin{figure}[t]
    \centering
    \includegraphics[width=0.8\textwidth]{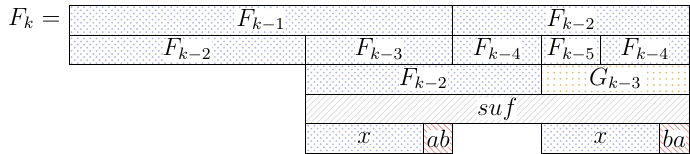}
    \caption{Illustration of the proof of Lemma~\ref{lemFourFactorsEvenN} for $k$ even. If $k$ is odd, the blocks $ab$ and $ba$ are swapped (this gives the setting Lemma~\ref{lemFourFactorsOddN}).}
    \label{figFourFactorsEvenN}
\end{figure}

\begin{lemma}\label{lemFourFactorsEvenN}
    Assume that $k \geq 6$ is even and $a \prec b$ (Case (c) of Thm.~\ref{thmAOLowerBound}).
    Then the lex-parse of $F_k$ is $F_{k-2}$,
    $F_k[f_{k-2}+1..f_k-2]$, $b$, $a$.
\end{lemma}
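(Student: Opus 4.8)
The plan is to verify that the lex-parse of $F_k$ (for even $k \geq 6$, $a \prec b$) consists of exactly the four phrases $F_{k-2}$, $F_k[f_{k-2}+1..f_k-2]$, $b$, $a$ by analyzing the four phrase boundaries in order. First, I would establish the two explicit phrases at the end: since $F_k[f_k-1..f_k] = ba$ for even $k$, and since $\SA_{F_k}[1] = f_k$ (the single trailing $a$ is the smallest suffix, as $b$ is strictly larger) while $F_k[f_k-1..] = ba$ is the smallest suffix beginning with $b$ (using Property~\ref{itFibAAA} of Lemma~\ref{lem:fib-properties}, no $bb$ occurs, so any suffix starting with $b$ continues with $a$, and $ba$ is a proper prefix of every longer such suffix), both of these single characters are explicit phrases.

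Next I would handle the first phrase. By Property~\ref{itFibPropBorder} of Lemma~\ref{lem:fib-properties}, $F_{k-2}$ is the longest border of $F_k$, so $F_k[f_{k-1}..] = F_{k-2}$ is a candidate reference suffix with $\lcp$ equal to $f_{k-2}$ with $F_k[1..]$; moreover $F_k[f_{k-1}..] \prec F_k[1..]$ since $F_k[f_{k-1}..]$ is a proper prefix of $F_k[1..]$. The key point is to show $F_k[f_{k-1}..]$ is in fact the \emph{previous} suffix of $F_k[1..]$, i.e.\ no suffix lies strictly between them. Any such intermediate suffix would have to share the prefix $F_{k-2}$ with both, hence would be an occurrence of $F_{k-2}$; by Property~\ref{itFibThreeOccs}, the only other occurrence of $F_{k-2}$ in $F_k$ starts at position $f_{k-2}+1$, and the suffix $F_k[f_{k-2}+1..]$ is obtained from the suffix $F_k[f_{k-1}..] = F_{k-2}$ by replacing its suffix (beyond the shared prefix) with something beginning with the character following the second copy of $F_{k-2}$ — which, by the structure $F_k = F_{k-2}F_{k-3}F_{k-2}$-type decomposition, is $b$ where $F_k[1..]$ (continuing as $F_{k-2}F_{k-3}\cdots$) has $a$ or vice versa; a short case check using Property~\ref{itFibThreeOccs} and the $ab$/$ba$ suffix structure shows $F_k[f_{k-2}+1..] \succ F_k[1..]$, so it is not an eligible intermediate suffix. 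Hence the first phrase has length exactly $f_{k-2}$.

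Finally, the second phrase starts at position $f_{k-2}+1$. I would argue that $F_k[1..]$ itself is its previous suffix: the suffix $F_k[f_{k-2}+1..]$ has the prefix $F_{k-2}$ (the second occurrence), so its $\lcp$ with $F_k[1..]$ is at least $f_{k-2}$; in fact it extends further, and the point is to compute the $\lcp$ and check that $F_k[1..]$ is the immediate lexicographic predecessor. Since $F_k[f_{k-2}+1..]$ continues past $F_{k-2}$ up to the point where the trailing $ba$ of $F_k$ differs from what $F_k[1..]$ has at the aligned position, the resulting $\lcp$ is $f_k - (f_{k-2}+1) - 1 = f_k - f_{k-2} - 2$, giving a phrase $F_k[f_{k-2}+1..f_k-2]$ of that length, which exactly consumes everything up to the two explicit phrases. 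That $F_k[1..]$ is the previous suffix (and not something in between) again follows because an intermediate suffix would need the long common prefix $F_k[f_{k-2}+1..f_k-2]$, forcing an occurrence of a large Fibonacci prefix whose multiplicity is pinned down by Lemma~\ref{lem:fib-properties} / Lemma~\ref{lem:FibFourOcc}.

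The main obstacle I anticipate is the second step: rigorously ruling out intermediate suffixes for the first phrase boundary, i.e.\ proving $F_k[f_{k-2}+1..] \succ F_k[1..]$. This requires a careful comparison of the two suffixes past their shared border $F_{k-2}$, which I would do by writing $F_k[1..] = F_{k-2}\cdot F_{k-3}\cdot F_{k-2}$ and $F_k[f_{k-2}+1..] = F_{k-3}\cdot F_{k-2}$ versus the continuation, then invoking the border/occurrence structure of Lemma~\ref{lem:fib-properties} together with the observation that $F_{k-3}$ (odd index) and the relevant prefix of $F_{k-2}$ diverge with the smaller character on the side of $F_k[1..]$; Figure~\ref{figFourFactorsEvenN} is presumably the intended visual aid for exactly this comparison.
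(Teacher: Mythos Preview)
Your proposal is correct and follows essentially the same approach as the paper: order the three suffixes having $F_{k-2}$ as a prefix (using Property~\ref{itFibThreeOccs}), show $F_{k-2} \prec F_k \prec F_k[f_{k-2}+1..]$, and read off both non-explicit phrases and their references from this single chain. The paper executes the key comparison you flag as the ``main obstacle'' more crisply by writing $F_k[f_{k-2}+1..] = F_{k-2}\cdot G_{k-3}$ and using Property~3/4 of Lemma~\ref{lem:fib-properties} to locate the mismatch between $F_{k-3}$ and $G_{k-3}$ at their last two characters ($ab$ versus $ba$), which simultaneously gives the ordering and the exact lcp for the second phrase; note also the minor indexing slip $F_k[f_{k-1}..]$ should be $F_k[f_{k-1}+1..]$.
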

\begin{proof}
    $F_k$ can be represented as
    \[
        F_k = F_{k-1} \cdot F_{k-2} = F_{k-2} \cdot F_{k-3} \cdot F_{k-4} \cdot F_{k-5} \cdot F_{k-4}.
    \]
    Let $\mathit{suf} = F_{k-2} \cdot G_{k-3}$~(a suffix of $F_k$) and $x$ be the longest common prefix of $F_{k-3}$ and $G_{k-3}$.
    Then
    $F_k = F_{k-2} \cdot x \cdot ab \cdot F_{k-4} \cdot F_{k-5} \cdot F_{k-4}$ and
    $\mathit{suf} = F_{k-2} \cdot x \cdot ba$ holds
    since $k$ is even and $k-3$ is odd.
    See Fig.~\ref{figFourFactorsEvenN} for a sketch.
    This implies that the three suffixes that have $F_{k-2}$ as a prefix satisfy
    $F_{k-2} \prec F_k \prec \mathit{suf}$.
    By combining with Property~\ref{itFibThreeOccs} of Lemma~\ref{lem:fib-properties},
    the first phrase is $F_{k-2}$ and the second phrase is $F_{k-2} \cdot x$
    (which is the longest common prefix of $F_k$ and $\mathit{suf}$).
    The third phrase is $b$ which is an explicit phrase of character $b$
    since the suffix $ba$ is the lexicographically smallest suffix that has $b$ as a prefix.
    Finally, the last phrase is an explicit phrase $a$.
\end{proof}

\subsection{Lex-parse with logarithmic number of phrases}
Next, we discuss the cases that have a logarithmic number of phrases.
For any odd $k \geq 7$ and even $i$ satisfying $4 \leq i \leq k-3$, let
\begin{align*}
    \suf{i}  & = (F_i \cdot F_{i-2} \cdot F_{i-3} \cdots F_4) \cdot ab = F_{i+1}, \\
    \sufp{i} & = F_i \cdot \suf{i} = G_{i+2}.
\end{align*}
From the definitions, the following properties hold.
\begin{lemma} \label{lem:suf-suf}
    For odd $k \geq 7$, $\suf{i}$ and $\sufp{i}$ are suffixes of $F_k$, for every even $i$ with $4 \leq i \leq k-3$.
    In particular, we have
    \begin{enumerate}
        \item[(a)] $\suf{i} = F_{i+1}$ is a prefix of $\sufp{i} = G_{i+2} = F_i F_{i+1}$, and
        \item[(b)] $\sufp{i} = F_i F_{i+1} = F_{i} F_{i-1} F_{i-2} F_{i-1} = F_{i+1} G_{i} = \suf{i} \cdot \sufp{i-2}$.
    \end{enumerate}
\end{lemma}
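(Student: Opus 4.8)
The plan is to verify the two displayed identities in the definitions of $\suf{i}$ and $\sufp{i}$ essentially by unrolling the Fibonacci recurrence, and then deduce that both strings are suffixes of $F_k$ by a downward induction on the index starting from $k-3$. First I would establish the closed form $\suf{i} = F_{i+1}$: the product $F_i \cdot F_{i-2} \cdot F_{i-3} \cdots F_4 \cdot ab$ telescopes because $F_{j} \cdot F_{j-2} = F_{j+1}$ is just the Fibonacci rule, and $F_4 \cdot (ab)$ should coincide with $F_4 \cdot F_3 = F_5$ since $F_3 = ab$ — here I would double-check the base of the telescope ($F_3 = F_2 F_1 = ab$, so $ab = F_3$) so that the tail $\cdots F_4 \cdot ab = \cdots F_4 F_3 = F_5$, and then $F_6 \cdot F_5$? no — more carefully, $F_i F_{i-2} F_{i-3}\cdots F_4 F_3 = F_{i+1}$ by repeatedly collapsing the last two factors, which gives exactly $\suf{i} = F_{i+1}$. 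Then $\sufp{i} = F_i \cdot \suf{i} = F_i F_{i+1}$, and since $F_{i+1} = F_i F_{i-1}$ this is $F_i F_i F_{i-1} = \dots$; matching against the definition $G_{i+2} = F_i F_{i+1}$ is immediate from the definition of $G$.

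Next I would prove part~(b)'s chain of equalities. The identity $\sufp{i} = F_i F_{i+1}$ is the definition; $F_i F_{i+1} = F_i F_i F_{i-1}$ uses $F_{i+1}=F_iF_{i-1}$, and then expanding the first $F_i = F_{i-1}F_{i-2}$ gives $F_{i-1}F_{i-2}F_iF_{i-1}$; alternatively grouping as $F_{i+1}F_i$ would need $F_iF_{i+1}=F_{i+1}F_i$, which is false in general, so I must be careful to read the claimed equalities in the paper's exact left-to-right form: $F_iF_{i+1} = F_iF_{i-1}F_{i-2}F_{i-1}$ (expand the second factor $F_{i+1}=F_{i-1}F_{i-2}$? no, $F_{i+1}=F_iF_{i-1}$, then expand the first $F_i=F_{i-1}F_{i-2}$: $F_iF_{i+1}=F_iF_iF_{i-1}=F_{i-1}F_{i-2}F_iF_{i-1}$) — I will reconcile the grouping with $G_i = F_{i-2}F_{i-1}$ so that $F_{i+1}G_i = F_{i+1}F_{i-2}F_{i-1}$, and separately check $F_{i+1}G_i = \suf{i}\cdot\sufp{i-2}$ using $\suf{i}=F_{i+1}$ and $\sufp{i-2}=G_i$ (the latter from part~(a) applied at index $i-2$, since $\sufp{i-2}=G_{(i-2)+2}=G_i$). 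This makes part~(b) a finite sequence of substitutions of the Fibonacci recurrence; no genuine difficulty, only bookkeeping.

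The one step needing an actual argument is that $\suf{i}$ and $\sufp{i}$ are \emph{suffixes} of $F_k$ for all even $i$ in $[4..k-3]$. I would prove this by descending induction: for $i=k-3$ (which is even since $k$ is odd), $F_k = F_{k-1}F_{k-2} = F_{k-1}F_{k-3}F_{k-4}$, and one checks that $\sufp{k-3}=F_{k-3}F_{k-2}$ is a suffix of $F_k$ — this requires $F_{k-2}$ to be a suffix of $F_k$ (true, it is the border of Lemma~\ref{lem:fib-properties}\ref{itFibPropBorder}, but I need it as a suffix, which follows from $F_k=F_{k-1}F_{k-2}$ directly) and that the $F_{k-3}$ immediately preceding it is there, i.e.\ $F_{k-1}$ ends with $F_{k-3}$, again a direct consequence of $F_{k-1}=F_{k-2}F_{k-3}$. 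For the inductive step, once $\sufp{i}=\suf{i}\cdot\sufp{i-2}$ (part~(b)) is known and $\sufp{i}$ is a suffix of $F_k$, it follows that $\sufp{i-2}$ is a suffix of $F_k$, and since $\suf{i-2}$ is a suffix of $\sufp{i-2}$ by part~(a) (at index $i-2$), $\suf{i-2}$ is a suffix of $F_k$ as well; this closes the induction. I expect the main (minor) obstacle to be simply keeping the index shifts and the non-commutativity of concatenation straight while matching the paper's stated groupings, rather than any real combinatorial content.
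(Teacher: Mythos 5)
Your proposal is correct and follows essentially the same route as the paper: verify the closed forms $\suf{i}=F_{i+1}$ and $\sufp{i}=F_iF_{i+1}=G_{i+2}$ by unrolling the Fibonacci recurrence, and obtain the suffix property by a descending induction anchored at $F_k=F_{k-2}F_{k-3}F_{k-2}=F_{k-2}\cdot\sufp{k-3}$ (the paper descends via ``$\sufp{i-2}$ is a suffix of $\suf{i}=F_{i-1}F_{i-2}F_{i-1}$'' rather than via $\sufp{i}=\suf{i}\cdot\sufp{i-2}$, but this is the same argument). Two small points: the collapsing rule is $F_jF_{j-1}=F_{j+1}$, so the telescoping product must have even indices $F_i\cdot F_{i-2}\cdot F_{i-4}\cdots F_4\cdot ab$ (the confusion you ran into stems from the paper's definition, whose index sequence as printed does not telescope and has the wrong length), and the fact that $\suf{i-2}$ is a suffix of $\sufp{i-2}$ follows directly from the definition $\sufp{i-2}=F_{i-2}\cdot\suf{i-2}$ rather than from part~(a), which concerns prefixes.
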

\begin{proof}
    By definitions,
    $F_k = F_{k-2} F_{k-3} F_{k-2} = F_{k-2} F_{k-3} \suf{k-3} = F_{k-2} \sufp{k-2}$.
    Since $\suf{i-2}$ is a suffix of $\suf{i}$, the claim holds for $\suf{i}$ for every $i$.
    Writing $\suf{i} = F_{i-1} \cdot F_{i-2} \cdot (F_{i-2} \cdot F_{i-3} \cdots F_4) \cdot ab$,
    we can see that $\sufp{i-2}$ is a suffix of $\suf{i}$.
\end{proof}

We use these suffixes to characterize the lex-parse.
The following lemma shows that $\suf{i}$ is the previous suffix of $\sufp{i}$ w.r.t. $F_k$ for every $i$,
where $f_k - |\sufp{i}| + 1$ and $f_k - |\suf{i}| + 1] - 1$ are the starting positions of $\sufp{i}$ and $\suf{i}$ in $F_k$, respectively.

\begin{lemma}\label{lem:ao-critical-suffix}
    Assume that $k \geq 7$ is odd and $a \prec b$.
    The previous suffix of $\sufp{i}$ w.r.t. $F_k$ is $\suf{i}$
    for every even $i$ satisfying $4 \leq i \leq k-3$.
\end{lemma}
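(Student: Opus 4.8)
The plan is to show two things: first, that $\suf{i}$ is lexicographically smaller than $\sufp{i}$ (so it is a legitimate candidate to be the previous suffix), and second, that no suffix of $F_k$ lies strictly between them in lexicographic order. For the first point, by Lemma~\ref{lem:suf-suf}(a), $\suf{i} = F_{i+1}$ is a \emph{proper} prefix of $\sufp{i} = F_i F_{i+1}$, hence $\suf{i} \prec \sufp{i}$ immediately from the definition of lexicographic order. So the whole content is the ``no suffix in between'' claim.

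For the second point, suppose for contradiction that some suffix $y = F_k[j..]$ satisfies $\suf{i} \prec y \prec \sufp{i}$. Since $\suf{i} = F_{i+1}$ is a prefix of $\sufp{i}$, any such $y$ must itself have $F_{i+1}$ as a prefix: writing the first position where $y$ and $\sufp{i}$ differ, if that position were $\le f_{i+1} = |\suf{i}|$ then $y$ would disagree with $\sufp{i}$ inside its prefix $F_{i+1}$, and combined with $y \succ \suf{i} = F_{i+1}$ this would force $y$ to have a character larger than the corresponding character of $F_{i+1}$, making $y \succ \sufp{i}$ as well — a contradiction. Hence $y$ begins with an occurrence of $F_{i+1}$ in $F_k$. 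The key structural step is therefore to enumerate the occurrences of $F_{i+1}$ in $F_k$ and check each candidate. By Property~\ref{itFibThreeOccs} of Lemma~\ref{lem:fib-properties} (applied, via the recursion $F_k = F_{k-1}F_{k-2} = \cdots$, to locate copies of $F_{i+1}$) together with Lemma~\ref{lem:FibFourOcc}, the occurrences of $F_{i+1}$ in $F_k$ are tightly controlled; the two relevant ones here are the occurrence starting the suffix $\suf{i}$ itself and the occurrence starting the suffix $\sufp{i}$ (which by Lemma~\ref{lem:suf-suf}(b) is exactly the copy of $F_{i+1}$ that is the prefix of $\sufp{i} = F_{i+1} G_i$). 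For any \emph{other} occurrence of $F_{i+1}$, I will show the resulting suffix is either $\preceq \suf{i}$ or $\succeq \sufp{i}$ by looking at what follows the occurrence: after the copy of $F_{i+1}$ that heads $\sufp{i}$ comes $G_i = F_{i-2}F_{i-1}$, whereas after the copy heading $\suf{i}$ comes $\sufp{i-2} = G_i = F_{i-2}F_{i-1}$ by Lemma~\ref{lem:suf-suf}(b) again — so the branching that separates these two suffixes happens precisely at the point where $\sufp{i}$ continues as $F_{i+1}G_i\cdots$ while $\suf{i}$ stops (being shorter). Concretely, $\sufp{i} = \suf{i} \cdot \sufp{i-2}$, and since $\sufp{i-2}$ begins with $F_{i-2}$ whose first character is $a$ (as $i-2$ is even, $F_{i-2}$ starts with $a$), while being a suffix of $F_k$ the string right after position $f_k - |\suf{i}|$ must begin with either $a$ (giving exactly $\sufp{i}$ or something extending it) or with $b$; a $b$ there would create the substring $F_{i+1}\cdot b\cdots$, and since $F_{i+1}$ ends in $ab$ (odd $i+1$) this would put $ab b$, i.e. $bb$, into $F_k$, contradicting Property~\ref{itFibAAA}. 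This pins down that the only suffix of $F_k$ strictly between $\suf{i}$ and $\sufp{i}$ would have to coincide with $\sufp{i}$ on a long prefix and then break — but the occurrence-counting above shows there is no room for such a break before $|\sufp{i}|$.

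I expect the main obstacle to be the careful occurrence bookkeeping: one must argue that every occurrence of $F_{i+1}$ in $F_k$ other than the two distinguished ones is followed by a character (or short block) that pushes the corresponding suffix out of the open interval $(\suf{i}, \sufp{i})$. This is where the $aaa$/$bb$-avoidance (Property~\ref{itFibAAA}), the three-occurrence structure (Property~\ref{itFibThreeOccs}), and the primitivity/square lemma (Lemmas~\ref{lem:primitive-square} and~\ref{lem:FibFourOcc}) all get used in concert, and it likely splits into a couple of cases depending on whether the competing occurrence is internal to an $F_{k-1}$ block or straddles the $F_{k-1}\,|\,F_{k-2}$ boundary in the decomposition $F_k = F_{k-1}F_{k-2}$. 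Once that case analysis is done, the conclusion that $\suf{i}$ is the previous suffix of $\sufp{i}$ — and hence that lex-parse uses $\suf{i}$ as the reference, producing a phrase of length $|\suf{i}| = f_{i+1}$ ending right where $\sufp{i-2}$ begins — follows directly, setting up the inductive descent $\sufp{i} \mapsto \sufp{i-2}$ that ultimately yields the $\lceil k/2\rceil + 1$ count.
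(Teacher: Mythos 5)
Your reduction is sound up to a point: $\suf{i}\prec\sufp{i}$ because $\suf{i}=F_{i+1}$ is a proper prefix of $\sufp{i}$, and any suffix $y$ with $\suf{i}\prec y\prec\sufp{i}$ must indeed begin with an occurrence of $F_{i+1}$. The gap is in the step you yourself flag as the "main obstacle": enumerating the occurrences of $F_{i+1}$ in $F_k$ and checking each one. This is not "tightly controlled" for general $i$ — Property~\ref{itFibThreeOccs} of Lemma~\ref{lem:fib-properties} and Lemma~\ref{lem:FibFourOcc} only bound occurrences of $F_{k-2}$ and $F_{k-4}$, whereas here $i$ ranges down to $4$, and the number of occurrences of $F_{i+1}$ in $F_k$ grows like a Fibonacci number in $k-i$. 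Moreover, your proposed discriminator (what character follows the occurrence) cannot separate the candidates: since $i+1$ is odd, $F_{i+1}$ ends in $ab$, so by the $bb$-avoidance every non-terminal occurrence of $F_{i+1}$ is followed by $a$ — exactly as in $\sufp{i}=F_{i+1}\cdot\sufp{i-2}$, where $\sufp{i-2}$ also starts with $a$. So the comparison must be pushed arbitrarily deep, and your sketch has no mechanism for that. (There is also a local confusion: the occurrence of $F_{i+1}$ "heading" $\suf{i}$ is a suffix of $F_k$ and is followed by nothing; it is the occurrence heading $\sufp{i}$ that is followed by $\sufp{i-2}$.)

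The paper closes exactly this gap by inducting on $i$. For a competing suffix $x=\suf{i}\cdot y$ it splits on $|x|$: if $|\suf{i}|<|x|<|\sufp{i}|$, then $x$ starts an internal occurrence of $F_i$ inside the prefix $F_i^2$ of $\sufp{i}$, contradicting primitivity (Lemma~\ref{lem:primitive-square} and Property~\ref{itFibPrimitivity}); if $|x|>|\sufp{i}|$, the first mismatch between $y$ and $\sufp{i-2}$ is located inside some factor of the telescoping factorization $\sufp{i}=\suf{i}\cdot\suf{i-2}\cdots\suf{i-2\ell}\cdot\sufp{i-2\ell-2}$, which produces a shorter suffix strictly between $\suf{i-2\ell'}$ and $\sufp{i-2\ell'}$ and contradicts the induction hypothesis; the base case $i=4$ is killed by $aaa$-avoidance. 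To repair your proof you would need to supply some analogue of this descent (or an equally strong global argument); the occurrence-counting lemmas you cite do not suffice.
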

\begin{proof}
    Since $\sufp{i} = F_i \cdot F_{i-2} \cdots F_4 \cdot ab \cdot \alpha$ for some string $\alpha$,
    $\suf{i}$ is a prefix of $\sufp{i}$.
    Thus, $\suf{i} \prec \sufp{i}$.
    We prove that there is no suffix $x$ of $F_k$ with $\suf{i} \prec x \prec \sufp{i}$ by induction on $i$.

    Let $i = 4$ for the base case.
    Assume on the contrary that there exists a suffix $x$ of $F_k$ such that $\suf{4} \prec x \prec \sufp{4}$.
    Since $\suf{4} = F_5 = F_4 \cdot ab = aba \cdot ab$ and $\sufp{4} = F_4 \cdot F_5 = F_4 \cdot abaab$,
    $x$ can be written as $x = F_4 \cdot abaaa \cdot y$ for some string $y$.
    However, $aaa$ is not a substring of $F_k$ according to Property~\ref{itFibAAA} of Lemma~\ref{lem:fib-properties}, so $x$ cannot exist.
    Thus, the statement holds for the base case.

    Assume that the statement holds for all even $i \leq j$ for some even $j \geq 4$.
    Suppose on the contrary that there exists a suffix $x$ of $F_k$ such that $\suf{j+2} \prec x \prec \sufp{j+2}$.
    Since $\suf{j+2}$ is a prefix of $\sufp{j+2}$,
    we can represent $x$ as $x = \suf{j+2} \cdot y$ for some string $y$.
    There are two cases w.r.t.\ the length of $x$.
    (1) Assume that $|\suf{j+2}| < |x| < |\sufp{j+2}|$.
    Because of the assumption and the fact that $F_{j+2}^2$ is a prefix of $\sufp{j+2}$ by Lemma~\ref{lem:suf-suf}(a),
    $F_{j+2}$ occurs as a substring that is neither prefix nor a suffix of $F_{j+2}^2$.
    However, $F_{j+2}^2$ cannot have such occurrence of $F_{j+2}$ since $F_{j+2}$ is primitive (by Property~\ref{itFibPrimitivity} of Lemma~\ref{lem:fib-properties}),
    a contradiction (by Lemma~\ref{lem:primitive-square}).
    (2) Assume that $|\sufp{j+2}| < |x|$.
    We now use that $\sufp{i} = \suf{i} \cdot \sufp{i-2}$ holds by Lemma~\ref{lem:suf-suf}(b).
    By the assumption, $y$ and $\sufp{j}$ mismatch with $a$ and $b$, respectively
    (since $x$ and $\sufp{j+2}$ have $\suf{j+2}$ as a prefix).
    By Lemma~\ref{lem:suf-suf}, $\sufp{j+2}$ can be represented as
    \begin{equation}\label{eqSufpFactorization}
        \sufp{j+2} = \suf{j+2} \cdot \suf{j} \cdots \suf{j+2-2\ell} \cdot \sufp{j-2\ell}
    \end{equation}
    for some integer $\ell \geq 0$.
    Let $\ell'$ be the largest integer $\ell$ such that the mismatch position is in the factor $\sufp{i-2\ell}$ of the $\sufp{j+2}$-factorization in Eq.~\ref{eqSufpFactorization},
    and $y'$ be the suffix of $y$ that has the factor $\suf{j+2-2\ell'}$ of the factorization in Eq.~\ref{eqSufpFactorization}
    as a prefix (i.e., $y' = y[|\suf{j} \cdots \suf{j+2-2(\ell'-1)}|+1..])$.
    Since $\suf{j+2-2\ell'}$ is a prefix $y'$, and $y'$ and $\sufp{j-2\ell'}$ mismatch at the same position,
    then $\suf{j+2-2\ell'} \prec y' \prec \sufp{j+2-2\ell'}$ holds.
    This fact contradicts the induction hypothesis (see also Fig.~\ref{fig:ao-sketch}).
\end{proof}

\begin{figure}[t]
    \centerline{\includegraphics[width=1.0\linewidth]{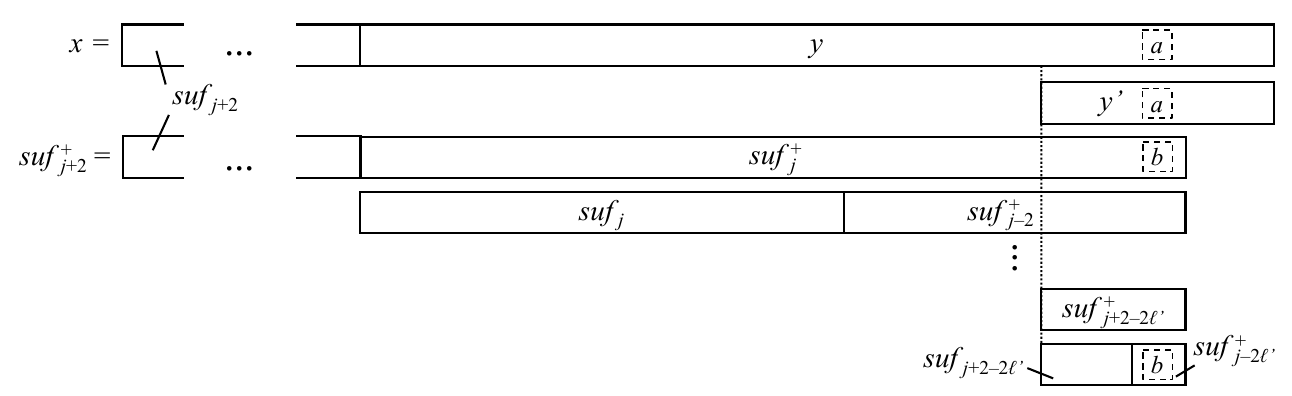}}
    \caption{
        Illustration of the proof of Lemma~\ref{lem:ao-critical-suffix}.
    }\label{fig:ao-sketch}
\end{figure}

Now we can show the following main lemma from the above lemmas.

\begin{lemma}\label{lemLogFactorsOddN}
    Assume that $k \geq 7$ is odd and $a \prec b$.
    Then the lex-parse of $F_k$ is
    \[
        F_k[1..f_{k-1}-2], baF_{k-4}, F_{k-4}, F_{k-6}, \ldots, F_5, a, a, b.
    \]
\end{lemma}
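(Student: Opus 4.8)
\textbf{Proof proposal for Lemma~\ref{lemLogFactorsOddN}.}
The plan is to build the lex-parse of $F_k$ from left to right, identifying each phrase in turn using the suffix-order facts established above. First I would handle the first phrase: by Property~\ref{itFibPropBorder} of Lemma~\ref{lem:fib-properties}, $F_{k-2}$ is a border of $F_k$, so $F_k[f_{k-1}..] = F_{k-2}$ is a candidate previous suffix. I must show the previous suffix of $F_k[1..]$ is actually the suffix $\sufp{k-3}$ starting at position $f_{k-2}+1$ (the middle occurrence of $F_{k-2}$ from Property~\ref{itFibThreeOccs}). Writing $F_k = F_{k-2}\cdot F_{k-3}\cdot F_{k-2} = F_{k-2}\cdot\sufp{k-2}$ and using the factorization $F_{k-3} = F_{k-4}\cdots$, one checks $F_k[1..]$ and $\sufp{k-3}$ agree on a prefix of length $f_{k-1}-2$ and then differ (the mismatch coming from the fact that $k-3$ is odd, so $ab$ vs.\ $ba$ appears analogously to Lemma~\ref{lemFourFactorsEvenN}), with $F_k[1..] \prec \sufp{k-3}$. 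Ruling out any intervening suffix uses that $F_{k-1}$ occurs only as a prefix of $F_k$ (Property~\ref{itFibThreeOccs}) together with the no-$aaa$ property, exactly as in the first-phrase argument for $T_{2k}$. This gives the first phrase $F_k[1..f_{k-1}-2]$, and leaves the remaining suffix equal to $\sufp{k-3}$.

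Next I would produce the second phrase. The remaining suffix is $\sufp{k-3} = G_{k-1} = F_{k-3}F_{k-2}$, and by Lemma~\ref{lem:ao-critical-suffix} its previous suffix is $\suf{k-3} = F_{k-2}$, which is a prefix of $\sufp{k-3}$ by Lemma~\ref{lem:suf-suf}(a). So the second phrase has length $|\suf{k-3}| = f_{k-2}$; I need to check this phrase is exactly $ba\cdot F_{k-4}$, i.e.\ that $\sufp{k-3}$ starts with $baF_{k-4}$ and that $|baF_{k-4}| = f_{k-2}$. The length identity is $f_{k-2} = 2 + f_{k-4}$? No --- here I must be careful: the displayed parse lists the second phrase as $baF_{k-4}$, so I should instead verify $\sufp{k-3}[1..f_{k-2}] = baF_{k-4}$ using the explicit expansion $\sufp{k-3} = F_{k-3}F_{k-2}$ and the structure of $F_{k-3}$ (which ends in $ab$ for odd $k-3$, making $\sufp{k-3}$ begin with the relevant block). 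After consuming the second phrase, the remaining suffix is $\sufp{k-3}$ with its length-$f_{k-2}$ prefix removed, which I would identify with $\sufp{k-5}$ using Lemma~\ref{lem:suf-suf}(b): $\sufp{i} = \suf{i}\cdot\sufp{i-2}$.

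Then I would iterate: at the generic step the remaining suffix is $\sufp{i}$ for even $i$ descending from $k-3$, its previous suffix is $\suf{i} = F_{i+1}$ by Lemma~\ref{lem:ao-critical-suffix}, and since $\suf{i}$ is a prefix of $\sufp{i}$ (Lemma~\ref{lem:suf-suf}(a)) the next phrase has length $f_{i+1}$; I identify this phrase with $F_{i-3}$-type content (matching $F_{k-6}, F_{k-8},\ldots$ in the display) and use $\sufp{i} = \suf{i}\cdot\sufp{i-2}$ to see the remainder is $\sufp{i-2}$. The induction runs down to $i=4$, where $\sufp{4} = F_4F_5 = aba\cdot abaab$ remains. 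Here the tail behaves like the base case of Lemma~\ref{lem:ao-critical-suffix}: the suffix array of this short tail is governed by the no-$aaa$ property, giving that the smallest suffix starting with $a$ and the smallest starting with $b$ are the explicit phrases, so the tail parses as $a,a,b$ (two explicit phrases $a$ and $b$, with a length-$1$ phrase $a$ in between), matching the claimed ending $\ldots, F_5, a, a, b$. Finally I would count: one first phrase, then one phrase for each even $i$ in $\{4,6,\ldots,k-3\}$, then three short phrases, totaling $\lceil k/2\rceil + 1$, consistent with Theorem~\ref{thmAOLowerBound}(a).

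\textbf{Main obstacle.} The delicate part is pinning down the \emph{exact strings} of the phrases (not just their lengths) so that the output literally reads $F_k[1..f_{k-1}-2], baF_{k-4}, F_{k-4}, F_{k-6},\ldots,F_5,a,a,b$; the lengths follow cleanly from Lemmas~\ref{lem:suf-suf} and~\ref{lem:ao-critical-suffix}, but matching them to the Fibonacci-word labels requires repeatedly rewriting $\sufp{i}$ and $\suf{i}$ via $F_{i+1} = F_iF_{i-1}$ and tracking where the $ba$ versus $ab$ discrepancy (induced by the parity of the indices) sits. A secondary subtlety is the very first phrase: one must confirm the previous suffix is the middle $F_{k-2}$-occurrence rather than the border occurrence, which is the mirror image of the even-$k$, $a\prec b$ analysis and relies on the parity swap noted in the caption of Fig.~\ref{figFourFactorsEvenN}.
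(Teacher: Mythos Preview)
Your plan has a genuine gap at the second phrase. After the first phrase $F_k[1..f_{k-1}-2]$, the remaining suffix is $F_k[f_{k-1}-1..]$, which has length $f_{k-2}+2$ and equals $ba\cdot F_{k-2}$. You identify it with $\sufp{k-3}=G_{k-1}$, but $|\sufp{k-3}|=f_{k-1}$, so these cannot be the same string (and indeed $\sufp{k-3}$ starts with $a$, not $ba$). What you have mixed up is the \emph{previous suffix} used as a reference for the first phrase (that is $\sufp{k-3}$, the suffix at position $f_{k-2}+1$) with the \emph{remaining suffix} to be parsed (that is $ba\cdot F_{k-2}$). Your own sanity check ``$f_{k-2}=2+f_{k-4}$? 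No'' is exactly the symptom of this misidentification: the second phrase has length $f_{k-4}+2$, not $f_{k-2}$.

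Because $ba\cdot F_{k-2}$ is not of the form $\sufp{i}$ for any $i$, Lemma~\ref{lem:ao-critical-suffix} does not apply to it, and a separate argument is needed. The paper proves directly that the previous suffix of $ba\cdot F_{k-2}$ is $ba\cdot F_{k-4}$: one enumerates the occurrences of $ba\cdot F_{k-4}$ in $F_k$ via Lemma~\ref{lem:FibFourOcc} (there are exactly three such occurrences, corresponding to the third, fourth, and sixth occurrences of $F_{k-4}$), checks that the only candidate intervening suffix $ba\cdot F_{k-4}\cdot G_{k-1}$ is in fact lexicographically larger than $ba\cdot F_{k-2}$, and concludes the second phrase is $ba\cdot F_{k-4}$. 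Only then is the remaining suffix $\sufp{k-5}$, and from that point on your inductive use of Lemma~\ref{lem:ao-critical-suffix} (stepping $\sufp{i}\mapsto\sufp{i-2}$ down to $i=4$, then parsing the tail $aab$ as $a,a,b$) matches the paper. Two smaller slips to fix as well: for odd $k$ the index $k-3$ is even, not odd; and you need $\sufp{k-3}\prec F_k$, not $F_k\prec\sufp{k-3}$, for it to be the previous suffix.
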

\begin{proof}
    Let $x$ be the longest common prefix of $F_{k-3}$ and $G_{k-3}$.
    Due to Property~\ref{itFibThreeOccs} of Lemma~\ref{lem:fib-properties},
    there are three suffixes
    $F_k = F_{k-2} \cdot x \cdot ba \cdot F_{k-4} \cdot F_{k-5} \cdot F_{k-4}$,
    $\mathit{suf} = F_{k-2} \cdot x \cdot ab$, and
    $F_{k-2}$ that have $F_{k-2}$ as a prefix.
    This implies that $F_{k-2} \prec \mathit{suf} \prec F_k$.
    Thus, the first phrase is $F_{k-2} \cdot x$.
    Then the remaining suffix is $ba \cdot \suf{k-3} = ba \cdot F_{k-2}$.
    We show the second phrase is $ba \cdot F_{k-4}$
    by proving that the previous suffix of $ba \cdot F_{k-2}$ w.r.t. $F_k$ is $ba \cdot F_{k-4}$.
    It is clear that $ba \cdot F_{k-4} \prec ba \cdot F_{k-2}$
    since $ba \cdot F_{k-4}$ is a prefix of $ba \cdot F_{k-2}$.
    Suppose on the contrary that there exists a suffix $y$ of $F_k$
    that satisfies $ba \cdot F_{k-4} \prec y \prec ba \cdot F_{k-2}$.
    From the assumption, $ba \cdot F_{k-4}$ is a prefix of $y$.
    We can observe that there are three occurrences of $ba \cdot F_{k-4}$ in $F_k$ by Lemma~\ref{lem:FibFourOcc} (i.e., the third, the fourth, and the sixth occurrence of $F_{k-4}$ in $F_k$).
    This implies that suffix $ba \cdot F_{k-4} \cdot G_{k-1}$ (regarding the third occurrence of $F_{k-4}$) of $F_k$ is the only candidate of $y$.
    However,
    \[
        y = ba \cdot F_{k-4} \cdot G_{k-1} = ba \cdot F_{k-4} \cdot F_{k-3} \cdot F_{k-2} \succ
        ba \cdot F_{k-4} \cdot G_{k-3} = ba \cdot F_{k-2}
    \]
    holds.
    Thus, the previous suffix of $ba \cdot F_{k-2}$ w.r.t. $F_k$ is $ba \cdot F_{k-4}$,
    and the second phrase is $ba \cdot F_{k-4}$.
    Then, the remaining suffix is $\sufp{k-5}$.
    From Lemma~\ref{lem:ao-critical-suffix},
    the next phrase is $\suf{k-5} = F_{k-4}$ and the remaining suffix is $\sufp{k-7}$.
    This continues until the remaining suffix is $aab$.
    It is easy to see that the last three phrases are $a, a, b$.
\end{proof}

We can also prove the following lemmas similarly.

\begin{lemma}\label{lemFourFactorsOddN}
    Assume that $k \geq 7$ is odd and $b \prec a$.
    Then the lex-parse of $F_k$ is
    \[
        F_{k-2}, F_k[f_{k-2}+1..f_k-2], a, b.
    \]
\end{lemma}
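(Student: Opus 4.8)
The plan is to mirror the structure of the proof of Lemma~\ref{lemFourFactorsEvenN}, swapping the roles of $a$ and $b$ in the lexicographic order, and exploiting the fact that for odd $k$ we have $F_k[f_k-1..f_k] = ab$ (rather than $ba$). First I would record the decomposition
\[
    F_k = F_{k-1} \cdot F_{k-2} = F_{k-2} \cdot F_{k-3} \cdot F_{k-4} \cdot F_{k-5} \cdot F_{k-4},
\]
which is valid for $k \geq 6$, and set $\mathit{suf} = F_{k-2} \cdot G_{k-3}$, a suffix of $F_k$, together with $x$ the longest common prefix of $F_{k-3}$ and $G_{k-3}$. Since $k-3$ is now odd, Properties~\ref{itFibPropBorder}--\ref{itFibThreeOccs} of Lemma~\ref{lem:fib-properties} give $F_k = F_{k-2} \cdot x \cdot ba \cdot F_{k-4} \cdot F_{k-5} \cdot F_{k-4}$ and $\mathit{suf} = F_{k-2} \cdot x \cdot ab$, exactly as in Lemma~\ref{lemLogFactorsOddN}, i.e.\ the two occurrences of $F_{k-2}$ that are not the prefix are followed by $ba\cdots$ (in $F_k$ itself) and $ab$ (in $\mathit{suf}$).

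Next I would determine the order of the three suffixes having $F_{k-2}$ as a prefix under $b \prec a$. The border suffix $F_{k-2} = F_k[f_{k-1}..]$ is still the lexicographically smallest among them (it is a proper prefix of the others), but now since $b \prec a$ we get $F_k = F_{k-2}\cdot x\cdot ba\cdots \prec F_{k-2}\cdot x\cdot ab\cdots = \mathit{suf}$, so the ordering is $F_{k-2} \prec F_k \prec \mathit{suf}$. Combined with Property~\ref{itFibThreeOccs} this shows the previous suffix of $F_k[1..]$ is exactly $F_{k-2} = F_k[f_{k-1}..]$, hence the first phrase has length $f_{k-2}$, i.e.\ the first phrase is $F_{k-2}$. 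The remaining suffix is $F_k[f_{k-2}+1..] = \mathit{suf} = F_{k-2}\cdot x\cdot ab$, which can use $F_k[1..]$ as reference: the longest common prefix of $\mathit{suf}$ and $F_k$ is $F_{k-2}\cdot x$, so the second phrase is $F_k[f_{k-2}+1..f_k-2]$ of length $|F_{k-2}\cdot x| = f_k - f_{k-2} - 2$ (here I would double-check the index arithmetic, noting $|F_{k-2}\cdot x\cdot ab| = f_{k-2} + |x| + 2$ equals the suffix length $f_k - f_{k-2}$, so $|x| = f_{k-3}-2$ and the phrase ends at position $f_k-2$).

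Finally I would handle the last two characters. Since $k$ is odd, $F_k$ ends in $ab$, so $F_k[f_k-1] = a$ and $F_k[f_k] = b$. Under $b \prec a$, the suffix $b$ (length-one, the last character) is the lexicographically smallest suffix starting with $b$ — this follows because $bb$ does not occur in $F_k$ by Property~\ref{itFibAAA} of Lemma~\ref{lem:fib-properties}, so every other suffix starting with $b$ continues with $ba$, and $b$ alone is a proper prefix of all of them — hence it is an explicit phrase; similarly the final $a$ is the smallest suffix starting with $a$ (it continues with nothing), so it too is explicit. After consuming the first two phrases, the remaining suffix is precisely $F_k[f_k-1..f_k] = ab$, which splits into the explicit phrases $a$ then $b$. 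Thus the lex-parse is $F_{k-2},\ F_k[f_{k-2}+1..f_k-2],\ a,\ b$, matching the statement. The main obstacle is purely bookkeeping: getting the boundary indices right (that the second phrase ends exactly at $f_k-2$, not $f_k-1$) and confirming that $\mathit{suf}$ is genuinely the lexicographic successor of $F_k$ with no intervening suffix — the latter requires that no suffix starting with $F_{k-2}\cdot x$ lies strictly between, which is guaranteed because the only three occurrences of $F_{k-2}$ are the ones enumerated by Property~\ref{itFibThreeOccs}, and among suffixes not starting with $F_{k-2}$ the common prefix with $F_k$ is shorter so the comparison is decided earlier.
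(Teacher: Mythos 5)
Your proposal is correct and follows essentially the same route as the paper, which establishes this lemma by exactly the symmetric adaptation of the proof of Lemma~\ref{lemFourFactorsEvenN} that you describe (the paper only remarks that it is proved ``similarly''). Two cosmetic slips worth fixing: for odd $k$ the index $k-3$ is \emph{even}, not odd (your factorizations $F_{k-3}=x\cdot ba$ and $G_{k-3}=x\cdot ab$ are nevertheless the correct ones), and the suffix starting at position $f_k-1$ is $ab$, not a bare $a$ ``continuing with nothing'' --- it is the smallest suffix starting with $a$ because under $b\prec a$ it precedes every suffix of the form $aa\cdots$ and is a proper prefix of every longer suffix of the form $ab\cdots$.
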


\begin{lemma}\label{lemLogFactorsEvenN}
    Assume that $k \geq 6$ is even and $b \prec a$.
    Then the lex-parse of $F_k$ is
    \[
        F_k[1..f_{k-1}-2], abF_{k-4}, F_{k-4}, F_{k-6}, \ldots, F_6, b, a.
    \]
\end{lemma}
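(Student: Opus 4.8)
The plan is to mimic the structure of the proof of Lemma~\ref{lemLogFactorsOddN}, exploiting the near-duality between even and odd Fibonacci words that the excerpt has already set up through the $G_k$ notation and Property~\ref{itFibThreeOccs} of Lemma~\ref{lem:fib-properties}. The first step is to establish that the first phrase is $F_k[1..f_{k-1}-2]$, i.e.\ one shorter than the border $F_{k-2}$. Writing $F_k = F_{k-2}\cdot x\cdot ab\cdot F_{k-4}\cdot F_{k-5}\cdot F_{k-4}$ (for $k$ even, so $k-3$ is odd), where $x$ is the longest common prefix of $F_{k-3}$ and $G_{k-3}$, and setting $\mathit{suf}=F_{k-2}\cdot G_{k-3}=F_{k-2}\cdot x\cdot ba$, the three suffixes with prefix $F_{k-2}$ (from Property~\ref{itFibThreeOccs}) satisfy, under $b\prec a$, the order $F_{k-2}\prec F_k\prec\mathit{suf}$. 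Hence the previous suffix of $F_k[1..]$ is $\mathit{suf}$, and the longest common prefix of the two is $F_{k-2}\cdot x$, so the first phrase has that length; since $F_{k-2}\cdot x\cdot ab = F_{k-1}$ this is exactly $F_k[1..f_{k-1}-2]$.

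The second step handles the second phrase. After the first phrase the remaining suffix is $ab\cdot F_{k-2}$ (the tail starting at the second occurrence of $F_{k-4}$-context, mirroring the $ba\cdot F_{k-2}$ of the odd case). I would show its previous suffix is $ab\cdot F_{k-4}$ by the same counting argument as in Lemma~\ref{lemLogFactorsOddN}: by Lemma~\ref{lem:FibFourOcc} there are exactly three occurrences of $ab\cdot F_{k-4}$ in $F_k$, only one of which (the one extending to $ab\cdot F_{k-4}\cdot G_{k-1}$) gives a suffix larger than $ab\cdot F_{k-4}$ but potentially competing with $ab\cdot F_{k-2}$; and the expansion $ab\cdot F_{k-4}\cdot G_{k-1} = ab\cdot F_{k-4}\cdot F_{k-3}\cdot F_{k-2}$ versus $ab\cdot F_{k-4}\cdot G_{k-3}=ab\cdot F_{k-2}$ shows, under $b\prec a$, that no suffix lies strictly between $ab\cdot F_{k-4}$ and $ab\cdot F_{k-2}$. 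So the second phrase is $ab\cdot F_{k-4}$, leaving $\sufp{k-5}$ as the remaining suffix.

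The third step is the inductive cascade. Here I need the analogue of Lemma~\ref{lem:ao-critical-suffix} for the case $b\prec a$ and even-indexed ambient word: with $\suf{i}$ and $\sufp{i}$ defined as before, the previous suffix of $\sufp{i}$ w.r.t.\ $F_k$ is $\suf{i}$, for the relevant range of $i$. The proof is the same induction — base case ruling out $aaa$ via Property~\ref{itFibAAA}, inductive step splitting on whether the competing suffix is shorter or longer than $\sufp{j+2}$, using primitivity (Property~\ref{itFibPrimitivity}) and Lemma~\ref{lem:primitive-square} for the short case and the factorization $\sufp{i}=\suf{i}\cdot\sufp{i-2}$ for the long case — the only change being that the roles of $a$ and $b$ in the mismatch are swapped, which flips every $\prec$ consistently. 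Applying this repeatedly peels off phrases $F_{k-4}, F_{k-6},\ldots$, and the recursion bottoms out at the suffix $ab$ in the even case (rather than $aab$ in the odd case), which under $b\prec a$ splits as the two explicit phrases $b$ then $a$ — note the last two phrases are $b,a$, not $a,a,b$. Counting: one first phrase, one phrase $ab\cdot F_{k-4}$, then the chain $F_{k-4},F_{k-6},\ldots,F_6$, then $b,a$; this totals $\lceil k/2\rceil+1$, matching Theorem~\ref{thmAOLowerBound}(d).

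The main obstacle I anticipate is getting the boundary of the inductive cascade exactly right — specifically, verifying that the chain of $\suf{i}$-phrases terminates at $F_6$ and that the leftover suffix is precisely $ab$ (so that the two explicit phrases $b,a$ close the parse), since the parity bookkeeping differs from the odd case by one index throughout, and an off-by-one there would throw off the final count. A secondary subtlety is double-checking that when $b\prec a$ the explicit phrases genuinely occur at the *end* in the order $b,a$: one must confirm $b a$ as a suffix of $F_k$ (true for $k$ even by $F_k[f_k-1..f_k]=ba$) is the lexicographically smallest suffix starting with $a$ is false — rather, $a$ alone is the smallest suffix starting with $a$ and $ba$ is smallest starting with $b$; under $b\prec a$ this still makes each of the last two characters its own explicit phrase, so the tail $b,a$ is correct, but this is exactly the kind of reversal that needs care.
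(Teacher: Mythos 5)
Your overall approach is exactly the intended one: the paper offers no proof of this lemma beyond ``can be proven similarly'' to Lemma~\ref{lemLogFactorsOddN}, and your plan (mirror the first-phrase argument via the $F_{k-3}$-versus-$G_{k-3}$ mismatch, mirror the occurrence-counting argument for the second phrase, and run the $b \prec a$ analogue of the $\suf{i}/\sufp{i}$ cascade of Lemma~\ref{lem:ao-critical-suffix}) is the right reconstruction. Two points need fixing, though. First, a small internal contradiction: under $b \prec a$ the three suffixes with prefix $F_{k-2}$ are ordered $F_{k-2} \prec \mathit{suf} \prec F_k$ (since $F_k$ continues with $ab\cdots$ and $\mathit{suf}$ with $ba$, and $b \prec a$), not $F_{k-2} \prec F_k \prec \mathit{suf}$ as you wrote; your conclusion (the previous suffix of $F_k$ is $\mathit{suf}$, giving the phrase $F_k[1..f_{k-1}-2]$) requires the former ordering, so the slip is harmless but should be corrected.

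Second, the boundary of the cascade --- precisely the point you flagged as the anticipated obstacle --- is in fact wrong as you left it. After the phrase $abF_{k-4}$ the remaining suffix is $\sufp{k-5}$ (of length $f_{k-3}$), and each cascade step replaces $\sufp{i}$ by the phrase $\suf{i}=F_{i+1}$ followed by $\sufp{i-2}$; iterating down to $i=3$ yields the phrases $F_{k-4}, F_{k-6}, \ldots, F_4$ and leaves $\sufp{1} = G_3 = ba$, which then splits into the two explicit phrases $b,a$. If the chain stops at $F_6$ as you (and, apparently, the lemma statement) have it, the leftover is $ababa$, the phrase lengths sum to $f_k - 3$, and the count is $k/2$ rather than $\lceil k/2\rceil + 1$ --- one can check this already for $k=8$, where the correct parse is $F_8[1..11],\, abF_4,\, F_4,\, b,\, a$ with $5$ phrases. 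So the chain must terminate at $F_4$, the leftover before the explicit phrases is $ba$ (consistent with $F_k[f_k-1..f_k]=ba$ for even $k$; your claim that it is $ab$ is incorrect), and only then does the count match Theorem~\ref{thmAOLowerBound}(d). The base case of the cascade at $i=3$ goes through by ruling out an occurrence of $bb$ via Property~\ref{itFibAAA} of Lemma~\ref{lem:fib-properties} (the mirror of the $aaa$ argument you cite).
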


Overall, Theorem~\ref{thmAOLowerBound} holds.
\section*{Acknowledgments}
This work was supported by JSPS KAKENHI Grant Numbers 
JP21K17705, JP23H04386 (YN), 
JP21K17701, JP23H04378 (DK),
JP22H03551 (SI), 
and JP20H04141 (HB).
\bibliographystyle{abbrv}
\bibliography{ref}
\end{document}